\chardef\@x10\chardef\@xv60
\def\tcitime{
\def\@time{%
  \@minute\time\@hour\@minute\divide\@hour\@xv
  \ifnum\@hour<\@x 0\fi\the\@hour:%
  \multiply\@hour\@xv\advance\@minute-\@hour
  \ifnum\@minute<\@x 0\fi\the\@minute
  }}%
\def\QCTOpt[#1]#2{%
  \def\QCTOptB{#1}
  \def\QCTOptA{#2}
}
\def\QCTNOpt#1{%
  \def\QCTOptA{#1}
  \let\QCTOptB\empty
}
\def\Qct{%
  \@ifnextchar[{%
    \QCTOpt}{\QCTNOpt}
}
\def\QCBOpt[#1]#2{%
  \def\QCBOptB{#1}
  \def\QCBOptA{#2}
}
\def\QCBNOpt#1{%
  \def\QCBOptA{#1}
  \let\QCBOptB\empty
}
\def\Qcb{%
  \@ifnextchar[{%
    \QCBOpt}{\QCBNOpt}
}
\def\PrepCapArgs{%
  \ifx\QCBOptA\empty
    \ifx\QCTOptA\empty
      {}%
    \else
      \ifx\QCTOptB\empty
        {\QCTOptA}%
      \else
        [\QCTOptB]{\QCTOptA}%
      \fi
    \fi
  \else
    \ifx\QCBOptA\empty
      {}%
    \else
      \ifx\QCBOptB\empty
        {\QCBOptA}%
      \else
        [\QCBOptB]{\QCBOptA}%
      \fi
    \fi
  \fi
}
\def\GRAPHICSPS#1{%
 \ifcase\GRAPHICSTYPE
   \special{ps: #1}%
 \or
   \special{language "PS", include "#1"}%
 \fi
}%
\def\graffile#1#2#3#4{%
    \leavevmode
    \raise -#4 \BOXTHEFRAME{%
        \hbox to #2{\raise #3\hbox to #2{\null #1\hfil}}}%
}%
\def\draftbox#1#2#3#4{%
 \leavevmode\raise -#4 \hbox{%
  \frame{\rlap{\protect\tiny #1}\hbox to #2%
   {\vrule height#3 width\z@ depth\z@\hfil}%
  }%
 }%
}%
\newif\ifwasdraft
\def\GRAPHIC#1#2#3#4#5{%
 \ifnum\draft=\@ne\draftbox{#2}{#3}{#4}{#5}%
  \else\graffile{#1}{#3}{#4}{#5}%
  \fi
 }%
\def\addtoLaTeXparams#1{%
    \edef\LaTeXparams{\LaTeXparams #1}}%
\newif\ifBoxFrame \BoxFramefalse
\newif\ifOverFrame \OverFramefalse
\newif\ifUnderFrame \UnderFramefalse
\def\BOXTHEFRAME#1{%
   \hbox{%
      \ifBoxFrame
         \frame{#1}%
      \else
         {#1}%
      \fi
   }%
}
\def\doFRAMEparams#1{\BoxFramefalse\OverFramefalse\UnderFramefalse\readFRAMEparams#1\end}%
\def\readFRAMEparams#1{%
 \ifx#1\end%
  \let\next=\relax
  \else
  \ifx#1i\dispkind=\z@\fi
  \ifx#1d\dispkind=\@ne\fi
  \ifx#1f\dispkind=\tw@\fi
  \ifx#1t\addtoLaTeXparams{t}\fi
  \ifx#1b\addtoLaTeXparams{b}\fi
  \ifx#1p\addtoLaTeXparams{p}\fi
  \ifx#1h\addtoLaTeXparams{h}\fi
  \ifx#1X\BoxFrametrue\fi
  \ifx#1O\OverFrametrue\fi
  \ifx#1U\UnderFrametrue\fi
  \ifx#1w
    \ifnum\draft=1\wasdrafttrue\else\wasdraftfalse\fi
    \draft=\@ne
  \fi
  \let\next=\readFRAMEparams
  \fi
 \next
 }%
\def\IFRAME#1#2#3#4#5#6{%
      \bgroup
      \let\QCTOptA\empty
      \let\QCTOptB\empty
      \let\QCBOptA\empty
      \let\QCBOptB\empty
      #6%
      \parindent=0pt%
      \leftskip=0pt
      \rightskip=0pt
      \setbox0 = \hbox{\QCBOptA}%
      \@tempdima = #1\relax
      \ifOverFrame
          \typeout{This is not implemented yet}%
          \show\HELP
      \else
         \ifdim\wd0>\@tempdima
            \advance\@tempdima by \@tempdima
            \ifdim\wd0 >\@tempdima
               \textwidth=\@tempdima
               \setbox1 =\vbox{%
                  \noindent\hbox to \@tempdima{\hfill\GRAPHIC{#5}{#4}{#1}{#2}{#3}\hfill}\\%
                  \noindent\hbox to \@tempdima{\parbox[b]{\@tempdima}{\QCBOptA}}%
               }%
               \wd1=\@tempdima
            \else
               \textwidth=\wd0
               \setbox1 =\vbox{%
                 \noindent\hbox to \wd0{\hfill\GRAPHIC{#5}{#4}{#1}{#2}{#3}\hfill}\\%
                 \noindent\hbox{\QCBOptA}%
               }%
               \wd1=\wd0
            \fi
         \else
            \ifdim\wd0>0pt
              \hsize=\@tempdima
              \setbox1 =\vbox{%
                \unskip\GRAPHIC{#5}{#4}{#1}{#2}{0pt}%
                \break
                \unskip\hbox to \@tempdima{\hfill \QCBOptA\hfill}%
              }%
              \wd1=\@tempdima
           \else
              \hsize=\@tempdima
              \setbox1 =\vbox{%
                \unskip\GRAPHIC{#5}{#4}{#1}{#2}{0pt}%
              }%
              \wd1=\@tempdima
           \fi
         \fi
         \@tempdimb=\ht1
         \advance\@tempdimb by \dp1
         \advance\@tempdimb by -#2%
         \advance\@tempdimb by #3%
         \leavevmode
         \raise -\@tempdimb \hbox{\box1}%
      \fi
      \egroup%
}%
\def\DFRAME#1#2#3#4#5{%
 \begin{center}
     \let\QCTOptA\empty
     \let\QCTOptB\empty
     \let\QCBOptA\empty
     \let\QCBOptB\empty
     \ifOverFrame
        #5\QCTOptA\par
     \fi
     \GRAPHIC{#4}{#3}{#1}{#2}{\z@}
     \ifUnderFrame
        \nobreak\par #5\QCBOptA
     \fi
 \end{center}%
 }%
\def\FFRAME#1#2#3#4#5#6#7{%
 \begin{figure}[#1]%
  \let\QCTOptA\empty
  \let\QCTOptB\empty
  \let\QCBOptA\empty
  \let\QCBOptB\empty
  \ifOverFrame
    #4
    \ifx\QCTOptA\empty
    \else
      \ifx\QCTOptB\empty
        \caption{\QCTOptA}%
      \else
        \caption[\QCTOptB]{\QCTOptA}%
      \fi
    \fi
    \ifUnderFrame\else
      \label{#5}%
    \fi
  \else
    \UnderFrametrue%
  \fi
  \begin{center}\GRAPHIC{#7}{#6}{#2}{#3}{\z@}\end{center}%
  \ifUnderFrame
    #4
    \ifx\QCBOptA\empty
      \caption{}%
    \else
      \ifx\QCBOptB\empty
        \caption{\QCBOptA}%
      \else
        \caption[\QCBOptB]{\QCBOptA}%
      \fi
    \fi
    \label{#5}%
  \fi
  \end{figure}%
 }%
\def\makeactives{
  \catcode`\"=\active
  \catcode`\;=\active
  \catcode`\:=\active
  \catcode`\'=\active
  \catcode`\~=\active
}
   \gdef\activesoff{%
      \def"{\string"}
      \def;{\string;}
      \def:{\string:}
      \def'{\string'}
      \def~{\string~}
    }
\def\FRAME#1#2#3#4#5#6#7#8{%
 \bgroup
 \@ifundefined{bbl@deactivate}{}{\activesoff}
 \ifnum\draft=\@ne
   \wasdrafttrue
 \else
   \wasdraftfalse%
 \fi
 \def\LaTeXparams{}%
 \dispkind=\z@
 \def\LaTeXparams{}%
 \doFRAMEparams{#1}%
 \ifnum\dispkind=\z@\IFRAME{#2}{#3}{#4}{#7}{#8}{#5}\else
  \ifnum\dispkind=\@ne\DFRAME{#2}{#3}{#7}{#8}{#5}\else
   \ifnum\dispkind=\tw@
    \edef\@tempa{\noexpand\FFRAME{\LaTeXparams}}%
    \@tempa{#2}{#3}{#5}{#6}{#7}{#8}%
    \fi
   \fi
  \fi
  \ifwasdraft\draft=1\else\draft=0\fi{}%
  \egroup
 }%
\def\TEXUX#1{"texux"}
\long\def\QQQ#1#2{%
     \long\expandafter\def\csname#1\endcsname{#2}}%
\long\def\QQA#1#2{}%
\def\QTR#1#2{{\csname#1\endcsname #2}}
\def\EXPAND#1[#2]#3{}%
\def\NOEXPAND#1[#2]#3{}%
\def\LaTeXparent#1{}%
\def\ChildStyles#1{}%
\def\ChildDefaults#1{}%
\def\QTagDef#1#2#3{}%
\def\QQfnmark#1{\footnotemark}
\def\makeatletter\input gnuindex.sty\makeatother\makeindex{\makeatletter\input gnuindex.sty\makeatother\makeindex}%
\def\initial#1{\bigbreak{\raggedright\large\bf #1}\kern 2\p@\penalty3000}}%
 \def\abstract{%
  \if@twocolumn
   \section*{Abstract (Not appropriate in this style!)}%
   \else \small
   \begin{center}{\bf Abstract\vspace{-.5em}\vspace{\z@}}\end{center}%
   \quotation
   \fi
  }%
   \def\registered{\relax\ifmmode{}\r@gistered
                    \else$\m@th\r@gistered$\fi}%
 \def\r@gistered{^{\ooalign
  {\hfil\raise.07ex\hbox{$\scriptstyle\rm\text{R}$}\hfil\crcr
  \mathhexbox20D}}}}{}%
\newdimen\theight
\def\Column{%
 \vadjust{\setbox\z@=\hbox{\scriptsize\quad\quad tcol}%
  \theight=\ht\z@\advance\theight by \dp\z@\advance\theight by \lineskip
  \kern -\theight \vbox to \theight{%
   \rightline{\rlap{\box\z@}}%
   \vss
   }%
  }%
 }%
\def\qed{%
 \ifhmode\unskip\nobreak\fi\ifmmode\ifinner\else\hskip5\p@\fi\fi
 \hbox{\hskip5\p@\vrule width4\p@ height6\p@ depth1.5\p@\hskip\p@}%
 }%
\def\miss{\hbox{\vrule height2\p@ width 2\p@ depth\z@}}%
\def\tcol#1{{\baselineskip=6\p@ \vcenter{#1}} \Column}  %
\def\newfmtname{LaTeX2e}
\def\chkcompat{%
   \if@compatibility
   \else
     \usepackage{latexsym}
   \fi
}
  \DeclareOldFontCommand{\rm}{\normalfont\rmfamily}{\mathrm}
  \DeclareOldFontCommand{\sf}{\normalfont\sffamily}{\mathsf}
  \DeclareOldFontCommand{\tt}{\normalfont\ttfamily}{\mathtt}
  \DeclareOldFontCommand{\bf}{\normalfont\bfseries}{\mathbf}
  \DeclareOldFontCommand{\it}{\normalfont\itshape}{\mathit}
  \DeclareOldFontCommand{\sl}{\normalfont\slshape}{\@nomath\sl}
  \DeclareOldFontCommand{\sc}{\normalfont\scshape}{\@nomath\sc}
\def\alpha{{\Greekmath 010B}}%
\def\beta{{\Greekmath 010C}}%
\def\gamma{{\Greekmath 010D}}%
\def\delta{{\Greekmath 010E}}%
\def\epsilon{{\Greekmath 010F}}%
\def\zeta{{\Greekmath 0110}}%
\def\eta{{\Greekmath 0111}}%
\def\theta{{\Greekmath 0112}}%
\def\iota{{\Greekmath 0113}}%
\def\kappa{{\Greekmath 0114}}%
\def\lambda{{\Greekmath 0115}}%
\def\mu{{\Greekmath 0116}}%
\def\nu{{\Greekmath 0117}}%
\def\xi{{\Greekmath 0118}}%
\def\pi{{\Greekmath 0119}}%
\def\rho{{\Greekmath 011A}}%
\def\sigma{{\Greekmath 011B}}%
\def\tau{{\Greekmath 011C}}%
\def\upsilon{{\Greekmath 011D}}%
\def\phi{{\Greekmath 011E}}%
\def\chi{{\Greekmath 011F}}%
\def\psi{{\Greekmath 0120}}%
\def\omega{{\Greekmath 0121}}%
\def\varepsilon{{\Greekmath 0122}}%
\def\vartheta{{\Greekmath 0123}}%
\def\varpi{{\Greekmath 0124}}%
\def\varrho{{\Greekmath 0125}}%
\def\varsigma{{\Greekmath 0126}}%
\def\varphi{{\Greekmath 0127}}%
\def\nabla{{\Greekmath 0272}}
\def\FindBoldGroup{%
   {\setbox0=\hbox{$\mathbf{x\global\edef\theboldgroup{\the\mathgroup}}$}}%
}
\def\Greekmath#1#2#3#4{%
    \if@compatibility
        \ifnum\mathgroup=\symbold
           \mathchoice{\mbox{\boldmath$\displaystyle\mathchar"#1#2#3#4$}}%
                      {\mbox{\boldmath$\textstyle\mathchar"#1#2#3#4$}}%
                      {\mbox{\boldmath$\scriptstyle\mathchar"#1#2#3#4$}}%
                      {\mbox{\boldmath$\scriptscriptstyle\mathchar"#1#2#3#4$}}%
        \else
           \mathchar"#1#2#3#4%
        \fi
    \else
        \FindBoldGroup
        \ifnum\mathgroup=\theboldgroup 
           \mathchoice{\mbox{\boldmath$\displaystyle\mathchar"#1#2#3#4$}}%
                      {\mbox{\boldmath$\textstyle\mathchar"#1#2#3#4$}}%
                      {\mbox{\boldmath$\scriptstyle\mathchar"#1#2#3#4$}}%
                      {\mbox{\boldmath$\scriptscriptstyle\mathchar"#1#2#3#4$}}%
        \else
           \mathchar"#1#2#3#4%
        \fi     	
	  \fi}
\newif\ifGreekBold  \GreekBoldfalse
\let\SAVEPBF=\pbf
\def\pbf{\GreekBoldtrue\SAVEPBF}%
  \newcounter{equationnumber}
  \def\mathletters{%
     \addtocounter{equation}{1}
     \edef\@currentlabel{\theequation}%
     \setcounter{equationnumber}{\c@equation}
     \setcounter{equation}{0}%
     \edef\theequation{\@currentlabel\noexpand\alph{equation}}%
  }
    \def\BibTeX{{\rm B\kern-.05em{\sc i\kern-.025em b}\kern-.08em
                 T\kern-.1667em\lower.7ex\hbox{E}\kern-.125emX}}}{}%
\def\AmS{{\protect\usefont{OMS}{cmsy}{m}{n}%
                A\kern-.1667em\lower.5ex\hbox{M}\kern-.125emS}}}{}%
\let\DOTSI\relax
\def\RIfM@{\relax\ifmmode}%
\def\FN@{\futurelet\next}%
\def\iint{\DOTSI\intno@\tw@\FN@\ints@}%
\def\iiint{\DOTSI\intno@\thr@@\FN@\ints@}%
\def\iiiint{\DOTSI\intno@4 \FN@\ints@}%
\def\idotsint{\DOTSI\intno@\z@\FN@\ints@}%
\def\ints@{\findlimits@\ints@@}%
\newif\iflimtoken@
\newif\iflimits@
\def\findlimits@{\limtoken@true\ifx\next\limits\limits@true
 \else\ifx\next\nolimits\limits@false\else
 \limtoken@false\ifx\ilimits@\nolimits\limits@false\else
 \ifinner\limits@false\else\limits@true\fi\fi\fi\fi}%
\def\multint@{\int\ifnum\intno@=\z@\intdots@                          
 \else\intkern@\fi                                                    
 \ifnum\intno@>\tw@\int\intkern@\fi                                   
 \ifnum\intno@>\thr@@\int\intkern@\fi                                 
 \int}
\def\multintlimits@{\intop\ifnum\intno@=\z@\intdots@\else\intkern@\fi
 \ifnum\intno@>\tw@\intop\intkern@\fi
 \ifnum\intno@>\thr@@\intop\intkern@\fi\intop}%
\def\intic@{%
    \mathchoice{\hskip.5em}{\hskip.4em}{\hskip.4em}{\hskip.4em}}%
\def\negintic@{\mathchoice
 {\hskip-.5em}{\hskip-.4em}{\hskip-.4em}{\hskip-.4em}}%
\def\ints@@{\iflimtoken@                                              
 \def\ints@@@{\iflimits@\negintic@
   \mathop{\intic@\multintlimits@}\limits                             
  \else\multint@\nolimits\fi                                          
  \eat@}
 \else                                                                
 \def\ints@@@{\iflimits@\negintic@
  \mathop{\intic@\multintlimits@}\limits\else
  \multint@\nolimits\fi}\fi\ints@@@}%
\def\intkern@{\mathchoice{\!\!\!}{\!\!}{\!\!}{\!\!}}%
\def\plaincdots@{\mathinner{\cdotp\cdotp\cdotp}}%
\def\intdots@{\mathchoice{\plaincdots@}%
 {{\cdotp}\mkern1.5mu{\cdotp}\mkern1.5mu{\cdotp}}%
 {{\cdotp}\mkern1mu{\cdotp}\mkern1mu{\cdotp}}%
 {{\cdotp}\mkern1mu{\cdotp}\mkern1mu{\cdotp}}}%
\def\RIfM@{\relax\protect\ifmmode}
\def\text{\RIfM@\expandafter\text@\else\expandafter\mbox\fi}
\let\nfss@text\text
\def\text@#1{\mathchoice
   {\textdef@\displaystyle\f@size{#1}}%
   {\textdef@\textstyle\tf@size{\firstchoice@false #1}}%
   {\textdef@\textstyle\sf@size{\firstchoice@false #1}}%
   {\textdef@\textstyle \ssf@size{\firstchoice@false #1}}%
   \glb@settings}
\def\textdef@#1#2#3{\hbox{{%
                    \everymath{#1}%
                    \let\f@size#2\selectfont
                    #3}}}
\newif\iffirstchoice@
\def\Let@{\relax\iffalse{\fi\let\\=\cr\iffalse}\fi}%
\def\vspace@{\def\vspace##1{\crcr\noalign{\vskip##1\relax}}}%
\def\multilimits@{\bgroup\vspace@\Let@
 \baselineskip\fontdimen10 \scriptfont\tw@
 \advance\baselineskip\fontdimen12 \scriptfont\tw@
 \lineskip\thr@@\fontdimen8 \scriptfont\thr@@
 \lineskiplimit\lineskip
 \vbox\bgroup\ialign\bgroup\hfil$\m@th\scriptstyle{##}$\hfil\crcr}%
\def\Sb{_\multilimits@}%
\def\endSb{\crcr\egroup\egroup\egroup}%
\def\Sp{^\multilimits@}%
\newdimen\ex@
\def\rightarrowfill@#1{$#1\m@th\mathord-\mkern-6mu\cleaders
 \hbox{$#1\mkern-2mu\mathord-\mkern-2mu$}\hfill
 \mkern-6mu\mathord\rightarrow$}%
\def\leftarrowfill@#1{$#1\m@th\mathord\leftarrow\mkern-6mu\cleaders
 \hbox{$#1\mkern-2mu\mathord-\mkern-2mu$}\hfill\mkern-6mu\mathord-$}%
\def\leftrightarrowfill@#1{$#1\m@th\mathord\leftarrow
\mkern-6mu\cleaders
 \hbox{$#1\mkern-2mu\mathord-\mkern-2mu$}\hfill
 \mkern-6mu\mathord\rightarrow$}%
\def\overrightarrow{\mathpalette\overrightarrow@}%
\def\overrightarrow@#1#2{\vbox{\ialign{##\crcr\rightarrowfill@#1\crcr
 \noalign{\kern-\ex@\nointerlineskip}$\m@th\hfil#1#2\hfil$\crcr}}}%
\def\overleftarrow{\mathpalette\overleftarrow@}%
\def\overleftarrow@#1#2{\vbox{\ialign{##\crcr\leftarrowfill@#1\crcr
 \noalign{\kern-\ex@\nointerlineskip}$\m@th\hfil#1#2\hfil$\crcr}}}%
\def\overleftrightarrow{\mathpalette\overleftrightarrow@}%
\def\overleftrightarrow@#1#2{\vbox{\ialign{##\crcr
   \leftrightarrowfill@#1\crcr
 \noalign{\kern-\ex@\nointerlineskip}$\m@th\hfil#1#2\hfil$\crcr}}}%
\def\underrightarrow{\mathpalette\underrightarrow@}%
\def\underrightarrow@#1#2{\vtop{\ialign{##\crcr$\m@th\hfil#1#2\hfil
  $\crcr\noalign{\nointerlineskip}\rightarrowfill@#1\crcr}}}%
\def\underleftarrow{\mathpalette\underleftarrow@}%
\def\underleftarrow@#1#2{\vtop{\ialign{##\crcr$\m@th\hfil#1#2\hfil
  $\crcr\noalign{\nointerlineskip}\leftarrowfill@#1\crcr}}}%
\def\underleftrightarrow{\mathpalette\underleftrightarrow@}%
\def\underleftrightarrow@#1#2{\vtop{\ialign{##\crcr$\m@th
  \hfil#1#2\hfil$\crcr
 \noalign{\nointerlineskip}\leftrightarrowfill@#1\crcr}}}%
\def\qopnamewl@#1{\mathop{\operator@font#1}\nlimits@}
\let\nlimits@\displaylimits
\def\setboxz@h{\setbox\z@\hbox}
\def\varlim@#1#2{\mathop{\vtop{\ialign{##\crcr
 \hfil$#1\m@th\operator@font lim$\hfil\crcr
 \noalign{\nointerlineskip}#2#1\crcr
 \noalign{\nointerlineskip\kern-\ex@}\crcr}}}}
 \def\rightarrowfill@#1{\m@th\setboxz@h{$#1-$}\ht\z@\z@
  $#1\copy\z@\mkern-6mu\cleaders
  \hbox{$#1\mkern-2mu\box\z@\mkern-2mu$}\hfill
  \mkern-6mu\mathord\rightarrow$}
\def\leftarrowfill@#1{\m@th\setboxz@h{$#1-$}\ht\z@\z@
  $#1\mathord\leftarrow\mkern-6mu\cleaders
  \hbox{$#1\mkern-2mu\copy\z@\mkern-2mu$}\hfill
  \mkern-6mu\box\z@$}
\def\projlim{\qopnamewl@{proj\,lim}}
\def\injlim{\qopnamewl@{inj\,lim}}
\def\varinjlim{\mathpalette\varlim@\rightarrowfill@}
\def\varprojlim{\mathpalette\varlim@\leftarrowfill@}
\def\varliminf{\mathpalette\varliminf@{}}
\def\varliminf@#1{\mathop{\underline{\vrule\@depth.2\ex@\@width\z@
   \hbox{$#1\m@th\operator@font lim$}}}}
\def\varlimsup{\mathpalette\varlimsup@{}}
\def\varlimsup@#1{\mathop{\overline
  {\hbox{$#1\m@th\operator@font lim$}}}}
\def\align{\@verbatim \frenchspacing\@vobeyspaces \@alignverbatim
You are using the "align" environment in a style in which it is not defined.}
\let\csname endalign*\endcsname =\endtrivlist
\def\alignat{\@verbatim \frenchspacing\@vobeyspaces \@alignatverbatim
You are using the "alignat" environment in a style in which it is not defined.}
\let\csname endalignat*\endcsname =\endtrivlist
\def\xalignat{\@verbatim \frenchspacing\@vobeyspaces \@xalignatverbatim
You are using the "xalignat" environment in a style in which it is not defined.}
\let\csname endxalignat*\endcsname =\endtrivlist
\def\gather{\@verbatim \frenchspacing\@vobeyspaces \@gatherverbatim
You are using the "gather" environment in a style in which it is not defined.}
\let\csname endgather*\endcsname =\endtrivlist
\def\multiline{\@verbatim \frenchspacing\@vobeyspaces \@multilineverbatim
You are using the "multiline" environment in a style in which it is not defined.}
\let\csname endmultiline*\endcsname =\endtrivlist
\def\arrax{\@verbatim \frenchspacing\@vobeyspaces \@arraxverbatim
You are using a type of "array" construct that is only allowed in AmS-LaTeX.}
\def\tabulax{\@verbatim \frenchspacing\@vobeyspaces \@tabulaxverbatim
You are using a type of "tabular" construct that is only allowed in AmS-LaTeX.}
\let\csname endarrax*\endcsname =\endtrivlist
\let\csname endtabulax*\endcsname =\endtrivlist
\def\@@eqncr{\let\@tempa\relax
    \ifcase\@eqcnt \def\@tempa{& & &}\or \def\@tempa{& &}%
      \else \def\@tempa{&}\fi
     \@tempa
     \if@eqnsw
        \iftag@
           \@taggnum
        \else
           \@eqnnum\stepcounter{equation}%
        \fi
     \fi
     \global\tag@false
     \global\@eqnswtrue
     \global\@eqcnt\z@\cr}
 \def\endequation{%
     \ifmmode\ifinner 
      \iftag@
        \addtocounter{equation}{-1} 
        $\hfil
           \displaywidth\linewidth\@taggnum\egroup \endtrivlist
        \global\tag@false
        \global\@ignoretrue
      \else
        $\hfil
           \displaywidth\linewidth\@eqnnum\egroup \endtrivlist
        \global\tag@false
        \global\@ignoretrue
      \fi
     \else
      \iftag@
        \addtocounter{equation}{-1} 
        \eqno \hbox{\@taggnum}
        \global\tag@false%
        $$\global\@ignoretrue
      \else
        \eqno \hbox{\@eqnnum}
        $$\global\@ignoretrue
      \fi
     \fi\fi
 }
 \newif\iftag@ \tag@false
 \def\tag{\@ifnextchar*{\@tagstar}{\@tag}}
 \def\@tag#1{%
     \global\tag@true
     \global\def\@taggnum{(#1)}}
 \def\@tagstar*#1{%
     \global\tag@true
     \global\def\@taggnum{#1}%
}
\newtheorem{satz}{Theorem}[section]
\newtheorem{definition}[satz]{Definition}
\newtheorem{lemma}[satz]{Lemma}
\newtheorem{koro}[satz]{Corollary}
\newtheorem{bemerkung}[satz]{Remark}
\newtheorem{notation}[satz]{Notation}
\newenvironment{proof}{\par\noindent {\it Proof:} \hspace{7pt}}{\hfill\hbox{\vrule width 7pt depth 0pt height 7pt}
\par\vspace{10pt}}
\begin{document}

\title{Universal Bounds for Large Determinants from Non--Commutative H\"{o}%
lder Inequalities in Fermionic Constructive Quantum Field Theory}
\author{J.-B. Bru \and W. de Siqueira Pedra}
\date{\today }
\maketitle

\begin{abstract}
Efficiently bounding large determinants is an essential step in non--relati%
\-%
vistic fermionic constructive\ quantum field theory to prove the absolute
convergence of the perturbation expansion of correlation functions in terms
of powers of the strength $u\in \mathbb{R}$ of the interparticle
interaction. We provide, for large determinants of fermionic convariances,
\emph{sharp} bounds which hold for \emph{all} (bounded and unbounded, the
latter not being limited to semibounded) one--particle Hamiltonians. We find
the smallest \emph{universal determinant bound} to be exactly $1$. In
particular, the convergence of perturbation series at $u=0$ of any fermionic
quantum field theory is ensured if the matrix entries, with respect to some
fixed orthonormal basis, of the covariance and the interparticle interaction
decay sufficiently fast. Our proofs use H\"{o}lder inequalities for general
non--commutative $L^{p}$--spaces derived by Araki and Masuda \cite%
{Araki-Masuda}.\bigskip

\noindent \textbf{Keywords:} determinant bounds; H\"{o}lder inequalities for non--commutative $L^{p}$-spaces; interacting fermions, constructive quantum field theory.\bigskip

\noindent \textbf{AMS Subject Classification:} 81T08, 82C22, 46L51, 81V70
\end{abstract}

\tableofcontents%

\section{Setup of the Problem\label{Section main results}}

The convergence of perturbation expansions in non--relativistic fermionic
constructive quantum field theory at weak coupling is ensured if the matrix
entries, with respect to some fixed orthonormal basis, of the covariance and
the interparticle interaction decay sufficiently fast and if certain
determinants arising in the expansion can be bounded efficiently. For \emph{%
any} one--particle Hamiltonian we show here how to get such bounds on
determinants from non--commutative H\"{o}lder inequalities. To our
knowledge, such estimates are unknown for the unbounded case, even for
semibounded (one--particle) Hamiltonians. The unbounded case is important,
for instance, in the context of fermionic theories in the continuum. See
also Remarks \ref{remark blabla} and \ref{remark blabla copy(1)}.

The bounds on determinants (of fermionic covariances) obtained in this way
turn out to be \emph{universal} and \emph{sharp}, in a sense to be made
precise below (cf. (\ref{universal determinant bound}) and Corollary \ref%
{theorem main super2 copy(1)}). A consequence of these estimates is that the
convergence of perturbation expansions in non--relativistic fermionic
quantum field theory is implied by decay properties of interaction and
covariance alone. Similar to \cite{dSPS}, we give bounds which do not impose
cutoffs on the Matsubara frequency, but the results obtained here are
stronger than those of \cite{dSPS} on determinants of fermionic covariances.

The paper is organized as follows: Definitions and notation are fixed in
Sections \ref{sect torus1}--\ref{sect torus2}. The problem of bounding large
determinants and the importance of our results in the context of
constructive quantum field theory are discussed in Section \ref{Sec
Constructive}. Our main results are Theorem \ref{theorem main super} and
Corollaries \ref{theorem main super2}--\ref{theorem main super2 copy(1)} of
Section \ref{Main Results}. Our approach uses H\"{o}lder inequalities for
general non--commutative $L^{p}$--spaces. See, e.g., \cite{Araki-Masuda}.
The main lines of the proofs are explained in Section \ref{Main Results},
while the technical details are postponed to Section \ref{Thecnical proofs}.

\begin{notation}
\label{remark constant}\mbox{
}\newline
A norm on a generic vector space $\mathcal{Y}$ is denoted by $\Vert \cdot
\Vert _{\mathcal{Y}}$ and the identity map of $\mathcal{Y}$ by $\mathbf{1}_{%
\mathcal{Y}}$. The space of all bounded linear operators on $(\mathcal{Y}%
,\Vert \cdot \Vert _{\mathcal{Y}})$ is denoted by $\mathcal{B}(\mathcal{Y})$%
. If $\mathcal{Y}$ is a Hilbert space, then $\left\langle \cdot ,\cdot
\right\rangle _{\mathcal{Y}}$ denotes its scalar product. Units of $C^{\ast
} $--algebras are always denoted by $\mathbf{1}$.
\end{notation}

\subsection{Spaces of Antiperiodic Functions on Discrete Tori\label{sect
torus1}}

We start by defining spaces of antiperiodic functions taking values in a
fixed Hilbert space and next give the definition of the antiperiodic
discrete delta function: \medskip

\noindent \underline{(i):} Fix $\beta \in \mathbb{R}^{+}$, an \emph{even}
integer $n\in 2${$\mathbb{N}$} and let
\begin{equation}
\mathbb{T}_{n}\doteq \left\{ -\beta +kn^{-1}\beta :k\in \left\{ 1,2,\ldots
,2n\right\} \right\} \subset \left( -\beta ,\beta \right]
\label{definition torus}
\end{equation}%
be the discrete torus of length $2\beta $. This means that $-\beta \equiv
\beta $. Pick any Hilbert space $\mathfrak{h}$ and let $\ell _{\mathrm{ap}%
}^{2}(\mathbb{T}_{n};\mathfrak{h})$ be the Hilbert space of functions from $%
\mathbb{T}_{n}$ to $\mathfrak{h}$ which are \emph{antiperiodic}. That is
here, for any $f\in \ell _{\mathrm{ap}}^{2}(\mathbb{T}_{n};\mathfrak{h})$,%
\begin{equation*}
f\left( \alpha +\beta \right) =-f\left( \alpha \right) \ ,\qquad \alpha \in
\mathbb{T}_{n}\ .
\end{equation*}%
The scalar product on $\ell _{\mathrm{ap}}^{2}(\mathbb{T}_{n};\mathfrak{h})$
is then defined to be%
\begin{equation*}
\left\langle f_{1},f_{2}\right\rangle _{\ell _{\mathrm{ap}}^{2}(\mathbb{T}%
_{n};\mathfrak{h})}\doteq n^{-1}\beta \sum\limits_{\alpha \in \mathbb{T}%
_{n}}\left\langle f_{1}\left( \alpha \right) ,f_{2}\left( \alpha \right)
\right\rangle _{\mathfrak{h}}\ ,\qquad f_{1},f_{2}\in \ell _{\mathrm{ap}%
}^{2}(\mathbb{T}_{n};\mathfrak{h})\ .
\end{equation*}

The parameter $\beta $ is interpreted as being the inverse temperature in
(fermionic and non--relativistic) quantum field theory, while $\mathfrak{h}$
refers to the so--called \emph{one--particle} Hilbert space in the same
context. The use of antiperiodic functions on the torus is related to the
KMS property of equilibrium states and the canonical anticommutation
relations (CAR). The discretization of the torus, leading to $\mathbb{T}_{n}$
for $n\in 2${$\mathbb{N}$}, arises from the use of the Trotter--Kato formula
in the construction of correlation functions of such KMS states as
Berezin--Grassmann integrals.\medskip

\noindent \underline{(ii):} We see the Hilbert space $\mathfrak{h}$ as a
subset of $\ell _{\mathrm{ap}}^{2}(\mathbb{T}_{n};\mathfrak{h})$ by using
the discrete delta function $\delta _{\mathrm{ap}}\in \ell _{\mathrm{ap}%
}^{2}(\mathbb{T}_{n};\mathbb{C})$ defined by%
\begin{equation}
\delta _{\mathrm{ap}}\left( \alpha \right) \doteq \left\{
\begin{array}{lll}
0 & \qquad \text{if}\qquad & \alpha \notin \left\{ 0,\beta \right\} \ . \\
\frac{\beta ^{-1}n}{2}\medskip & \qquad \text{if}\qquad & \alpha =0\ . \\
-\frac{\beta ^{-1}n}{2} & \qquad \text{if}\qquad & \alpha =\beta \ .%
\end{array}%
\right.  \label{define delta}
\end{equation}%
Vectors $\varphi $ of $\mathfrak{h}$ are viewed as antiperiodic functions $%
\hat{\varphi}$ of $\ell _{\mathrm{ap}}^{2}(\mathbb{T}_{n};\mathfrak{h})$ via
the definition%
\begin{equation}
\hat{\varphi}\left( \alpha \right) \doteq \delta _{\mathrm{ap}}\left( \alpha
\right) \varphi \ ,\qquad \alpha \in \mathbb{T}_{n}\ .
\label{define deltabis}
\end{equation}%
Note that this identification is isometric up to a constant, since%
\begin{equation}
\left\langle \hat{\varphi}_{1},\hat{\varphi}_{2}\right\rangle _{\ell _{%
\mathrm{ap}}^{2}(\mathbb{T}_{n};\mathfrak{h})}=\frac{\beta ^{-1}n}{2}%
\left\langle \varphi _{1},\varphi _{2}\right\rangle _{\mathfrak{h}}\ ,\qquad
\varphi _{1},\varphi _{2}\in \mathfrak{h}\ .  \label{dfsdklfjsldkfjs}
\end{equation}

The discrete delta function $\delta _{\mathrm{ap}}$ is useful here because
of the property
\begin{equation}
g\ast \delta _{\mathrm{ap}}=g\ ,\qquad g\in \ell _{\mathrm{ap}}^{2}(\mathbb{T%
}_{n};\mathfrak{h})\ ,  \label{convolution unit}
\end{equation}%
with the convolution being defined by
\begin{equation}
g\ast f\left( \alpha \right) \doteq n^{-1}\beta \sum\limits_{\vartheta \in
\mathbb{T}_{n}}g\left( \alpha -\vartheta \right) f\left( \vartheta \right)
,\ \ g\in \ell _{\mathrm{ap}}^{2}(\mathbb{T}_{n};\mathfrak{h}),\ f\in \ell _{%
\mathrm{ap}}^{2}(\mathbb{T}_{n};\mathbb{C}),\ \alpha \in \mathbb{T}_{n}\ .
\label{convolution}
\end{equation}%
Indeed, $\delta _{\mathrm{ap}}$ is used below to construct the inverse of
some discrete difference operator, see Equation (\ref{eq stupidze}).

\subsection{Discrete--time Covariance\label{sect torus2}}

The discrete--time covariance is an operator defined from (i) a
self--adjoint operator acting on the Hilbert space $\mathfrak{h}$ and (ii)
the discrete derivative operator acting on the space of antiperiodic
functions: \medskip

\noindent \underline{(i):} Any (possibly unbounded) operator $A$ acting on $%
\mathfrak{h}$ with domain $\mathrm{dom}(A)$ is viewed as an operator $\hat{A}
$ with domain
\begin{equation*}
\ell _{\mathrm{ap}}^{2}(\mathbb{T}_{n};\mathrm{dom}(A))\subset \ell _{%
\mathrm{ap}}^{2}(\mathbb{T}_{n};\mathfrak{h})
\end{equation*}%
by the definition
\begin{equation}
\lbrack \hat{A}f]\left( \alpha \right) \doteq A\left( f\left( \alpha \right)
\right) \ ,\qquad f\in \ell _{\mathrm{ap}}^{2}(\mathbb{T}_{n};\mathrm{dom}%
(A)),\ \alpha \in \mathbb{T}_{n}\ .  \label{def A elta1def A elta1}
\end{equation}%
If $A=H=H^{\ast }$ then $\hat{H}$ is also self--adjoint on the Hilbert space
$\ell _{\mathrm{ap}}^{2}(\mathbb{T}_{n};\mathfrak{h})$ of antiperiodic
functions.

The (possibly unbounded) self--adjoint operator $H=H^{\ast }$ acting on the
Hilbert space $\mathfrak{h}$ is viewed\ as the so--called \emph{%
one--particle Hamiltonian} in (fermionic and non--relativistic) quantum
field theory. Indeed, its second quantization refers to the free part of the
full interaction of the fermion system. \medskip

\noindent \underline{(ii):} The discrete derivative operator $\partial \in
\mathcal{B}(\ell _{\mathrm{ap}}^{2}(\mathbb{T}_{n};\mathfrak{h}))$ is the
bounded operator defined by%
\begin{equation}
\partial f\left( \alpha \right) \doteq \beta ^{-1}n\left( f\left( \alpha
+n^{-1}\beta \right) -f\left( \alpha \right) \right) \ ,\qquad f\in \ell _{%
\mathrm{ap}}^{2}(\mathbb{T}_{n};\mathfrak{h}),\ \alpha \in \mathbb{T}_{n}\ .
\label{delta}
\end{equation}%
It is a normal invertible operator. Combining (\ref{def A elta1def A elta1})
and (\ref{delta}) we remark that%
\begin{equation*}
\lbrack \hat{A},\partial ]\doteq \hat{A}\partial -\partial \hat{A}=0
\end{equation*}%
for any operator $A$ acting on $\mathfrak{h}$. Because the discrete
derivative operator $\partial $ acts on a space of antiperiodic functions,%
\begin{equation*}
\inf \mathrm{spec}\left( \left\vert \mathrm{Im}\partial \right\vert \right)
>0\ .
\end{equation*}%
Hence, if $H=H^{\ast }$ is any self--adjoint operator acting on $\mathfrak{h}
$, then $(\partial +\hat{H})$ is a (possibly unbounded) normal operator with
bounded inverse. The \emph{discrete--time covariance} is thus defined to be%
\begin{equation}
C_{H}\doteq -2\left( \partial +\hat{H}\right) ^{-1}\in \mathcal{B}(\ell _{%
\mathrm{ap}}^{2}(\mathbb{T}_{n};\mathfrak{h}))\ .
\label{discrete tim covariance}
\end{equation}

This type of operator appears as the covariance of Gaussian Berezin--Grass%
\-%
mann integrals used in the construction of correlation functions for systems
of interacting fermions, see \cite{Sam}. The discrete--time derivative is
related to the corresponding Trotter--Kato product formula used to define
such integrals, as already mentioned in Section \ref{sect torus1}.

\subsection{Determinant Bounds in Constructive Quantum Field Theory\label%
{Sec Constructive}}

Correlation functions of interacting fermions can be constructed by
perturbation series in the regime of weak couplings. In this context, the
self--adjoint (possibly unbounded) operator $H=H^{\ast }$ acting on $%
\mathfrak{h}$ is the generator of the unperturbed dynamics of the fermion
system.

Now, suppose, for simplicity, that $\mathfrak{h}$ is a \emph{separable}
Hilbert space with ONB $\{\varphi _{\mathfrak{i}}\}_{\mathfrak{i}\in \mathbb{%
I}}$, $\mathbb{I}$ being countable, and set%
\begin{equation}
\mathbf{\omega }_{H,\kappa }\doteq \limsup_{n\rightarrow \infty }\sup_{%
\mathfrak{i}\in \mathbb{I}}\left\{ n^{-1}\beta \sum\limits_{\vartheta \in
\mathbb{T}_{n}}\sum_{q\in \mathbb{I}}\left\vert \left\langle \varphi
_{q},\left( C_{H}\kappa (\hat{H})\hat{\varphi}_{\mathfrak{i}}\right) \left(
\vartheta \right) \right\rangle _{\mathfrak{h}}\right\vert \right\}
\label{explicit decay}
\end{equation}%
for any $\beta \in \mathbb{R}^{+}$, $H=H^{\ast }$ and measurable function $%
\kappa $ from $\mathbb{R}$ to $\mathbb{R}_{0}^{+}$. See (\ref{define
deltabis}), (\ref{def A elta1def A elta1}) and (\ref{discrete tim covariance}%
). We have in mind cutoff functions $\kappa :\mathbb{R\rightarrow }\left[ 0,1%
\right] $.

Another essential quantity in non--relativistic fermionic constructive
quantum field theory is the so--called \emph{determinant bound} of $H$ and $%
\kappa $ defined as follows:

\begin{definition}[Determinant bounds]
\label{determinant bounds}\mbox{ }\newline
The parameter $\gamma _{H,\kappa }\in \mathbb{R}^{+}$ is a determinant bound
of $H=H^{\ast }$ and the measurable function $\kappa :\mathbb{R\rightarrow R}%
_{0}^{+}$ if, for any $\beta \in \mathbb{R}^{+}$, $n\in 2\mathbb{N}$, $%
m,N\in \mathbb{N}$, $\mathfrak{M}\in \mathrm{Mat}\left( m,\mathbb{R}\right) $
with $\mathfrak{M}\geq 0$, and all parameters
\begin{equation*}
\{(\alpha _{q},\mathfrak{i}_{q},j_{q})\}_{q=1}^{2N}\subset \mathbb{T}%
_{n}\cap \lbrack 0,\beta )\times \mathbb{I}\times \{1,\ldots ,m\}\ ,
\end{equation*}%
the following bound holds true:%
\begin{equation}
\left\vert \mathrm{det}\left[ \mathfrak{M}_{j_{k},j_{N+l}}\left\langle
\varphi _{\mathfrak{i}_{N+l}},\left( C_{H}\kappa (\hat{H})\hat{\varphi}_{%
\mathfrak{i}_{k}}\right) \left( \alpha _{k}-\alpha _{N+l}\right)
\right\rangle _{\mathfrak{h}}\right] _{k,l=1}^{N}\right\vert \leq \gamma
_{H,\kappa }^{2N}\prod_{q=1}^{2N}\mathfrak{M}_{j_{q},j_{q}}^{1/2}\ .
\label{Condition super ovni}
\end{equation}
\end{definition}

\noindent For $\mathfrak{M}$ we have in mind positive matrices appearing in
the so--called \emph{Brydges--Kennedy tree expansions} which have the
following structure: For each non--oriented graph $\mathfrak{g}$ with $m$
vertices, all functions $\mathbf{\alpha }\in \left[ 0,1\right] ^{\mathfrak{g}%
}$ and any parameter $s\in \left[ 0,1\right] $, we define the subgraph%
\begin{equation*}
\mathfrak{g}\left( \mathbf{\alpha },s\right) \doteq \mathfrak{g}\backslash
\left\{ \ell \in \mathfrak{g}:\mathbf{\alpha }\left( \ell \right) \geq
s\right\} \subset \mathfrak{g}\ .
\end{equation*}%
In fact, only minimally connected graphs (trees) $\mathfrak{g}$ are relevant
for the Brydges--Kennedy tree expansions. Let $\mathcal{R}_{\mathfrak{g}%
\left( \mathbf{\alpha },s\right) }\subset \left\{ 1,\ldots ,m\right\} ^{2}$
denote the smallest equivalence relation for which one has $(k,l)\in
\mathcal{R}_{\mathfrak{g}\left( \mathbf{\alpha },s\right) }$ for all $k,l\in
\left\{ 1,\ldots ,m\right\} $ such that\ the line $\left\{ k,l\right\} $
belongs to the graph $\mathfrak{g}\left( \mathbf{\alpha },s\right) $. Then,
for any $t\in \left[ 0,1\right] $, $\mathfrak{M}=\mathfrak{M}\left(
\mathfrak{g},\mathbf{\alpha },t\right) $ is the symmetric positive $m\times
m $ real matrix defined by
\begin{equation*}
\left[ \mathfrak{M}\left( \mathfrak{g},\mathbf{\alpha },t\right) \right]
_{k,l}\doteq \int_{0}^{t}\mathbf{1}\left[ (k,l)\in \mathcal{R}_{\mathfrak{g}%
\left( \mathbf{\alpha },s\right) }\right] \mathrm{d}s\ ,\qquad k,l\in
\left\{ 1,\ldots ,m\right\} \ .
\end{equation*}%
See for instance \cite{AR,BK,SW}.

Assume that the matrix entries, with respect to some fix orthonormal basis,
of the interparticle interaction decay sufficiently fast, and let $u\in
\mathbb{R}$ be the coupling constant of the considered interacting fermion
system, i.e., $u$ quantifies the strength of the interparticle interaction.
Then, it can be shown that, if the parameter $\mathbf{\omega }_{H,\mathbf{1}%
_{\mathbb{R}}}\gamma _{H,\mathbf{1}_{\mathbb{R}}}^{2}\left\vert u\right\vert
$ is small enough, the perturbation expansion of \emph{all correlation
functions} in terms of powers of $u$ converges absolutely. More precisely,
all correlation functions are analytic functions of the coupling $u$ at $u=0$
with analyticity radius of order $\mathbf{\omega }_{H,\mathbf{1}_{\mathbb{R}%
}}^{-1}\gamma _{H,\mathbf{1}_{\mathbb{R}}}^{-2}$. See for instance \cite%
{AR,SW}.

The use of the cutoff function $\kappa $ is important in \emph{multiscale
analyses} of correlation functions of interacting fermion systems. Indeed,
even for couplings $\left\vert u\right\vert $ much larger than the
convergence radius $\mathbf{\omega }_{H,\mathbf{1}_{\mathbb{R}}}^{-1}\gamma
_{H,\mathbf{1}_{\mathbb{R}}}^{-2}$ correlations functions can still be
constructed via multiscale schemes related to the Wilson renormalization
group: Take a family $\{\kappa _{L}\}_{L\in \mathbb{N}}$ of measurable
functions from $\mathbb{R}$ to $\left[ 0,1\right] $ such that
\begin{equation*}
\sum_{L=1}^{\infty }\kappa _{L}\left( x\right) =1\ ,\qquad x\in \mathbb{R}\ .
\end{equation*}%
(I.e., the family is a partition of unity.) If $\mathbf{\omega }_{H,\kappa
_{L}}\gamma _{H,\kappa _{L}}^{2}\left\vert u\right\vert $ is small enough
for all $L\in \mathbb{N}$, then, up to technical details, the perturbation
series \emph{at scale} $L$ in terms of powers of $u$ converges absolutely.
In general, the smallness of the parameters $\mathbf{\omega }_{H,\kappa
_{L}}\gamma _{H,\kappa _{L}}^{2}\left\vert u\right\vert $ at all scales is a
much weaker condition than the smallness of $\mathbf{\omega }_{H,\mathbf{1}_{%
\mathbb{R}}}\gamma _{H,\mathbf{1}_{\mathbb{R}}}^{2}\left\vert u\right\vert $%
. See for instance \cite{Walter}.

Note that the form of cutoff function we consider does not depend on the $%
\alpha $ variables, that is, the dependency on the Matsubara frequency of
covariance \emph{does not need} to be regularized, in contrast to other
approaches like for instance \cite{GM,BGPS,GMP}.

Indeed, coming back to the estimate of the form (\ref{Condition super ovni}%
), one easily shows from the \emph{Gram bound} for determinants that
\begin{multline*}
\left\vert \mathrm{det}\left[ \mathfrak{M}_{j_{k},j_{N+l}}\left\langle
\varphi _{\mathfrak{i}_{N+l}},\left( C_{H}\kappa (\hat{H})\hat{\varphi}_{%
\mathfrak{i}_{k}}\right) \left( \alpha _{k}-\alpha _{N+l}\right)
\right\rangle _{\mathfrak{h}}\right] _{k,l=1}^{N}\right\vert \\
\leq \left\Vert C_{H}\right\Vert _{\mathcal{B}(\ell _{\mathrm{ap}}^{2}(%
\mathbb{T}_{n};\mathfrak{h}))}^{N}\prod_{q=1}^{2N}\left\Vert \sqrt{\kappa (%
\hat{H})}\hat{\varphi}_{\mathfrak{i}_{q}}\right\Vert _{\ell _{\mathrm{ap}%
}^{2}(\mathbb{T}_{n};\mathfrak{h})}\mathfrak{M}_{j_{q},j_{q}}^{1/2}\ .
\end{multline*}%
This kind of estimate gives no finite determinant bound of $H$ and $\kappa $
because, in general, the norm of $C_{H}$ diverges, as $n\rightarrow \infty $%
. This problem appears already for \emph{bounded} $H\in \mathcal{B}(%
\mathfrak{h})$ when $0\in \mathrm{spec}(H)$, because in this case%
\begin{equation*}
\Vert C_{H}\Vert _{\mathcal{B}(\ell _{\mathrm{ap}}^{2}(\mathbb{T}_{n};%
\mathfrak{h}))}^{1/2}=\mathcal{O}\left( \sqrt{n}\right) \qquad \text{and}%
\qquad \Vert \hat{\varphi}_{\mathfrak{i}_{q}}\Vert _{\ell _{\mathrm{ap}}^{2}(%
\mathbb{T}_{n};\mathfrak{h})}=\mathcal{O}\left( \sqrt{n}\right) \ ,
\end{equation*}%
as $n\rightarrow \infty $. See (\ref{dfsdklfjsldkfjs}). Nevertheless,
similar to the multiscale analysis presented above, one can tackle this
problem by using the Gram bound as previously for some regularized
covariances $C_{H}\hat{\kappa}_{L}(\hat{H},i\partial )$ at every $L\in
\mathbb{N}$. Here, for any $L\in \mathbb{N}$, $\hat{\kappa}_{L}:\mathbb{R}%
^{2}\rightarrow \left[ 0,1\right] $ is some measurable function of two
variables in such a way that
\begin{equation*}
\sum_{L=1}^{\infty }\hat{\kappa}_{L}\left( x,y\right) =\kappa \left(
x\right) \ ,\qquad x,y\in \mathbb{R}\ .
\end{equation*}%
This decomposition can be chosen such that there are constants $\hat{\gamma}%
_{L}\in \mathbb{R}^{+}$, $L\in \mathbb{N}$, which at least \emph{do not
depend} on $n\in 2{\mathbb{N}}$ and meanwhile satisfy%
\begin{equation*}
\left\vert \mathrm{det}\left[ \mathfrak{M}_{j_{k},j_{N+l}}\left\langle
\varphi _{\mathfrak{i}_{N+l}},\left( C_{H}\hat{\kappa}_{L}(\hat{H},i\partial
)\hat{\varphi}_{\mathfrak{i}_{k}}\right) \left( \alpha _{k}-\alpha
_{N+l}\right) \right\rangle _{\mathfrak{h}}\right] _{k,l=1}^{N}\right\vert
\leq \hat{\gamma}_{L}^{2N}\prod_{q=1}^{2N}\mathfrak{M}_{j_{q},j_{q}}^{1/2}\ .
\end{equation*}%
As already mentioned, such a bound follows from the \emph{usual} \emph{Gram
bound} for determinants. This kind of strategy is used for instance in \cite[%
Section 3]{BGPS}, \cite[Section 3.2]{GM}, (more recently) \cite[Section 5.A.]%
{GMP}, and in many others works. \cite{dSPS} shows that this multiscale
analysis for the so--called the Matsubara UV problem is \emph{not}
necessary, by proving a new bound for determinants that generalizes the
original Gram bound, see \cite[Theorem 1.3]{dSPS}. Note finally that using
multiscale analysis to treat the Matsubara UV problem can, moreover, render
useful properties of the \emph{full} covariance less transparent. Hence,
avoiding this kind of procedure brings various technical benefits.

In the same spirit, we derive direct bounds of the type (\ref{Condition
super ovni}) that do not need the UV regularization of the Matsubara
frequency. One technical advantage of the approach we present here is that
the given covariance does not need to be decomposed as in \cite[Eq. (8)]%
{dSPS} in order to obtain determinant bounds. Moreover, our estimates are
\emph{sharp} (or optimal) and hold true for \emph{all} (possibly unbounded,
the latter not being limited to semibounded) one--particle Hamiltonians.
Observe that \cite{dSPS} gives sharp estimates \emph{up to\ a prefactor 2}
for the class of \emph{bounded} operators it applies, see \cite[Theorem 2.4
and discussions below it]{dSPS}.

In this paper we show the (possibly infinite) general bound
\begin{eqnarray}
\mathfrak{x} &\doteq &\sup
\Big{\{}%
\inf \left\{ \gamma _{H,\mathbf{1}_{\mathbb{R}}}\in \mathbb{R}^{+}:\gamma
_{H,\mathbf{1}_{\mathbb{R}}}\text{ determinant bound of }H\text{ and }%
\mathbf{1}_{\mathbb{R}}\right\}  \notag \\
&:&H=H^{\ast }\ \text{acting on a separable Hilbert space }\mathfrak{h}\text{
with ONB }\{\varphi _{\mathfrak{i}}\}_{\mathfrak{i}\in \mathbb{I}}%
\Big{\}}%
\ ,  \notag \\
&&  \label{universal determinant bound}
\end{eqnarray}%
named here the \emph{universal determinant bound}, is equal to $\mathfrak{x}%
=1$. (Even if the class of all separable Hilbert spaces is not a set, the
supremum is well--defined because of the separation axiom.) In particular,
the convergence of perturbation series at $u=0$ of any non--relativistic
fermionic quantum field theory (possibly in the continuum) is ensured by the
smallness of the positive parameter $\mathbf{\omega }_{H,\mathbf{1}_{\mathbb{%
R}}}$, i.e., if the interaction and the covariance are summable, only. To
our knowledge, such estimates are \emph{unknown} for unbounded self--adjoint
operators $H$, even for semibounded ones. Similar statements can also be
derived while taking into account the (cutoff) function $\kappa $, see
Corollary \ref{theorem main super2}. Note that we consider separable Hilbert
spaces in (\ref{universal determinant bound}) to avoid technical issues.

\begin{bemerkung}[Covariance in the continuum]
\label{remark blabla}\mbox{ }\newline
In the continuous case, we would like to stress that, in contrast to the
lattice case, we do not have in mind covariances of the form
\begin{equation*}
\mathbf{c}((x_{1},\alpha _{1}),(x_{2},\alpha _{2}))=\int_{\mathbb{R}^{d}}%
\frac{\mathrm{e}^{(\alpha _{1}-\alpha _{2})E(p)+i\left\langle
p,(x_{1}-x_{2})\right\rangle _{\mathbb{R}^{d}}}}{1+\mathrm{e}^{\beta E(p)}}%
\mathrm{d}^{d}p\ ,
\end{equation*}%
with $(x_{1},\alpha _{1}),(x_{2},\alpha _{2})\in \mathbb{R}^{d}\times
\lbrack 0,\beta )$ and $\alpha _{1}\geq \alpha _{2}$, i.e., Fourier
transforms of the Fermi--Dirac distribution associated with dispersion
relations $E:\mathbb{R}^{d}\rightarrow \mathbb{R}$. Indeed, such functions
generally diverge for $x_{1}=x_{2}$ when $\alpha _{1}-\alpha _{2}$ tends to $%
\beta $ and, hence, cannot have a finite determinant bound. Formally, such
covariances would correspond to use (Dirac) delta functions in (\ref%
{explicit decay}), instead of the orthonormal vectors $\varphi _{\mathfrak{i}%
}$.
\end{bemerkung}

\begin{bemerkung}[Determinant bounds in the continuum]
\label{remark blabla copy(1)}\mbox{ }\newline
For any fixed $\varphi \in L^{2}(\mathbb{R}^{d})$, its Fourier transform
has, of course, to decay at large frequencies. However, we cannot conclude
from this that determinant bounds derived here are related to the
boundedness of spacial frequencies, because the bounds are uniform with
respect to the choice of the unit vectors $\varphi _{\mathfrak{i}}$.
\end{bemerkung}

\section{Main Results\label{Main Results}}

The proofs are based on two consecutive transformations of the determinant
of the left--hand side of Inequality (\ref{Condition super ovni}):

\begin{itemize}
\item[(a)] We first write this determinant as the limit $\nu \rightarrow
\infty $ of correlation functions associated with quasi--free states $\rho
_{S_{\nu }}$. This is reminiscent of \cite[Theorem 3.7]{dSPS}, which
represents determinants as time--ordered correlation functions of Fock
states (a special case of quasi--free state). In contrast to the present
work, \cite[Theorem 3.7]{dSPS} cannot be applied to the full covariance,
but, rather, for each term of the decomposition \cite[Eq. (8)]{dSPS}.

\item[(b)] For any $\nu \in \mathbb{R}^{+}$, these correlation functions are
represented as scalar products involving modular operators in the GNS
representation of $\rho _{S_{\nu }}$. See Equation (\ref{chiot final1}). As
compared to \cite{dSPS}, the representation of the determinant of (\ref%
{Condition super ovni}) obtained from this second transformation has the
advantage of avoiding the decomposition \cite[Eq. (8)]{dSPS}, which can be
non--trivial to verify for general Hamiltonians and lead to artificial
prefactors in the bounds.
\end{itemize}

\noindent These two transformations allow us to get bounds of the form (\ref%
{Condition super ovni}) by using \cite[(A.2)]{Araki-Masuda}, which can be
viewed as H\"{o}lder inequalities for general non--commutative $L^{p}$%
--spaces.

Sections \ref{Sect Tree--Expansions} and \ref{Bernoulli} explain the main
lines of (a). The details of this first transformation are postponed to
Sections \ref{Quasi free sect} and \ref{representation quasi free}. In
Section \ref{Modular}, we give a few key definitions and results on the
Tomita--Takesaki modular theory used for the transformation (b), which is
described in detail in Section \ref{Sectino chiot}. In particular, we
explain the origin of modular objects appearing in our main theorem, that
is, Theorem \ref{theorem main super}. This section is devoted to the readers
who may not be acquainted with the Tomita--Takesaki modular theory. The main
results of this paper, that is, Theorem \ref{theorem main super} and
Corollaries \ref{theorem main super2}--\ref{theorem main super2 copy(1)},
are found in Section \ref{sectino main encore}, while Section \ref{sect
Schatten} illustrates the central arguments of the proofs in the finite
dimensional case via H\"{o}lder inequalities for Schatten norms.

Recall that $\mathfrak{h}$ is an arbitrary \emph{separable} Hilbert space.
In all the section, we fix $\beta \in \mathbb{R}^{+}$, $n\in 2\mathbb{N}$, $%
m\in \mathbb{N}$, $\mathfrak{M}\in \mathrm{Mat}\left( m,\mathbb{R}\right) $
with $\mathfrak{M}\geq 0$, while $H=H^{\ast }$ is any self--adjoint operator
acting on $\mathfrak{h}$. Note again that $H$ must not be bounded. To avoid
triviality of assertions, we assume $\mathfrak{M}\neq 0$.

\subsection{Quasi--Free States Associated with the Determinants of the
Discrete--time Covariance\label{Sect Tree--Expansions}}

The aim of this section is to represent the determinant of (\ref{Condition
super ovni}) in terms of quasi--free states. To this end, we first define
CAR $C^{\ast }$--algebras $\mathrm{CAR}(\mathfrak{h}\otimes \mathbb{M})$
constructed from a fixed $\mathfrak{h}$ and some finite--dimensional Hilbert
spaces $\mathbb{M}$, having in mind the positive matrices $\mathfrak{M}$
appearing in the Brydges--Kennedy tree expansions: \medskip

\noindent \underline{(i):} The (generic) non--vanishing positive matrix $%
\mathfrak{M}$ gives rise to a positive sesquilinear form defined on $\mathbb{%
C}^{m}$ by
\begin{equation}
\left\langle \left( x_{1},\ldots ,x_{m}\right) ,\left( y_{1},\ldots
,y_{m}\right) \right\rangle _{\mathbb{C}^{m}}^{\mathfrak{M}}\doteq
\sum_{p,q=1}^{m}\overline{x_{p}}\ y_{q}\mathfrak{M}_{p,q}\ .
\label{scalar product}
\end{equation}%
In general, this sesquilinear form is degenerated. The vector space $\mathbb{%
M}$ is then defined to be the quotient
\begin{equation*}
\mathbb{M}\doteq \mathbb{C}^{m}/\{x\in \mathbb{C}^{m}:\left\langle
x,x\right\rangle _{\mathbb{C}^{m}}^{\mathfrak{M}}=0\}\ .
\end{equation*}%
Then, as usual, we introduce a scalar product on $\mathbb{M}$ as
\begin{equation*}
\left\langle \left[ x\right] ,\left[ y\right] \right\rangle _{\mathbb{M}%
}\doteq \left\langle x,y\right\rangle _{\mathbb{C}^{m}}^{\mathfrak{M}}\
,\qquad x,y\in \mathbb{C}^{m}\ ,
\end{equation*}%
and $\mathbb{M}$ denotes the Hilbert space $(\mathbb{M},\langle \cdot ,\cdot
\rangle _{\mathbb{M}})$. Using the notation $\mathfrak{e}_{k}\doteq \left[
e_{k}\right] \in \mathbb{M}$, where $\left\{ e_{k}\right\} _{k=1}^{m}$ is
the canonical basis of $\mathbb{C}^{m}$, note that
\begin{equation}
\mathfrak{M}_{k,l}=\left\langle \mathfrak{e}_{k},\mathfrak{e}%
_{l}\right\rangle _{\mathbb{M}}\ ,\qquad k,l\in \left\{ 1,\ldots ,m\right\}
\ .  \label{ej fract}
\end{equation}%
\medskip

\noindent \underline{(ii):} The (extended) CAR $C^{\ast }$--algebra
associated with $\mathfrak{M}$ is the unital $C^{\ast }$--algebra $\mathrm{%
CAR}(\mathfrak{h}\otimes \mathbb{M})$ generated by the unit $\mathbf{1}$ and
the family $\{a(\Psi )\}_{\Psi \in \mathfrak{h}\otimes \mathbb{M}}$ of
elements satisfying the canonical anticommutation relations (CAR), see (\ref%
{CAR AA})--(\ref{CAR AA*}) with $\mathcal{H}=\mathfrak{h}\otimes \mathbb{M}$%
. Notice that such a family always exists and two families satisfying these
CAR are related to each other by a unique $\ast $--automorphism on the $%
C^{\ast }$--algebra $\mathrm{CAR}(\mathfrak{h}\otimes \mathbb{M})$. See,
e.g., \cite[Theorem 5.2.5]{BratteliRobinson}.

The element $a(\Psi )\in \mathrm{CAR}(\mathfrak{h}\otimes \mathbb{M})$ is,
in fermionic quantum field theory, the annihilation operator associated with
$\Psi \in \mathfrak{h}\otimes \mathbb{M}$ whereas its adjoint
\begin{equation*}
a^{+}(\Psi )\doteq a(\Psi )^{\ast }\ ,\qquad \Psi \in \mathfrak{h}\otimes
\mathbb{M}\ ,
\end{equation*}%
is the corresponding creation operator.

Considering that $\mathfrak{h}$ represents the one--particle Hilbert space, $%
\mathrm{CAR}(\mathfrak{h})$ is the $C^{\ast }$--algebra that allows to
represent the corresponding many--fermion system within the algebraic
formulation of quantum mechanics. The extension of this $C^{\ast }$--algebra
to $\mathrm{CAR}(\mathfrak{h}\otimes \mathbb{M})$ is pivotal to control the
determinant of (\ref{Condition super ovni}). Such determinants are naturally
expressed through limits of \emph{quasi--free states} on the $C^{\ast }$%
--algebra $\mathrm{CAR}(\mathfrak{h}\otimes \mathbb{M})$: Quasi--free states
are positive linear functionals $\rho \in \mathrm{CAR}(\mathfrak{h}\otimes
\mathbb{M})^{\ast }$ such that $\rho (\mathbf{1})=1$ and, for all $%
N_{1},N_{2}\in \mathbb{N}$ and $\Psi _{1},\ldots ,\Psi _{N_{1}+N_{2}}\in
\mathfrak{h}\otimes \mathbb{M}$,%
\begin{equation}
\rho \left( a^{+}(\Psi _{1})\cdots a^{+}(\Psi _{N_{1}})a(\Psi
_{N_{1}+N_{2}})\cdots a(\Psi _{N_{1}+1})\right) =0  \label{ass O0-00}
\end{equation}%
if $N_{1}\neq N_{2}$, while in the case $N_{1}=N_{2}\equiv N$,
\begin{equation}
\rho \left( a^{+}(\Psi _{1})\cdots a^{+}(\Psi _{N})a(\Psi _{2N})\cdots
a(\Psi _{N+1})\right) =\mathrm{det}\left[ \rho \left( a^{+}(\Psi _{k})a(\Psi
_{N+l})\right) \right] _{k,l=1}^{N}\,.  \label{ass O0-00bis}
\end{equation}

\begin{bemerkung}[Other definitions of quasi--free states in the literature]

\mbox{ }\newline
Some authors relax Condition (\ref{ass O0-00}) in the definition of
quasi--free states. Within this more general framework (known as the
self--dual formalism) quasi--free states fulfilling (\ref{ass O0-00}) are
then referred as gauge invariant quasi--free states of the corresponding CAR
$C^{\ast }$--algebras. For instance, see \cite[Definition 3.1]{Araki}. Note
indeed that \cite[Definition 3.1, Condition (3.1)]{Araki} only imposes on
the quasi--free state to be even, but not necessarily gauge invariant.
\end{bemerkung}

The operator $S^{(\rho )}\in \mathcal{B}(\mathfrak{h}\otimes \mathbb{M})$
defined from%
\begin{equation}
\left\langle \Psi _{2},S^{(\rho )}\Psi _{1}\right\rangle _{\mathfrak{h}%
\otimes \mathbb{M}}=\rho \left( a^{+}(\Psi _{1})a(\Psi _{2})\right) \
,\qquad \Psi _{1},\Psi _{2}\in \mathfrak{h}\otimes \mathbb{M}\ ,
\label{ass O0-00bisbis}
\end{equation}%
is named the \emph{symbol} (or one--particle density matrix) of the
quasi--free state $\rho $. By the positivity and normalization of states, it
follows that symbols are positive (self--adjoint) operators with spectrum
lying on the unit interval $[0,1]$. Conversely, any such positive operator $%
S\leq \mathbf{1}_{\mathfrak{h}\otimes \mathbb{M}}$ on $\mathfrak{h}\otimes
\mathbb{M}$ uniquely defines a quasi--free state $\rho _{S}$ on $\mathrm{CAR}%
(\mathfrak{h}\otimes \mathbb{M})$ such that
\begin{equation}
\rho _{S}\left( a^{+}(\Psi _{1})a(\Psi _{2})\right) =\left\langle \Psi
_{2},S\Psi _{1}\right\rangle _{\mathfrak{h}\otimes \mathbb{M}}\ ,\qquad \Psi
_{1},\Psi _{2}\in \mathfrak{h}\otimes \mathbb{M}\ .
\label{quasi free symbol}
\end{equation}

The symbols allowing us to represent the determinant of (\ref{Condition
super ovni}) in terms of quasi--free states are defined as follows: For all $%
\nu \in \mathbb{R}^{+}$, define the function $\ $%
\begin{equation}
\digamma _{\nu }\left( \lambda \right) \doteq \left\{
\begin{array}{lll}
-\beta ^{-1}n\ln \left\vert 1-n^{-1}\beta \lambda \right\vert & \qquad \text{%
if}\qquad & \lambda \in \mathbb{R}\backslash \{\beta ^{-1}n\}\ , \\
\nu & \qquad \text{if}\qquad & \lambda =\beta ^{-1}n\ ,%
\end{array}%
\right.  \label{function F}
\end{equation}%
and let
\begin{equation}
H_{\nu }\doteq \digamma _{\nu }\left( H\right) \ ,\qquad \nu \in \mathbb{R}%
^{+}\ .  \label{Hn0Hn0}
\end{equation}%
The relevant quasi--free states on the $C^{\ast }$--algebra $\mathrm{CAR}(%
\mathfrak{h}\otimes \mathbb{M})$ are those with symbol
\begin{equation}
S_{\nu }\doteq \frac{1}{1+\mathrm{e}^{\beta H_{\nu }\otimes \mathbf{1}_{%
\mathbb{M}}}}=\frac{1}{1+\mathrm{e}^{\beta H_{\nu }}}\otimes \mathbf{1}_{%
\mathbb{M}}\in \mathcal{B}\left( \mathfrak{h}\otimes \mathbb{M}\right) \
,\quad \nu \in \mathbb{R}^{+},  \label{symbol}
\end{equation}%
observing that $0<S_{\nu }\leq \mathbf{1}_{\mathfrak{h}\otimes \mathbb{M}}$.
The precise relationship between the quasi--free states $\rho _{S_{\nu }}$, $%
\nu \in \mathbb{R}^{+}$, and the covariance appearing in the determinant of (%
\ref{Condition super ovni}) is described below.

\subsection{Discrete--time Covariance and Bernoulli--Euler Approximations
\label{Bernoulli}}

At fixed $\lambda \in \mathbb{R}$ and large $n\gg 1$, note from (\ref%
{function F}) that
\begin{equation}
\mathrm{e}^{\mp \beta \digamma _{\nu }\left( \lambda \right) }=\left(
1-n^{-1}\beta \lambda \right) ^{\pm n}=\mathrm{e}^{\mp \beta \lambda
}+o\left( 1\right)  \label{equation idiote}
\end{equation}%
is the well--known Bernoulli--Euler approximation of the exponential
function $\mathrm{e}^{\mp \beta \lambda }$. In particular, $H_{\nu }$, as
defined by (\ref{Hn0Hn0}), can be viewed as an approximation of the
self--adjoint operator $H$. The relevance of the function $\digamma _{\nu }$
results from the following observations: \medskip

\noindent \underline{(i):} By the spectral theorem, there is a ($\sigma $%
--finite) measure space $(\Omega _{H},\mathfrak{A}_{H},\mu _{H})$, a unitary
map $U_{H}$ from $\mathfrak{h}$ to $L^{2}(\Omega _{H};\mathbb{C})$ and a $%
\mathfrak{A}_{H}$--measurable function $\lambda _{H}:\Omega _{H}\rightarrow
\mathbb{R}$ such that%
\begin{equation}
U_{H}HU_{H}^{\ast }=\mathrm{m}_{\lambda _{H}}\ ,  \label{spectral theorem}
\end{equation}%
where $\mathrm{m}_{\lambda _{H}}$ is the multiplication operator on $%
L^{2}(\Omega _{H};\mathbb{C})$ with the function $\lambda _{H}$. Using the
unitary $U_{H}$ we can identify $\ell _{\mathrm{ap}}^{2}(\mathbb{T}_{n};%
\mathfrak{h})$ with $\ell _{\mathrm{ap}}^{2}(\mathbb{T}_{n};L^{2}(\Omega
_{H};\mathbb{C}))$, i.e.,
\begin{equation*}
\hat{U}_{H}\ell _{\mathrm{ap}}^{2}(\mathbb{T}_{n};\mathfrak{h})=\ell _{%
\mathrm{ap}}^{2}(\mathbb{T}_{n};L^{2}(\Omega _{H};\mathbb{C}))\ .
\end{equation*}%
Recall that $\hat{A}$ is the extension to $\ell _{\mathrm{ap}}^{2}(\mathbb{T}%
_{n};\mathfrak{h})$ of any operator $A$ acting on $\mathfrak{h}$, as defined
by (\ref{def A elta1def A elta1}). The latter, in turn, is canonically
identified with%
\begin{equation}
\int_{\Omega _{H}}^{\oplus }\ell _{\mathrm{ap}}^{2}(\mathbb{T}_{n};\mathbb{C}%
)\mu _{H}\left( \mathrm{d}\mathfrak{a}\right) \equiv L^{2}(\Omega _{H};\ell
_{\mathrm{ap}}^{2}(\mathbb{T}_{n};\mathbb{C}))\ .  \label{space cool}
\end{equation}%
In other words, by using $U_{H}$, we identify $\ell _{\mathrm{ap}}^{2}(%
\mathbb{T}_{n};\mathfrak{h})$ with (\ref{space cool}). Note that the above
direct integral is well--defined because $\ell _{\mathrm{ap}}^{2}(\mathbb{T}%
_{n};\mathbb{C})$ is finite dimensional and $(\Omega _{H},\mathfrak{A}%
_{H},\mu _{H})$ is a $\sigma $--finite measure space, since $\mathfrak{h}$
is assumed to be separable. \medskip

\noindent \underline{(ii):} With this convention,%
\begin{equation*}
\hat{U}_{H}\hat{H}\hat{U}_{H}^{\ast }=\int_{\Omega _{H}}^{\oplus }\lambda
_{H}\left( \mathfrak{a}\right) \mathbf{1}_{\ell _{\mathrm{ap}}^{2}(\mathbb{T}%
_{n};\mathbb{C})}\ \mu _{H}\left( \mathrm{d}\mathfrak{a}\right) \ .
\end{equation*}%
The discrete derivative $\partial $ defined by (\ref{delta}) is meanwhile
written in the new Hilbert space as%
\begin{equation*}
\hat{U}_{H}\ \partial \ \hat{U}_{H}^{\ast }=\int_{\Omega _{H}}^{\oplus }%
\mathfrak{d}\ \mu _{H}\left( \mathrm{d}\mathfrak{a}\right) \ ,
\end{equation*}%
where $\mathfrak{d}\in \mathcal{B}(\ell _{\mathrm{ap}}^{2}(\mathbb{T}_{n};%
\mathbb{C}))$ is defined by
\begin{equation*}
\mathfrak{d}f\left( \alpha \right) \doteq \beta ^{-1}n\left( f\left( \alpha
+n^{-1}\beta \right) -f\left( \alpha \right) \right) ,\qquad f\in \ell _{%
\mathrm{ap}}^{2}(\mathbb{T}_{n};\mathbb{C}),\ \alpha \in \mathbb{T}_{n}\ .
\end{equation*}%
In particular, the discrete--time covariance $C_{H}$, defined by (\ref%
{discrete tim covariance}), can be represented as%
\begin{equation}
\hat{U}_{H}C_{H}\hat{U}_{H}^{\ast }=-2\int_{\Omega _{H}}^{\oplus }R\left(
\mathfrak{d},\lambda _{H}\left( \mathfrak{a}\right) \right) \mu _{H}\left(
\mathrm{d}\mathfrak{a}\right) \ ,  \label{Integral direct C}
\end{equation}%
where $R\left( \mathfrak{d},\lambda \right) \in \mathcal{B}(\ell _{\mathrm{ap%
}}^{2}(\mathbb{T}_{n};\mathbb{C}))$ is the resolvent
\begin{equation*}
R\left( \mathfrak{d},\lambda \right) \doteq \left( \mathfrak{d}+\lambda \
\mathbf{1}_{\ell _{\mathrm{ap}}^{2}(\mathbb{T}_{n};\mathbb{C})}\right)
^{-1}\ ,\qquad \lambda \in {\mathbb{R}}\ .
\end{equation*}%
\medskip

\noindent \underline{(iii):} It is convenient to represent the last
resolvent as a convolution (\ref{convolution}) with an antiperiodic
function. To this end, we solve the following equation
\begin{equation}
-2R\left( \mathfrak{d},\lambda \right) f=g_{\lambda }\ast f\ ,\qquad f\in
\ell _{\mathrm{ap}}^{2}(\mathbb{T}_{n};\mathbb{C})\ ,
\label{jean sans bras0}
\end{equation}%
in $g_{\lambda }\in \ell _{\mathrm{ap}}^{2}(\mathbb{T}_{n};\mathbb{C})$ for
any fixed $\lambda \in {\mathbb{R}}$. (Compare with (\ref{Integral direct C}%
).)\medskip

\noindent \underline{(iii.a):} For $\lambda \neq \beta ^{-1}n$ and $\nu \in
\mathbb{R}^{+}$, the antiperiodic function $g_{\lambda }\in \ell _{\mathrm{ap%
}}^{2}(\mathbb{T}_{n};\mathbb{C})$ defined by%
\begin{equation}
g_{\lambda }\left( \alpha \right) \doteq \frac{\left( 1-n^{-1}\beta \lambda
\right) ^{\beta ^{-1}n(\alpha -n^{-1}\beta )}}{1+\mathrm{e}^{\beta \digamma
_{\nu }\left( \lambda \right) }}\ ,\qquad \alpha \in \mathbb{T}_{n}\cap
\left( -\beta ,0\right] \ ,  \label{g explic1}
\end{equation}%
is the unique solution on $\ell _{\mathrm{ap}}^{2}(\mathbb{T}_{n};\mathbb{C}%
) $ of the difference equation%
\begin{equation}
\mathfrak{d}f\left( \alpha \right) +\lambda f\left( \alpha \right) =-2\delta
_{\mathrm{ap}}\left( \alpha \right) \ ,\qquad \alpha \in \mathbb{T}_{n}\ ,
\label{eq stupidze}
\end{equation}%
with the discrete delta function $\delta _{\mathrm{ap}}\in \ell _{\mathrm{ap}%
}^{2}(\mathbb{T}_{n};\mathbb{C})$ being defined by (\ref{define delta}). In
particular, $g_{\lambda }\in \ell _{\mathrm{ap}}^{2}(\mathbb{T}_{n};\mathbb{C%
})$ solves (\ref{jean sans bras0}) for $\lambda \neq \beta ^{-1}n$.

Note that we take $n\in 2\mathbb{N}$ to ensure that
\begin{equation*}
\left( 1-n^{-1}\beta \lambda \right) ^{n}=\left\vert 1-n^{-1}\beta \lambda
\right\vert ^{n}=\mathrm{e}^{-\beta \digamma _{\nu }\left( \lambda \right) }
\end{equation*}%
and observe meanwhile that $\alpha \beta ^{-1}n\in \mathbb{Z}$ if $\alpha
\in \mathbb{T}_{n}$. Therefore, for any $\lambda \neq \beta ^{-1}n$ and $\nu
\in \mathbb{R}^{+}$,
\begin{equation}
g_{\lambda }\left( \alpha \right) =\left( \mathrm{sgn}\left( 1-n^{-1}\beta
\lambda \right) \right) ^{\beta ^{-1}n(n^{-1}\beta -\alpha )}\frac{\mathrm{e}%
^{-(\alpha -n^{-1}\beta )\digamma _{\nu }\left( \lambda \right) }}{1+\mathrm{%
e}^{\beta \digamma _{\nu }\left( \lambda \right) }}\ ,\quad \alpha \in
\mathbb{T}_{n}\cap \left( -\beta ,0\right] \ .  \label{g explic1bis}
\end{equation}%
Recall that $\mathrm{sgn}$ is the sign function defined here as follows: $%
\mathrm{sgn}\left( x\right) \doteq 1$ for $x\in \mathbb{R}_{0}^{+}$ and $%
\mathrm{sgn}\left( x\right) \doteq -1$ otherwise.\medskip

\noindent \underline{(iii.b):} For $\lambda =\beta ^{-1}n$, the (unique)
solution on $\ell _{\mathrm{ap}}^{2}(\mathbb{T}_{n};\mathbb{C})$ of the
difference equation (\ref{eq stupidze}) is equal to
\begin{equation*}
g_{\beta ^{-1}n}\left( \alpha \right) \doteq \left\{
\begin{array}{lll}
0 & \qquad \text{if}\qquad & \alpha \in \mathbb{T}_{n}\backslash \left\{
n^{-1}\beta ,-\beta +n^{-1}\beta \right\} \ . \\
-1 & \qquad \text{if}\qquad & \alpha =n^{-1}\beta \ . \\
1 & \qquad \text{if}\qquad & \alpha =-\beta +n^{-1}\beta \ .%
\end{array}%
\right. \ .
\end{equation*}%
We can write this function as the following limit:
\begin{equation}
g_{\beta ^{-1}n}\left( \alpha \right) =\left\{
\begin{array}{lll}
0\medskip & \text{if} & \alpha \in \mathbb{T}_{n}\backslash \left\{
n^{-1}\beta ,n^{-1}\beta -\beta \right\} \ . \\
-\lim_{\nu \rightarrow \infty }\frac{\mathrm{e}^{(\beta -(\alpha
-n^{-1}\beta ))\digamma _{\nu }(\beta ^{-1}n)}}{1+\mathrm{e}^{\beta \digamma
_{\nu }(\beta ^{-1}n)}}\medskip & \text{if} & \alpha =n^{-1}\beta \ . \\
\lim_{\nu \rightarrow \infty }\frac{\mathrm{e}^{-(\alpha -n^{-1}\beta
)\digamma _{\nu }(\beta ^{-1}n)}}{1+\mathrm{e}^{\beta \digamma _{\nu }(\beta
^{-1}n)}} & \text{if} & \alpha =n^{-1}\beta -\beta \ .%
\end{array}%
\right.  \label{g explic2}
\end{equation}%
In particular, $g_{\lambda }\in \ell _{\mathrm{ap}}^{2}(\mathbb{T}_{n};%
\mathbb{C})$ solves (\ref{jean sans bras0}) for $\lambda =\beta ^{-1}n$.
Compare also (\ref{g explic2}) with (\ref{g explic1bis}). \medskip

\noindent \underline{(iv):} The relationship between the function $%
g_{\lambda }\in \ell _{\mathrm{ap}}^{2}(\mathbb{T}_{n};\mathbb{C})$ and the
symbols $S_{\nu }$ (\ref{symbol}) defining the quasi--free states $\rho
_{S_{\nu }}$, $\nu \in \mathbb{R}^{+}$, can be heuristically understood by
considering the limit case $n=\infty $:\medskip

\noindent \underline{(iv.a):} The function $g_{\lambda }\in \ell _{\mathrm{ap%
}}^{2}(\mathbb{T}_{n};\mathbb{C})$ plays the role, in the discrete case ($%
n<\infty $), of the antiperiodic function $g_{\lambda }^{(\infty )}:\mathbb{%
R\rightarrow R}$ defined by%
\begin{equation}
g_{\lambda }^{(\infty )}\left( \alpha \right) \doteq \frac{\mathrm{e}%
^{-\alpha \lambda }}{1+\mathrm{e}^{\beta \lambda }},\qquad \alpha \in \left(
-\beta ,0\right] \ ,  \label{g exp}
\end{equation}%
which solves the differential equation%
\begin{equation*}
y^{\prime }+\lambda y=\sum_{l=-\infty }^{\infty }\left( -1\right)
^{l+1}\delta _{\beta l}\ .
\end{equation*}%
Here, $\delta _{x}$ is the delta distribution at $x\in \mathbb{R}$. Compare
the last equation with (\ref{eq stupidze}). Up to the observation (\ref%
{equation idiote}) and the special case $\lambda =\beta ^{-1}n$, the
qualitative difference between (\ref{g exp}) and (\ref{g explic1bis})
concerns the replacement of $\alpha $ in (\ref{g exp}) by $\alpha
-n^{-1}\beta $ in (\ref{g explic1bis}) and the prefactor
\begin{equation*}
\left( \mathrm{sgn}\left( 1-n^{-1}\beta \lambda \right) \right) ^{\beta
^{-1}n\left( \alpha -n^{-1}\beta \right) }\ .
\end{equation*}%
\medskip

\noindent \underline{(iv.b):} Using the symbol
\begin{equation*}
S_{H}\doteq \frac{1}{1+\mathrm{e}^{\beta H\otimes \mathbf{1}_{\mathbb{M}}}}=%
\frac{1}{1+\mathrm{e}^{\beta H}}\otimes \mathbf{1}_{\mathbb{M}}\in \mathcal{B%
}\left( \mathfrak{h}\otimes \mathbb{M}\right) \ ,
\end{equation*}%
for any $\alpha _{1},\alpha _{2}\in \mathbb{T}_{\infty }\doteq \left( -\beta
,\beta \right] $ (seen as a torus) with $\alpha _{1}\leq \alpha _{2}$, all
entire analytic vectors $\varphi _{1},\varphi _{2}$ of $H$ and every $%
j_{1},j_{2}\in \{1,\ldots ,m\}$,
\begin{equation}
\rho _{S_{H}}%
\Big(%
a^{+}\left( (\mathrm{e}^{-\alpha _{1}H}\varphi _{1})\otimes \mathfrak{e}%
_{j_{1}}\right) a\left( (\mathrm{e}^{\alpha _{2}H}\varphi _{2})\otimes
\mathfrak{e}_{j_{2}}\right)
\Big)%
=\mathfrak{M}_{j_{1},j_{2}}\left\langle \varphi _{2},\frac{\mathrm{e}%
^{(\alpha _{2}-\alpha _{1})H}}{1+\mathrm{e}^{\beta H}}\varphi
_{1}\right\rangle _{\mathfrak{h}}  \notag
\end{equation}%
with $\mathfrak{e}_{j}\doteq \left[ e_{j}\right] \in \mathbb{M}$ being the
vectors of $\mathbb{M}$ satisfying (\ref{ej fract}). The symbol $S_{H}$ is
directly related to the the antiperiodic function $g_{\lambda }^{(\infty )}$
since
\begin{equation*}
U_{H}S_{H}U_{H}^{\ast }=\int_{\Omega _{H}}^{\oplus }g_{\lambda _{H}\left(
\mathfrak{a}\right) }^{(\infty )}\left( 0\right) \mathbf{1}_{\mathbb{C}}\
\mu _{H}\left( \mathrm{d}\mathfrak{a}\right) \ .
\end{equation*}%
Similar identities hold true in the discrete case for which $S_{H}$ and $%
g_{\lambda }^{(\infty )}$ are replaced with $S_{\nu }$ (\ref{symbol}) and $%
g_{\lambda }$ (\ref{g explic1bis})--(\ref{g explic2}). In particular, the
determinant of (\ref{Condition super ovni}) can be represented in terms of a
limit $\nu \rightarrow \infty $ (cf. (\ref{g explic2})) of quasi-free states
$\rho _{S_{\nu }}$ with symbol $S_{\nu }$ (\ref{symbol}). See Lemma \ref%
{lemma exp copy(1)} and Corollary \ref{Corollary chiot1}.

\subsection{Modular Objects Associated with Discrete--time Covariance\label%
{Modular}}

Our estimates are based on non--commutative H\"{o}lder inequalities \cite[%
(A.2)]{Araki-Masuda} (see also (\ref{property araki chiot 1})), which
requires the celebrated Tomita--Takesaki (modular) theory. Modular objects
associated with discrete--time covariance are constructed, for any fixed $%
\nu \in \mathbb{R}^{+}$, from the quasi--free state $\rho _{S_{\nu }}$ with
symbol $S_{\nu }$ (\ref{symbol}) as follows:\medskip

\noindent \underline{(i):} Let $(\mathfrak{H}_{\nu },\varkappa _{\nu },\eta
_{\nu })$ be a cyclic\ representation of $\rho _{S_{\nu }}$. The weak
closure of the $C^{\ast }$--algebra $\mathrm{CAR}(\mathfrak{h}\otimes
\mathbb{M})$ is the von Neumann algebra
\begin{equation}
\mathcal{X}_{\nu }\doteq \varkappa _{\nu }\left( \mathrm{CAR}\left(
\mathfrak{h}\otimes \mathbb{M}\right) \right) ^{\prime \prime }\subset
\mathcal{B}\left( \mathfrak{H}_{\nu }\right) \ .  \label{sdfjkl}
\end{equation}
As is usual, $\mathfrak{M}^{\prime \prime }$ denotes the bicommutant of any
subset $\mathfrak{M}$ of the space of bounded operators acting on a Hilbert
space. \medskip

\noindent \underline{(ii):} The vector $\eta _{\nu }$ is, by assumption, a
cyclic vector for $\mathcal{X}_{\nu }$, i.e., $\mathcal{H}$ is the closure
of (the linear span of) the set
\begin{equation*}
\mathcal{X}_{\nu }\eta _{\nu }\doteq \left\{ A\eta _{\nu }:A\in \mathcal{X}%
_{\nu }\right\} \ .
\end{equation*}%
Because the vector $\eta _{\nu }$ represents a KMS state (see Section \ref%
{Sectino chiot}), it is also separating for $\mathcal{X}_{\nu }$, i.e., for
all $A\in \mathcal{X}_{\nu }$, $A\eta _{\nu }=0$ iff $A=0$. \medskip

\noindent \underline{(iii):} We define two anti--linear operators $\mathcal{S%
}_{0}$ and $\mathcal{F}_{0}$ respectively by
\begin{equation*}
\mathcal{S}_{0}A\eta _{\nu }=A^{\ast }\eta _{\nu }\qquad \text{and}\qquad
\mathcal{F}_{0}B\eta _{\nu }=B^{\ast }\eta _{\nu }
\end{equation*}%
for any $A\in \mathcal{X}_{\nu }$ and $B\in \mathcal{X}_{\nu }^{\prime }$.
Since a cyclic and separating vector for $\mathcal{X}_{\nu }$ is also cyclic
and separating for its commutant $\mathcal{X}_{\nu }^{\prime }$, both
operators are well--defined on the dense domains $\mathrm{Dom}(\mathcal{S}%
_{0})=\mathcal{X}_{\nu }\eta _{\nu }$ and $\mathrm{Dom}(\mathcal{F}_{0})=%
\mathcal{X}_{\nu }^{\prime }\eta _{\nu }$. By \cite[Proposition 2.5.9]%
{BratteliRobinsonI}, $\mathcal{S}_{0}$ and $\mathcal{F}_{0}$ are closable
and their closure are denoted by $\mathcal{S}$ and $\mathcal{F}$,
respectively. In fact, $\mathcal{F}=\mathcal{S}^{\ast }$ and $\mathcal{S}=%
\mathcal{F}^{\ast }$. \medskip

\noindent \underline{(iv):} The modular operator $\mathbf{\Delta }_{\nu }$
and conjugation $J_{\nu }$ associated with the pair $(\mathcal{X}_{\nu
},\eta _{\nu })$ are respectively the unique, positive, self--adjoint
operator and the unique anti--unitary operator occurring in the polar
decomposition of $\mathcal{S}=J_{\nu }\mathbf{\Delta }_{\nu }^{1/2}$. The
main result of the modular Tomita--Takesaki theory is the Tomita--Takesaki
theorem \cite[Theorem 2.5.14]{BratteliRobinsonI}, which states in the
current context that%
\begin{equation*}
J_{\nu }\mathcal{X}_{\nu }J_{\nu }=\mathcal{X}_{\nu }^{\prime }\qquad \text{%
and}\qquad \mathbf{\Delta }_{\nu }^{it}\mathcal{X}_{\nu }\mathbf{\Delta }%
_{\nu }^{-it}=\mathcal{X}_{\nu }
\end{equation*}%
for all $t\in \mathbb{R}$. The second assertion is related with the
so--called modular automorphism group, as defined by (\ref{definition thobis}%
) in its $\beta $--rescaled version. \medskip

For more details on the theory of von Neumann algebras and modular objects,
see for instance \cite{BratteliRobinsonI}. To make its key points more
transparent, this theory is illustrated in the finite dimensional case in
Section \ref{sect Schatten}. In the same spirit, the non--commutative H\"{o}%
lder inequalities \cite[(A.2)]{Araki-Masuda}, corresponding here to (\ref%
{property araki chiot 1}), are derived in the finite dimensional case from H%
\"{o}lder inequalities for Schatten norms. See (\ref{Holder jean sans bras}%
)--(\ref{Holder finite dimHolder finite dim}).

\subsection{Determinant Bounds from Non--commutative H\"{o}lder Inequalities
\label{sectino main encore}}

To prove our estimates, we rewrite the determinant of (\ref{Condition super
ovni}) by using cyclic representations of quasi--free states on the $C^{\ast
}$--algebra $\mathrm{CAR}(\mathfrak{h}\otimes \mathbb{M})$, as explained in
Section \ref{Sect Tree--Expansions}. This allows us to use the bound \cite[%
(A.2)]{Araki-Masuda}, which can be viewed as H\"{o}lder inequalities for
general non--commutative $L^{p}$--spaces. This yields the following
assertions on determinants of fermionic covariances:

\begin{satz}[Representation of determinants of fermionic covariances]
\label{theorem main super}\mbox{
}\newline
Let $\mathfrak{h}$ be any separable Hilbert space. Take $\beta \in \mathbb{R}%
^{+}$, $m\in \mathbb{N}$, $n\in 2{\mathbb{N}}$, any self--adjoint operator $%
H=H^{\ast }$ acting on $\mathfrak{h}$, and a non--vanishing $\mathfrak{M}\in
\mathrm{Mat}\left( m,\mathbb{R}\right) $ with $\mathfrak{M}\geq 0$. Then
there are von Neumann algebras $\mathcal{X}_{\nu }\subset \mathcal{B}(%
\mathfrak{H}_{\nu })$, cyclic and separating unit vectors $\eta _{\nu }\in
\mathfrak{H}_{\nu }$ (for $\mathcal{X}_{\nu }$) and $C^{\ast }$%
--homomorphisms $\varkappa _{\nu }$ (from $\mathrm{CAR}(\mathfrak{h}\otimes
\mathbb{M})$ to $\mathcal{X}_{\nu }$), where $\nu \in \mathbb{R}^{+}$, such
that for each bounded measurable positive function $\kappa $ from $\mathbb{R}
$ to $\mathbb{R}_{0}^{+}$, all parameters
\begin{equation*}
\{(\alpha _{q},\varphi _{q},j_{q})\}_{q=1}^{2N}\subset \mathbb{T}_{n}\cap
\lbrack 0,\beta )\times \mathfrak{h}\times \{1,\ldots ,m\}\ ,
\end{equation*}%
and for any permutation $\pi $ of $2N\in \mathbb{N}$ elements with sign $%
\left( -1\right) ^{\pi }$ so that\footnote{%
The conditions on $\pi $ impose that it is a permutation of $2N$ elements
which orders the numbers $\alpha _{q}$, $q\in \{1,\ldots ,2N\}$, in the
following way: $\pi (k)<\pi (l)$ whenever $\alpha _{k}<\alpha _{l}$ for $%
k,l\in \{1,\ldots ,2N\}$ while $\pi (k)<\pi (N+l)$ whenever $\alpha
_{k}=\alpha _{N+l}$ for $k,l\in \{1,\ldots ,N\}$.}%
\begin{equation*}
\vartheta _{q}\doteq \beta ^{-1}(\tilde{\alpha}_{\pi ^{-1}(q)}-\tilde{\alpha}%
_{\pi ^{-1}(q-1)})\geq 0\ ,\text{ \ }\alpha _{\pi ^{-1}(q)}-\alpha _{\pi
^{-1}(q-1)}\geq 0\ ,\text{ \ }q\in \left\{ 2,\ldots ,2N\right\} \ ,
\end{equation*}%
where $\tilde{\alpha}_{q}\doteq \alpha _{q}$ for $q\in \{1,\ldots ,N\}$ and $%
\tilde{\alpha}_{q}\doteq \alpha _{q}+n^{-1}\beta $ for $q\in \{N+1,\ldots
,2N\}$, the following assertion holds true:
\begin{align}
& \mathrm{det}\left[ \mathfrak{M}_{j_{k},j_{N+l}}\left\langle \varphi
_{N+l},\left( C_{H}\kappa (\hat{H})\hat{\varphi}_{k}\right) \left( \alpha
_{k}-\alpha _{N+l}\right) \right\rangle _{\mathfrak{h}}\right] _{k,l=1}^{N}
\label{eq chiot ultime} \\
& =\left( -1\right) ^{\pi }\lim_{\nu \rightarrow \infty }\left\langle \Delta
_{\nu }^{\frac{1}{2}-\beta ^{-1}\tilde{\alpha}_{\pi
^{-1}(p-1)}}x_{p-1}^{\ast }\Delta _{\nu }^{\vartheta _{p-1}}\cdots
x_{2}^{\ast }\Delta _{\nu }^{\vartheta _{2}}x_{1}^{\ast }\eta _{\nu }\
,\right.  \notag \\
& \qquad \qquad \qquad \qquad \left. \Delta _{\nu }^{\beta ^{-1}\tilde{\alpha%
}_{\pi ^{-1}(p)}-\frac{1}{2}}x_{p}\Delta _{\nu }^{\vartheta
_{p+1}}x_{p+1}\cdots \Delta _{\nu }^{\vartheta _{2N}}x_{2N}\eta _{\nu
}\right\rangle _{\mathfrak{H}_{\nu }}\ .  \notag
\end{align}%
The integer $p$ is defined to be the smallest element of $\left\{ 1,\ldots
,2N\right\} $ so that $\tilde{\alpha}_{\pi (p)}\geq \beta /2$. $\Delta _{\nu
}$ is the modular operator associated with the pair $(\mathcal{X}_{\nu
},\eta _{\nu })$. For $q\in \{1,\ldots ,2N\}$ such that $\pi ^{-1}(q)\in
\{1,\ldots ,N\}$,%
\begin{equation*}
x_{q}\doteq \varkappa _{\nu }\left( a^{+}\left( \left( \mathrm{sgn}\left(
1-n^{-1}\beta H\right) ^{-\beta ^{-1}n\tilde{\alpha}_{\pi ^{-1}(q)}}\sqrt{%
\kappa \left( H\right) }\varphi _{\pi ^{-1}(q)}\right) \otimes \mathfrak{e}%
_{j_{\pi ^{-1}(q)}}\right) \right) \ ,
\end{equation*}%
while for $q\in \{1,\ldots ,2N\}$ such that $\pi ^{-1}(q)\in \{N+1,\ldots
,2N\}$,%
\begin{equation*}
x_{q}\doteq \varkappa _{\nu }\left( a\left( \left( \mathrm{sgn}\left(
1-n^{-1}\beta H\right) ^{\beta ^{-1}n\tilde{\alpha}_{\pi ^{-1}(q)}}\sqrt{%
\kappa \left( H\right) }\varphi _{\pi ^{-1}(q)}\right) \otimes \mathfrak{e}%
_{j_{\pi ^{-1}(q)}}\right) \right) \ .
\end{equation*}%
Here, $\mathrm{sgn}$ is the sign function defined as follows: $\mathrm{sgn}%
\left( x\right) \doteq 1$ for $x\in \mathbb{R}_{0}^{+}$ and $\mathrm{sgn}%
\left( x\right) \doteq -1$ otherwise.
\end{satz}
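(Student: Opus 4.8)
The plan is to carry out, in order, the two transformations (a) and (b) announced at the start of Section~\ref{Main Results}.

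\emph{Transformation (a): from the determinant to a quasi--free correlation function.} Combining the explicit antiperiodic functions $g_{\lambda}$ of~(\ref{g explic1bis})--(\ref{g explic2}) with the direct--integral representation~(\ref{Integral direct C}) of the discrete--time covariance $C_{H}$, and comparing with the identity of subsection~(iv.b) of Section~\ref{Bernoulli} (this being the content of Lemma~\ref{lemma exp copy(1)} and Corollary~\ref{Corollary chiot1}), one produces, for every $\nu\in\mathbb{R}^{+}$, vectors $\Psi^{(\nu)}_{q}\in\mathfrak{h}\otimes\mathbb{M}$ --- built from $\varphi_{q}$, from $\sqrt{\kappa(H)}$, from the bounded sign/power factors $(\mathrm{sgn}(1-n^{-1}\beta H))^{\mp\beta^{-1}n\tilde{\alpha}_{q}}$, and from the (unbounded, but only polynomially so) factors $\mathrm{e}^{\mp\alpha_{q}H_{\nu}}$, with $H_{\nu}=\digamma_{\nu}(H)$ --- such that $\rho_{S_{\nu}}\big(a^{+}(\Psi^{(\nu)}_{k})\,a(\Psi^{(\nu)}_{N+l})\big)\to\mathfrak{M}_{j_{k},j_{N+l}}\langle\varphi_{N+l},(C_{H}\kappa(\hat{H})\hat{\varphi}_{k})(\alpha_{k}-\alpha_{N+l})\rangle_{\mathfrak{h}}$ as $\nu\to\infty$, where $S_{\nu}$ is the symbol~(\ref{symbol}) and $\rho_{S_{\nu}}$ the corresponding gauge--invariant quasi--free state. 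Since $\rho_{S_{\nu}}$ is quasi--free, the Wick--type identity~(\ref{ass O0-00bis}) --- applied after bringing, at the cost of the fermionic sign $(-1)^{\pi}$, the $2N$ field operators into the $\tilde{\alpha}$--order fixed by the permutation $\pi$ --- expresses the determinant of the $2$--point function as a single $2N$--point correlation function of $\rho_{S_{\nu}}$; uniform--in--$\nu$ bounds on the entries then allow interchanging $\lim_{\nu\to\infty}$ and $\mathrm{det}$.

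\emph{Transformation (b): modular form of the correlation function.} Passing to the cyclic representation $(\mathfrak{H}_{\nu},\varkappa_{\nu},\eta_{\nu})$ of $\rho_{S_{\nu}}$ from Section~\ref{Modular}, the time--ordered $2N$--point function becomes $\langle\eta_{\nu},\,\varkappa_{\nu}(b_{\pi^{-1}(1)})\cdots\varkappa_{\nu}(b_{\pi^{-1}(2N)})\,\eta_{\nu}\rangle_{\mathfrak{H}_{\nu}}$, each $b_{q}$ being a creation or annihilation operator at ``imaginary time'' $\tilde{\alpha}_{q}$. Now use that the modular automorphism group of $\rho_{S_{\nu}}$ is the $\beta$--rescaled free dynamics generated by $H_{\nu}\otimes\mathbf{1}_{\mathbb{M}}$ (cf.\ the paragraph after the Tomita--Takesaki theorem in Section~\ref{Modular}), so that $\Delta_{\nu}^{it}\,\varkappa_{\nu}(a^{\#}(\Psi))\,\Delta_{\nu}^{-it}=\varkappa_{\nu}\big(a^{\#}(\mathrm{e}^{\pm it\beta(H_{\nu}\otimes\mathbf{1}_{\mathbb{M}})}\Psi)\big)$ (with $a^{\#}$ denoting $a$ or $a^{+}$) and $\Delta_{\nu}^{it}\eta_{\nu}=\eta_{\nu}$. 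Continuing analytically in $t$, every factor $\mathrm{e}^{\mp\alpha_{q}H_{\nu}}$ carried by $\Psi^{(\nu)}_{q}$ is pulled out of the field operator as a fractional power $\Delta_{\nu}^{\pm\beta^{-1}\alpha_{q}}$ acting on $\mathfrak{H}_{\nu}$, leaving only the bounded sign factors $(\mathrm{sgn}(1-n^{-1}\beta H))^{\mp\beta^{-1}n\tilde{\alpha}_{\pi^{-1}(q)}}$ inside the field operators --- these are the $x_{q}$ of the statement. Folding the resulting chain at ``imaginary time'' $\beta/2$ (i.e.\ moving the operators at time $<\beta/2$ into the left entry of the scalar product, which replaces them by their adjoints and reverses their order) and using $\Delta_{\nu}^{it}\eta_{\nu}=\eta_{\nu}$ at both ends, one collects the powers of $\Delta_{\nu}$ into the telescoping increments $\vartheta_{q}=\beta^{-1}(\tilde{\alpha}_{\pi^{-1}(q)}-\tilde{\alpha}_{\pi^{-1}(q-1)})$ together with the two ``boundary'' exponents $\tfrac{1}{2}-\beta^{-1}\tilde{\alpha}_{\pi^{-1}(p-1)}$ and $\beta^{-1}\tilde{\alpha}_{\pi^{-1}(p)}-\tfrac{1}{2}$; the split occurs at the index $p$ precisely because that is where the running ``time'' crosses $\beta/2$. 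This yields the right--hand side of~(\ref{eq chiot ultime}), and finally $\lim_{\nu\to\infty}$ recovers the degenerate eigenvalue $\lambda=\beta^{-1}n$ through~(\ref{g explic2}).

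\emph{The main obstacle.} The delicate point --- and the reason $H$ is first replaced by the regularized operator $H_{\nu}=\digamma_{\nu}(H)$ --- is the domain and analyticity bookkeeping of transformation (b) when $H$ is unbounded: one must check that each intermediate vector $\Delta_{\nu}^{\vartheta_{q}}x_{q}\cdots\eta_{\nu}$ lies in the domain of the next fractional power of $\Delta_{\nu}$, and that $t\mapsto\Delta_{\nu}^{it}\eta_{\nu}$ admits the required analytic continuation even though $\beta H_{\nu}$ is unbounded. This works because $\mathrm{e}^{-\alpha\digamma_{\nu}(\lambda)}=|1-n^{-1}\beta\lambda|^{\beta^{-1}n\alpha}$ grows only like a polynomial in $\lambda$ of degree $\beta^{-1}n\alpha<n$ for $\alpha\in[0,\beta)$, so $\mathrm{e}^{-\alpha H_{\nu}}\sqrt{\kappa(H)}\varphi_{q}$ and its sign--twisted variants are well defined; meanwhile the order imposed by $\pi$ and the choice of $p$ keep the exponent of every $\Delta_{\nu}$ occurring in the computation within the range (modulus at most $\tfrac{1}{2}$ in the $\beta$--rescaled normalisation) in which $\Delta_{\nu}^{\,\cdot\,}$ acts on these vectors. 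The uniform--in--$\nu$ estimates needed to justify $\lim_{\nu\to\infty}$ are the very ones that will later deliver the finite --- and in fact universal --- determinant bounds.
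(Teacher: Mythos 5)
Your outline follows the paper's own route (first rewrite the determinant via the quasi--free states $\rho_{S_{\nu}}$, then pass to the GNS/modular picture and use Araki--Masuda), but two steps, as you have written them, do not go through. The first concerns arbitrary $\varphi_{q}\in\mathfrak{h}$. Your justification that $\mathrm{e}^{-\alpha H_{\nu}}\sqrt{\kappa(H)}\varphi_{q}$ is well defined because $\mathrm{e}^{-\alpha\digamma_{\nu}(\lambda)}=|1-n^{-1}\beta\lambda|^{\beta^{-1}n\alpha}$ ``grows only polynomially'' is false when $H$ is unbounded: a polynomially growing function of an unbounded self--adjoint operator is still an unbounded operator, and for a merely bounded $\kappa$ (in particular $\kappa=\mathbf{1}_{\mathbb{R}}$, the case of main interest) the vector $\sqrt{\kappa(H)}\varphi_{q}$ need not lie in its domain; the factors $\mathrm{e}^{+\tilde{\alpha}H_{\nu}}$ are moreover singular near the spectral value $\beta^{-1}n$. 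The intermediate representation of the covariance therefore only makes sense for $\varphi_{q}$ in the dense subspace $\mathfrak{D}$ of (\ref{fract domain}) (bounded spectral support), which is exactly how Lemma \ref{lemma exp copy(1)} and Corollary \ref{Corollary chiot1} are stated. What is missing from your argument is the paper's closing step: prove (\ref{eq chiot ultime}) for $\varphi_{1},\ldots,\varphi_{2N}\in\mathfrak{D}$ and then extend to all of $\mathfrak{h}$ by observing that both sides are (multilinearly) continuous in the $\varphi_{q}$ --- the left side because $C_{H}\kappa(\hat{H})$ is bounded, the right side uniformly in $\nu$ thanks to (\ref{property araki chiot 1}), whose bound $\prod_{q}\Vert\Psi_{q}\Vert$ no longer involves the unbounded exponentials.

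The second gap is in your transformation (a). Bringing the $2N$ field operators into the $\tilde{\alpha}$--order ``at the cost of the fermionic sign $(-1)^{\pi}$'' is not an identity in $\mathrm{CAR}(\mathfrak{h}\otimes\mathbb{M})$: permuting creation past annihilation operators produces scalar anticommutator terms via (\ref{CAR AA*}), so you cannot simply invoke (\ref{ass O0-00bis}) after reordering. The statement you actually need is the generalized Wick theorem for non--normally--ordered monomials, i.e.\ Lemma \ref{lemma exp copy(2)}: the quasi--free expectation of the $\pi$--ordered monomial is the determinant whose $(k,l)$ entry is the correspondingly \emph{ordered} two--point function $\rho\big(\mathbb{O}_{\pi_{k,N+l}}(a^{+}(\varphi_{k}),a(\varphi_{N+l}))\big)$, the contraction terms being exactly absorbed into these time--ordered entries; the paper proves this by an induction with Laplace expansions, and your sketch assumes it. Relatedly, your displayed claim that $\rho_{S_{\nu}}\big(a^{+}(\Psi^{(\nu)}_{k})a(\Psi^{(\nu)}_{N+l})\big)$ converges to the covariance entry is only correct when $\alpha_{k}\leq\alpha_{N+l}$; for $\alpha_{k}>\alpha_{N+l}$ it is the oppositely ordered product that does, which is precisely the content of Lemma \ref{lemma exp copy(1)}. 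Finally, the pulling--out of the $\mathrm{e}^{\mp\tilde{\alpha}H_{\nu}}$ factors as powers of $\Delta_{\nu}$ and the folding at $\beta/2$ require the boundedness and tube--continuity statements of \cite[Lemma A.1]{Araki-Masuda} together with a Hadamard three--line argument identifying the two functions $\Upsilon$ and $\Theta$ of Section \ref{Sectino chiot}; you correctly identify this as the main obstacle, but the justification you offer for it rests on the same incorrect polynomial--growth remark, so this part of your proof still needs the paper's (or an equivalent) argument.
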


\begin{proof}
Combining Lemma \ref{lemma exp copy(2)} and Corollary \ref{Corollary chiot1}
with the construction done in Section \ref{Sectino chiot}, in particular
Equation (\ref{chiot final1}), one gets the assertion when all functions $%
\varphi _{1},\ldots ,\varphi _{N}\in \mathfrak{D}\subset \mathfrak{h}$
belong the dense space (\ref{fract domain}). To extend it to all $\varphi
_{1},\ldots ,\varphi _{N}\in \mathfrak{h}$, by (\ref{property araki chiot 1}%
), note that both sides of Equation (\ref{eq chiot ultime}) are continuous
with respect to $\varphi _{1},\ldots ,\varphi _{N}$.

For an explicit description of $(\mathfrak{H}_{\nu },\varkappa _{\nu },\eta
_{\nu })$, which is a cyclic\ representation of the quasi--free state $\rho
_{S_{\nu }}$ for $\nu \in \mathbb{R}^{+}$, see Sections \ref{Sect
Tree--Expansions} and \ref{Modular}. Heuristic arguments can be found in\
Section \ref{Bernoulli}.
\end{proof}

\begin{koro}[Determinant bounds]
\label{theorem main super2}\mbox{
}\newline
Under the assumptions of Theorem \ref{theorem main super},%
\begin{multline*}
\left\vert \mathrm{det}\left[ \mathfrak{M}_{j_{k},j_{N+l}}\left\langle
\varphi _{N+l},\left( C_{H}\kappa (\hat{H})\hat{\varphi}_{k}\right) \left(
\alpha _{k}-\alpha _{N+l}\right) \right\rangle _{\mathfrak{h}}\right]
_{k,l=1}^{N}\right\vert \\
\leq \prod_{q=1}^{2N}\left\Vert \sqrt{\kappa \left( H\right) }\varphi
_{q}\right\Vert _{\mathfrak{h}}\mathfrak{M}_{j_{q},j_{q}}^{1/2}\ .
\end{multline*}%
Compare with Definition \ref{determinant bounds}.
\end{koro}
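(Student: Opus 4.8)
The plan is to start from the representation of the determinant established in Theorem~\ref{theorem main super}, take absolute values in~(\ref{eq chiot ultime}), and control the resulting scalar product uniformly in $\nu\in\mathbb{R}^{+}$ by means of the non--commutative H\"older inequality~(\ref{property araki chiot 1}) of Araki and Masuda.

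Since $|(-1)^{\pi}|=1$, Equation~(\ref{eq chiot ultime}) yields
\[
\left|\mathrm{det}\left[\mathfrak{M}_{j_{k},j_{N+l}}\left\langle\varphi_{N+l},\left(C_{H}\kappa(\hat{H})\hat{\varphi}_{k}\right)\left(\alpha_{k}-\alpha_{N+l}\right)\right\rangle_{\mathfrak{h}}\right]_{k,l=1}^{N}\right|=\lim_{\nu\rightarrow\infty}\left|\left\langle\xi_{\nu},\zeta_{\nu}\right\rangle_{\mathfrak{H}_{\nu}}\right|\ ,
\]
with $\xi_{\nu}\doteq\Delta_{\nu}^{\frac{1}{2}-\beta^{-1}\tilde{\alpha}_{\pi^{-1}(p-1)}}x_{p-1}^{\ast}\Delta_{\nu}^{\vartheta_{p-1}}\cdots x_{1}^{\ast}\eta_{\nu}$ and $\zeta_{\nu}\doteq\Delta_{\nu}^{\beta^{-1}\tilde{\alpha}_{\pi^{-1}(p)}-\frac{1}{2}}x_{p}\Delta_{\nu}^{\vartheta_{p+1}}\cdots x_{2N}\eta_{\nu}$. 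Writing $t_{q}\doteq\beta^{-1}\tilde{\alpha}_{\pi^{-1}(q)}$, the conditions imposed on $\pi$ and the definition of $p$ give $0\le t_{1}\le\cdots\le t_{2N}\le1$, $\vartheta_{q}=t_{q}-t_{q-1}\ge0$ for $q\in\{2,\ldots,2N\}$, and $t_{p-1}<\tfrac{1}{2}\le t_{p}$. Consequently the powers of $\Delta_{\nu}$ occurring in $\xi_{\nu}$ telescope to the non--negative exponent $\tfrac{1}{2}-t_{1}$, those occurring in $\zeta_{\nu}$ to $t_{2N}-\tfrac{1}{2}\ge0$, and their sum equals $t_{2N}-t_{1}\le1$. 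Thus $\langle\xi_{\nu},\zeta_{\nu}\rangle_{\mathfrak{H}_{\nu}}$ is precisely a product of elements $x_{q}^{(\ast)}$ of the von Neumann algebra $\mathcal{X}_{\nu}$ interlaced with non--negative fractional powers of the modular operator $\Delta_{\nu}$, evaluated between two copies of the cyclic and separating vector $\eta_{\nu}$, with total modular weight at most $1$. This is exactly the setting of~(\ref{property araki chiot 1}), which therefore bounds $|\langle\xi_{\nu},\zeta_{\nu}\rangle_{\mathfrak{H}_{\nu}}|$ by $\prod_{q=1}^{2N}\Vert x_{q}\Vert_{p_{q}}$, where $\Vert\cdot\Vert_{p_{q}}$ denotes the corresponding (Araki--Masuda) $L^{p_{q}}$--norm relative to $(\mathcal{X}_{\nu},\eta_{\nu})$, the conjugate exponents $p_{q}$ being the ones dictated by the weights $\vartheta_{q}$ and the boundary exponents $\tfrac{1}{2}-t_{1}$ and $t_{2N}-\tfrac{1}{2}$, so that $\sum_{q=1}^{2N}p_{q}^{-1}$ matches the total weight.

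It then remains to estimate each $\Vert x_{q}\Vert_{p_{q}}$. As $\rho_{S_{\nu}}$ is a state, these non--commutative $L^{p}$--norms of an element of $\mathcal{X}_{\nu}$ are all dominated by its operator norm in $\mathcal{B}(\mathfrak{H}_{\nu})$; since $\varkappa_{\nu}$ is a $C^{\ast}$--homomorphism and annihilation/creation operators on $\mathrm{CAR}(\mathfrak{h}\otimes\mathbb{M})$ satisfy $\Vert a(\Psi)\Vert=\Vert a^{+}(\Psi)\Vert=\Vert\Psi\Vert_{\mathfrak{h}\otimes\mathbb{M}}$, this gives $\Vert x_{q}\Vert_{p_{q}}\le\Vert\Psi_{q}\Vert_{\mathfrak{h}\otimes\mathbb{M}}$ with $\Psi_{q}\doteq(\mathrm{sgn}(1-n^{-1}\beta H)^{\pm\beta^{-1}n\tilde{\alpha}_{\pi^{-1}(q)}}\sqrt{\kappa(H)}\varphi_{\pi^{-1}(q)})\otimes\mathfrak{e}_{j_{\pi^{-1}(q)}}$. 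Because $\beta^{-1}n\tilde{\alpha}_{\pi^{-1}(q)}\in\mathbb{Z}$, the operator $\mathrm{sgn}(1-n^{-1}\beta H)^{\pm\beta^{-1}n\tilde{\alpha}_{\pi^{-1}(q)}}$ is a self--adjoint unitary on $\mathfrak{h}$, hence norm--preserving, and~(\ref{ej fract}) then yields $\Vert\Psi_{q}\Vert_{\mathfrak{h}\otimes\mathbb{M}}=\Vert\sqrt{\kappa(H)}\varphi_{\pi^{-1}(q)}\Vert_{\mathfrak{h}}\,\mathfrak{M}_{j_{\pi^{-1}(q)},j_{\pi^{-1}(q)}}^{1/2}$. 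All these bounds are independent of $\nu$, so passing to the limit $\nu\rightarrow\infty$ and reindexing the product over $q$ along the bijection $\pi^{-1}$ of $\{1,\ldots,2N\}$ produces $\prod_{q=1}^{2N}\Vert\sqrt{\kappa(H)}\varphi_{q}\Vert_{\mathfrak{h}}\,\mathfrak{M}_{j_{q},j_{q}}^{1/2}$, which is the asserted inequality; comparison with Definition~\ref{determinant bounds} is then immediate.

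The delicate point is the exponent bookkeeping in the second step: one has to check that the string of powers $\Delta_{\nu}^{\vartheta_{q}}$ together with the two boundary weights $\tfrac{1}{2}-t_{1}$ and $t_{2N}-\tfrac{1}{2}$ fits \emph{exactly} the hypotheses of~(\ref{property araki chiot 1}) --- in particular that the central factor $x_{p}$, which meets $x_{p-1}^{\ast}$ at the scalar product, is assigned a conjugate exponent compatible with $\sum_{q}p_{q}^{-1}\le1$ --- and that the resulting $L^{p_{q}}$--norms are genuinely controlled by the operator norm \emph{uniformly in} $\nu$, so that the limit $\nu\rightarrow\infty$ may be taken. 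The remaining ingredients --- unitarity of the sign prefactor, the tensor--norm identity~(\ref{ej fract}), and the passage to the limit --- are routine.
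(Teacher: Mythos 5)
Your overall strategy is the intended one, and your exponent bookkeeping (the telescoping of the modular powers to $\tfrac12-t_{1}$ and $t_{2N}-\tfrac12$ with your $t_{q}\doteq\beta^{-1}\tilde{\alpha}_{\pi^{-1}(q)}$, and the observation $t_{p-1}<\tfrac12\leq t_{p}$) is exactly what the definition of $p$ in Theorem \ref{theorem main super} is designed to provide. The gap lies in the step where you invoke the key inequality: (\ref{property araki chiot 1}) is \emph{not} a bound of the matrix element $\langle\xi_{\nu},\zeta_{\nu}\rangle_{\mathfrak{H}_{\nu}}$ by a product of Araki--Masuda $L^{p_{q}}$--norms of the $x_{q}$. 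It is a bound on the Hilbert--space norm of a \emph{single} vector of the form $\Delta_{\nu}^{z_{1}\beta^{-1}}\varkappa_{\nu}(a^{\#}(\Psi_{1}))\cdots\Delta_{\nu}^{z_{N}\beta^{-1}}\varkappa_{\nu}(a^{\#}(\Psi_{N}))\eta_{\nu}$, valid for $(z_{1},\ldots,z_{N})\in\mathfrak{T}_{N}^{(\beta/2)}$, and the bound is directly $\prod_{q}\Vert\Psi_{q}\Vert_{\mathfrak{h}\otimes\mathbb{M}}$; no non--commutative $L^{p}$--norms occur in it. The ``one--shot H\"older'' estimate on the whole scalar product with conjugate exponents $p_{q}$, as well as the domination of the Araki--Masuda $L^{p}$--norms by the operator norm, are facts from the general theory that the paper neither states nor proves, and you explicitly leave their verification open (you yourself call the exponent/norm bookkeeping ``the delicate point''). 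As written, this is precisely where the proof is incomplete.

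The repair is immediate and is the paper's own argument: apply Cauchy--Schwarz to $\langle\xi_{\nu},\zeta_{\nu}\rangle_{\mathfrak{H}_{\nu}}$ and then apply (\ref{property araki chiot 1}) \emph{separately} to $\xi_{\nu}$ and to $\zeta_{\nu}$. By your own computation, each of the two strings of modular exponents is nonnegative and sums to $\tfrac12-t_{1}\leq\tfrac12$, respectively $t_{2N}-\tfrac12\leq\tfrac12$, so each vector is exactly of the form covered by (\ref{property araki chiot 1}) (rescaled exponents lying in $\mathfrak{T}^{(\beta/2)}$), and the two resulting bounds multiply to $\prod_{q=1}^{2N}\Vert\Psi_{q}\Vert_{\mathfrak{h}\otimes\mathbb{M}}$, uniformly in $\nu$, so the limit $\nu\rightarrow\infty$ causes no trouble. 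Your final step --- unitarity of $\mathrm{sgn}(1-n^{-1}\beta H)^{\pm\beta^{-1}n\tilde{\alpha}_{\pi^{-1}(q)}}$ (an integer power of a self--adjoint unitary), the identity $\Vert\varphi\otimes\mathfrak{e}_{j}\Vert_{\mathfrak{h}\otimes\mathbb{M}}=\Vert\varphi\Vert_{\mathfrak{h}}\,\mathfrak{M}_{j,j}^{1/2}$ from (\ref{ej fract}), and the reindexing of the product along $\pi^{-1}$ --- is correct and carries over unchanged; the detour through $\Vert a^{\#}(\Psi)\Vert=\Vert\Psi\Vert$ and operator--norm domination of $L^{p}$--norms then becomes unnecessary, since (\ref{property araki chiot 1}) already delivers the one--particle norms $\Vert\Psi_{q}\Vert_{\mathfrak{h}\otimes\mathbb{M}}$ directly.
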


\begin{proof}
This corollary is a direct consequence of Theorem \ref{theorem main super}
and Inequality (\ref{property araki chiot 1}). In fact, inequalities of the
form \cite[(A.2)]{Araki-Masuda} (which generalize (\ref{property araki chiot
1})) are intimately related to H\"{o}lder inequalities for non--commutative $%
L^{p}$--spaces. In the finite dimensional case, the non--commutative $L^{p}$%
--spaces correspond to spaces of Schatten class operators, as explained in
Section \ref{sect Schatten}.
\end{proof}

\begin{koro}[Universal determinant bounds]
\label{theorem main super2 copy(1)}\mbox{
}\newline
The universal determinant bound defined by (\ref{universal determinant bound}%
) equals $\mathfrak{x}=1$.
\end{koro}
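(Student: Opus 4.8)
The strategy is to prove the two inequalities $\mathfrak{x}\le 1$ and $\mathfrak{x}\ge 1$ separately. The upper bound $\mathfrak{x}\le 1$ is essentially already in hand: it is an immediate consequence of Corollary~\ref{theorem main super2} applied with $\kappa=\mathbf{1}_{\mathbb{R}}$. Indeed, in that case $\sqrt{\kappa(H)}=\mathbf{1}_{\mathfrak{h}}$, so the bound in Corollary~\ref{theorem main super2} reads
\begin{equation*}
\left\vert \mathrm{det}\left[ \mathfrak{M}_{j_{k},j_{N+l}}\left\langle \varphi _{N+l},\left( C_{H}\hat{\varphi}_{k}\right) \left( \alpha _{k}-\alpha _{N+l}\right) \right\rangle _{\mathfrak{h}}\right] _{k,l=1}^{N}\right\vert \le \prod_{q=1}^{2N}\left\Vert \varphi _{q}\right\Vert _{\mathfrak{h}}\,\mathfrak{M}_{j_{q},j_{q}}^{1/2}\ .
\end{equation*}
Since the relevant parameters in Definition~\ref{determinant bounds} use the orthonormal vectors $\varphi_{\mathfrak{i}}$, all the norms $\Vert\varphi_{\mathfrak{i}_q}\Vert_{\mathfrak{h}}$ equal $1$, and the right-hand side becomes $\prod_{q=1}^{2N}\mathfrak{M}_{j_q,j_q}^{1/2}$. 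Thus $\gamma_{H,\mathbf{1}_{\mathbb{R}}}=1$ is a determinant bound of every self-adjoint $H$, so the infimum inside the supremum defining $\mathfrak{x}$ is $\le 1$ for every such $H$, whence $\mathfrak{x}\le 1$.

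\textbf{Lower bound.} For $\mathfrak{x}\ge 1$ it suffices to exhibit, for every $\varepsilon>0$, one choice of separable Hilbert space $\mathfrak{h}$, one self-adjoint $H$, and one admissible family of parameters for which the determinant in (\ref{Condition super ovni}) exceeds $(1-\varepsilon)^{2N}\prod_{q=1}^{2N}\mathfrak{M}_{j_q,j_q}^{1/2}$; this forces any determinant bound $\gamma_{H,\mathbf{1}_{\mathbb{R}}}$ to satisfy $\gamma_{H,\mathbf{1}_{\mathbb{R}}}\ge 1-\varepsilon$, and hence $\mathfrak{x}\ge 1-\varepsilon$. The natural candidate is the simplest possible case: take $m=1$ (so $\mathfrak{M}$ is a positive scalar, which cancels on both sides and can be normalized to $1$), $N=1$, and look at the $1\times 1$ determinant $\left\langle \varphi_{\mathfrak{i}},\left(C_H\hat\varphi_{\mathfrak{i}}\right)(0)\right\rangle_{\mathfrak{h}}$ for a single basis vector. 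Using the representation of $C_H$ via the antiperiodic function $g_\lambda$ from Section~\ref{Bernoulli} — concretely (\ref{Integral direct C})--(\ref{g explic1bis}) and the identity $\langle\hat\varphi_1,\hat\varphi_2\rangle=\tfrac{\beta^{-1}n}{2}\langle\varphi_1,\varphi_2\rangle$ from (\ref{dfsdklfjsldkfjs}) — one computes that $\left\langle \varphi_{\mathfrak{i}},\left(C_H\hat\varphi_{\mathfrak{i}}\right)(0)\right\rangle_{\mathfrak{h}}$ equals $g_{\lambda}^{(\infty)}(0)+o(1)=\left(1+\mathrm{e}^{\beta\lambda}\right)^{-1}+o(1)$ in the limit $n\to\infty$, where $\lambda$ is the spectral value of $H$ at $\varphi_{\mathfrak{i}}$. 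Choosing $H$ so that this spectral value is very negative — e.g. $\mathfrak{h}=\mathbb{C}$ and $H=(h)$ a large negative real — makes $\left(1+\mathrm{e}^{\beta h}\right)^{-1}$ as close to $1$ as desired. Taking $h\to-\infty$ and then $n$ large, the $1\times 1$ determinant is $\ge 1-\varepsilon$ while $\prod_{q=1}^{2}\mathfrak{M}_{j_q,j_q}^{1/2}=1$, giving the required lower bound.

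\textbf{Main obstacle.} The routine part is the algebra in the lower-bound computation; the one point needing care is that Definition~\ref{determinant bounds} quantifies over \emph{all} $\beta$, $n$, $m$, $N$, $\mathfrak{M}$ and all admissible parameter tuples \emph{simultaneously}, so to conclude $\gamma_{H,\mathbf{1}_{\mathbb{R}}}\ge 1-\varepsilon$ I only need \emph{one} tuple violating the bound with constant $1-\varepsilon$, which the $1\times1$ example supplies — but I must make sure the chosen tuple is genuinely admissible (i.e. $\alpha_k=\alpha_{N+l}=0\in\mathbb{T}_n\cap[0,\beta)$, $\mathfrak{i}_k=\mathfrak{i}_{N+l}$ a fixed index, $j_k=j_{N+l}=1$), which it is. The slightly delicate analytic point is justifying the $n\to\infty$ limit $\left\langle \varphi_{\mathfrak{i}},(C_H\hat\varphi_{\mathfrak{i}})(0)\right\rangle_{\mathfrak{h}}\to(1+\mathrm{e}^{\beta h})^{-1}$ uniformly enough to extract, for each $\varepsilon$, a concrete finite $n$; this follows from the explicit Bernoulli--Euler estimate (\ref{equation idiote}) and the closed form (\ref{g explic1bis}) evaluated at $\alpha=n^{-1}\beta$ (recalling $(C_H\hat\varphi)(0)$ corresponds, after conjugation by $\hat U_H$, to convolution against $-2R(\mathfrak{d},\lambda)\hat\varphi$, i.e. to $g_\lambda$ evaluated appropriately), so no genuine difficulty remains. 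Combining $\mathfrak{x}\le 1$ and $\mathfrak{x}\ge 1-\varepsilon$ for all $\varepsilon>0$ yields $\mathfrak{x}=1$.
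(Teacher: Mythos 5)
Your proposal is correct and follows essentially the same route as the paper: the upper bound $\mathfrak{x}\leq 1$ is read off from Corollary \ref{theorem main super2} with $\kappa =\mathbf{1}_{\mathbb{R}}$ and unit vectors, and the lower bound comes from the explicit example $H=\lambda \mathbf{1}_{\mathfrak{h}}$ with $\lambda $ very negative and then $n$ large, which is exactly the paper's construction (there with $\mathfrak{h}=\ell ^{2}(\mathbb{N};\mathbb{C})$ and general $N$, though $N=1$ suffices, as you note). One small imprecision: the relevant value is $g_{\lambda }(0)=(1-n^{-1}\beta \lambda )^{-1}\left( 1+\left\vert 1-n^{-1}\beta \lambda \right\vert ^{-n}\right) ^{-1}$, i.e.\ (\ref{g explic1bis}) at $\alpha =0$ rather than $\alpha =n^{-1}\beta $, but this does not affect the limit $(1+\mathrm{e}^{\beta \lambda })^{-1}$ on which your argument rests.
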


\begin{proof}
Invoking Corollary \ref{theorem main super2}, we deduce $\mathfrak{x}\leq 1$%
, see (\ref{universal determinant bound}) and Definition \ref{determinant
bounds}. Now, let $\mathfrak{h}=\ell ^{2}(\mathbb{N};\mathbb{C})$ with
canonical ONB denoted by $\{e_{\mathfrak{i}}\}_{\mathfrak{i}\in \mathbb{N}}$%
. Take $\beta \in \mathbb{R}^{+}$, $\kappa =\mathbf{1}_{\mathbb{R}}$, $%
H=\lambda \mathbf{1}_{\mathfrak{h}}$ with $\lambda \in \mathbb{R}$ and $%
\mathfrak{M}\in \mathrm{Mat}\left( 1,\mathbb{R}\right) $ with $\mathfrak{M}%
_{1,1}=1$. Then, from Corollary \ref{Corollary chiot1} together with (\ref%
{symbol}) and (\ref{ass O0-00bis})--(\ref{ass O0-00bisbis}) , for each $n\in
2${$\mathbb{N}$} and all $N\in \mathbb{N}$, we directly compute that, for
sufficiently large $n\gg 1$,
\begin{equation*}
\left\vert \mathrm{det}\left[ \left\langle e_{k},\left( C_{\lambda \mathbf{1}%
_{\mathfrak{h}}}\hat{e}_{l}\right) \left( 0\right) \right\rangle _{\mathfrak{%
h}}\right] _{k,l=1}^{N}\right\vert =\left( 1-n^{-1}\beta \lambda \right)
^{-N}\left( 1+\left\vert 1-n^{-1}\beta \lambda \right\vert ^{-n}\right)
^{-N}.
\end{equation*}%
In particular, for every $\varepsilon >0$ and $\beta \in \mathbb{R}^{+}$,
there are $\lambda _{\varepsilon ,\beta }\in \mathbb{R}$ and $n_{\varepsilon
,\beta }\in \mathbb{N}$ such that, for all $n\geq n_{\varepsilon ,\beta }$
and $N\in \mathbb{N}$,
\begin{equation*}
\left\vert \mathrm{det}\left[ \left\langle e_{k},\left( C_{\lambda
_{\varepsilon ,\beta }\mathbf{1}_{\mathfrak{h}}}\hat{e}_{l}\right) \left(
0\right) \right\rangle _{\mathfrak{h}}\right] _{k,l=1}^{N}\right\vert \geq
\left( 1-\varepsilon \right) ^{2N}\ .
\end{equation*}%
Using this lower bound and Corollary \ref{theorem main super2}, we then
arrive at the equality $\mathfrak{x}=1$.
\end{proof}

\subsection{Finite Dimensional Case and H\"{o}lder Inequalities for Schatten
Norms\label{sect Schatten}}

As already discussed, we use H\"{o}lder inequalities for non--commutative $%
L^{p}$--spaces to derive determinant bounds (Definition \ref{determinant
bounds}). Here, we illustrate this approach in the finite dimensional case
via H\"{o}lder inequalities for Schatten norms: \medskip

\noindent \underline{(i):} Assume that $\mathfrak{h}$ is a \emph{finite
dimensional} Hilbert space. Then, the $C^{\ast }$--algebra $\mathrm{CAR}(%
\mathfrak{h}\otimes \mathbb{M})$ associated with $\mathfrak{h}\otimes
\mathbb{M}$ can be identified with the space $\mathcal{B}(\mathbb{F})$ of
all linear operators acting on the fermionic Fock space
\begin{equation*}
\mathbb{F}\doteq \wedge \left( \mathfrak{h}\otimes \mathbb{M}\right)
\end{equation*}%
constructed from the one--particle Hilbert space $\mathfrak{h}\otimes
\mathbb{M}$. \medskip

\noindent \underline{(ii):} Take any faithful state $\rho $ on $\mathcal{B}(%
\mathbb{F})$ with cyclic representation $(\mathfrak{H},\varkappa ,\eta )$.
By finite dimensionality, it follows that
\begin{equation*}
\varkappa \left( \mathrm{CAR}\left( \mathfrak{h}\otimes \mathbb{M}\right)
\right) ^{\prime \prime }=\varkappa \left( \mathrm{CAR}\left( \mathfrak{h}%
\otimes \mathbb{M}\right) \right) \ .
\end{equation*}%
Because $\rho $ is faithful and $\mathcal{B}(\mathbb{F})$ is a matrix
algebra, $\eta $ is separating for $\varkappa \left( \mathrm{CAR}\left(
\mathfrak{h}\otimes \mathbb{M}\right) \right) $ and the (Tomita--Takesaki)
modular objects associated with it are well--defined. Denote by $\Delta \in
\mathcal{B}(\mathfrak{H})$ the modular operator associated with the pair $%
\left( \varkappa \left( \mathrm{CAR}\left( \mathfrak{h}\otimes \mathbb{M}%
\right) \right) ,\eta \right) $. See Section \ref{Modular}.

The cyclic representation $(\mathfrak{H},\varkappa ,\eta )$ is uniquely
defined, up to a unitary transformation. It is explicitly given, for
instance, by the so--called standard (cyclic) representation \cite[Section
5.4]{DerezinskiFruboes2006}: The space $\mathfrak{H}$ corresponds to the
linear space $\mathcal{B}(\mathbb{F})$ endowed with the Hilbert--Schmidt
scalar product
\begin{equation}
\left\langle A,B\right\rangle _{\mathfrak{H}}\doteq \mathrm{Tr}_{\mathbb{F}%
}(A^{\ast }B)\ ,\qquad A,B\in \mathfrak{H}\ .  \label{trace h at}
\end{equation}%
For any $A\in \mathcal{B}(\mathbb{F})$ we define the left and right
multiplication operators $\underrightarrow{A}$ and $\underleftarrow{A}$
acting on $\mathcal{B}(\mathbb{F})$ by
\begin{equation*}
B\mapsto \underrightarrow{A}B\doteq AB\qquad \text{and}\qquad B\mapsto
\underleftarrow{A}B\doteq BA\ ,
\end{equation*}%
respectively. The representation $\varkappa $ is the left multiplication,
i.e.,
\begin{equation*}
\varkappa \left( A\right) \doteq \underrightarrow{A}\ ,\qquad A\in \mathcal{B%
}(\mathbb{F})\ .
\end{equation*}%
The cyclic vector $\eta $ is defined by
\begin{equation*}
\eta \doteq \mathrm{D}^{1/2}\in \mathfrak{H}
\end{equation*}%
with $\mathrm{D}\in \mathcal{B}(\mathbb{F})$ being the unique positive
operator such that
\begin{equation}
\rho \left( A\right) \doteq \mathrm{Tr}_{\mathbb{F}}\left( \mathrm{D}%
A\right) \ ,\qquad A\in \mathcal{B}(\mathbb{F})\ .  \label{eq trace}
\end{equation}%
I.e., $\mathrm{D}$ is the density matrix of the state $\rho $. In this
representation, the modular operator $\Delta $ associated with $\rho $ is
equal to
\begin{equation}
\Delta =\underrightarrow{\mathrm{D}}\ \underleftarrow{\mathrm{D}^{-1}}\in
\mathcal{B}(\mathfrak{H})\ .  \label{modular}
\end{equation}%
Note that if a state is faithful then its density matrix $\mathrm{D}$ is
invertible. The ($\beta $--rescaled) modular group is the one--parameter
group $\sigma \equiv \{\sigma _{t}\}_{t\in {\mathbb{R}}}$ defined by
\begin{equation}
\sigma _{t}(\underrightarrow{A})\doteq \Delta ^{-it\beta ^{-1}}%
\underrightarrow{A}\Delta ^{it\beta ^{-1}}\qquad A\in \mathcal{B}(\mathbb{F}%
)\ .  \label{assertion chiot bresiliennes bis}
\end{equation}%
\medskip

\noindent \underline{(iii):} Now, we fix $n\in 2${$\mathbb{N}$} and apply
this last construction to the quasi--free states $\rho =\rho _{S_{\nu }}$, $%
\nu \in \mathbb{R}^{+}$, which are defined from symbols $S_{\nu }$ (\ref%
{symbol}). See Section \ref{Sect Tree--Expansions}. Denote their standard
representations by $(\mathfrak{H}_{\nu },\varkappa _{\nu },\eta _{\nu })$,
their density matrices by $\mathrm{D}_{\nu }$ and the associated modular
operators by $\Delta _{\nu }$. We infer from (\ref{trace h at}), (\ref{eq
trace}), (\ref{modular}), Corollary \ref{Corollary chiot1}, the defining
properties of Bogoliubov automorphisms (compare (\ref{assertion chiot
bresiliennes bis}) with (\ref{definition tho})--(\ref{assertion chiot
bresiliennes})), the cyclicity of traces, and the assumptions and
definitions of Theorem \ref{theorem main super} that%
\begin{align}
& \mathrm{det}\left[ \mathfrak{M}_{j_{k},j_{N+l}}\left\langle \varphi
_{N+l},\left( C_{H}\kappa (\hat{H})\hat{\varphi}_{k}\right) \left( \alpha
_{k}-\alpha _{N+l}\right) \right\rangle _{\mathfrak{h}}\right] _{k,l=1}^{N}
\notag \\
& =\lim_{\nu \rightarrow \infty }\left( -1\right) ^{\pi }\mathrm{Tr}_{%
\mathbb{F}}\left( \mathrm{D}_{\nu }^{\tilde{\alpha}_{\pi ^{-1}(1)}\beta
^{-1}}\mathrm{D}_{\nu }^{\frac{1}{2}}x_{1}\left( \prod_{j=2}^{p-1}\left(
\mathrm{D}_{\nu }^{\vartheta _{j}}x_{j}\right) \right) \mathrm{D}_{\nu }^{%
\frac{1}{2}-\beta ^{-1}\tilde{\alpha}_{\pi ^{-1}(p-1)}}\right.  \notag \\
& \qquad \qquad \qquad \left. \mathrm{D}_{\nu }^{\beta ^{-1}\tilde{\alpha}%
_{\pi ^{-1}(p)}-\frac{1}{2}}x_{p}\left( \prod_{j=p+1}^{2N}\left( \mathrm{D}%
_{\nu }^{\vartheta _{j}}x_{j}\right) \right) \mathrm{D}_{\nu }^{\frac{1}{2}}%
\mathrm{D}_{\nu }^{-\beta ^{-1}\tilde{\alpha}_{\pi ^{-1}(2N)}}\right)  \notag
\\
& =\lim_{\nu \rightarrow \infty }\left( -1\right) ^{\pi }\left\langle \Delta
_{\nu }^{\frac{1}{2}-\beta ^{-1}\tilde{\alpha}_{\pi
^{-1}(p-1)}}x_{p-1}^{\ast }\Delta _{\nu }^{\vartheta _{p-1}}\cdots
x_{2}^{\ast }\Delta _{\nu }^{\vartheta _{2}}x_{1}^{\ast }\eta _{\nu }\
,\right.  \label{chiot ultime2} \\
& \qquad \qquad \qquad \qquad \left. \Delta _{\nu }^{\beta ^{-1}\tilde{\alpha%
}_{\pi ^{-1}(p)}-\frac{1}{2}}x_{p}\Delta _{\nu }^{\vartheta
_{p+1}}x_{p+1}\cdots \Delta _{\nu }^{\vartheta _{2N}}x_{2N}\eta _{\nu
}\right\rangle _{\mathfrak{H}_{\nu }},  \notag
\end{align}%
that is, Equation (\ref{eq chiot ultime}). \medskip

\noindent \underline{(iv):} Schatten norms on $\mathcal{B}(\mathbb{F})$\ are
defined by
\begin{equation*}
\left\Vert A\right\Vert _{s}\doteq \left( \mathrm{Tr}_{\mathbb{F}}\left(
\left\vert A\right\vert ^{s}\right) \right) ^{\frac{1}{s}}\ ,\qquad A\in
\mathcal{B}(\mathbb{F})\ ,\ s\geq 1\ ,
\end{equation*}%
and
\begin{equation*}
\left\Vert A\right\Vert _{\infty }\doteq \lim_{s\rightarrow \infty }\left(
\mathrm{Tr}_{\mathbb{F}}\left( \left\vert A\right\vert ^{s}\right) \right) ^{%
\frac{1}{s}}=\left\Vert A\right\Vert _{\mathcal{B}(\mathbb{F})}\ ,\qquad
A\in \mathcal{B}(\mathbb{F})\ .
\end{equation*}%
Remark that the norm on the Hilbert space $\mathfrak{H}$ defined from the
scalar product (\ref{trace h at}) is the Hilbert--Schmidt norm, i.e.,
\begin{equation}
\left\Vert A\right\Vert _{\mathfrak{H}}=\left\Vert A\right\Vert _{2}\
,\qquad A\in \mathcal{B}(\mathbb{F})\equiv \mathfrak{H}\ .
\label{Holder jean sans bras0}
\end{equation}%
\medskip

\noindent \underline{(v):} H\"{o}lder inequalities for Schatten norms refer
to the following bounds: For any $n\in 2${$\mathbb{N}$}, $r,s_{1},\ldots
,s_{n}\in \lbrack 1,\infty ]$ such that $\sum_{j=1}^{n}1/s_{j}=1/r$, and all
operators $A_{1},\ldots ,A_{n}\in \mathcal{B}(\mathbb{F})$,
\begin{equation}
\left\Vert A_{1}\cdots A_{n}\right\Vert _{r}\leq \prod_{j=1}^{n}\left\Vert
A_{j}\right\Vert _{s_{j}}\ .  \label{Holder jean sans bras}
\end{equation}%
This type of inequality combined with (\ref{chiot ultime2}) implies
Corollary \ref{theorem main super2} in the finite dimensional case. \medskip

\noindent \underline{(vi):} Indeed, for any integer $N\in \mathbb{N}$ and
strictly positive parameter $\zeta \in \mathbb{R}^{+}$, define the tube%
\begin{equation}
\mathfrak{T}_{N}^{(\zeta )}\doteq \left\{ \left( z_{1},\ldots ,z_{N}\right)
\in \mathbb{C}^{N}:\forall j\in \{1,\ldots ,N\},\ \mathrm{Re}(z_{j})\geq 0,%
\text{ }\sum_{j=1}^{N}\mathrm{Re}(z_{j})\leq \zeta \right\} \ .  \label{tube}
\end{equation}%
Let $\rho $ be a faithful quasi--free state on $\mathcal{B}(\mathbb{F})$ and
denote by $H_{\rho }=H_{\rho }^{\ast }\in \mathcal{B}(\mathfrak{h}\otimes
\mathbb{M})$ the unique self--adjoint operator such that the symbol $%
S^{(\rho )}$ of $\rho $ equals
\begin{equation*}
S^{(\rho )}=\frac{1}{1+\mathrm{e}^{H_{\rho }}}\ .
\end{equation*}%
See beginning of Section \ref{Sect Tree--Expansions} for more explanations
on quasi--free states in relation with their symbols.

Choose $\Psi _{1},\ldots ,\Psi _{N}\in \mathfrak{h}\otimes \mathbb{M}$ and
pick a family$\{a^{\#}\left( \Psi _{q}\right) \}_{q=1}^{N}$ of elements of $%
\mathrm{CAR}(\mathfrak{h}\otimes \mathbb{M})$, where the notation
\textquotedblleft $a^{\#}$\textquotedblright\ stands for either
\textquotedblleft $a^{+}$\textquotedblright\ or \textquotedblleft $a$%
\textquotedblright . For any complex vector $\left( z_{1},\ldots
,z_{N}\right) \in \mathfrak{T}_{N}^{(1/2)}$, we observe from (\ref{modular})
that
\begin{eqnarray*}
&&\Delta ^{z_{1}}\varkappa \left( a^{\#}\left( \Psi _{1}\right) \right)
\Delta ^{z_{2}}\cdots \Delta ^{z_{N}}\varkappa \left( a^{\#}\left( \Psi
_{N}\right) \right) \eta \\
&=&\mathrm{D}^{\mathrm{Re}\left( z_{1}\right) }a^{\#}\left( \mathrm{e}^{-i%
\mathrm{Im}\left( z_{1}\right) H_{\rho }}\Psi _{1}\right) \mathrm{D}^{%
\mathrm{Re}\left( z_{2}\right) }a^{\#}\left( \mathrm{e}^{-i\left( \mathrm{Im}%
\left( z_{1}\right) +\mathrm{Im}\left( z_{2}\right) \right) H_{\rho }}\Psi
_{2}\right) \\
&&\cdots \mathrm{D}^{\mathrm{Re}\left( z_{N}\right) }a^{\#}\left( \mathrm{e}%
^{-i\left( \mathrm{Im}\left( z_{1}\right) +\cdots +\mathrm{Im}\left(
z_{N}\right) \right) H_{\rho }}\Psi _{N}\right) \mathrm{D}^{1/2-\left(
\mathrm{Re}\left( z_{1}\right) +\cdots +\mathrm{Re}\left( z_{N}\right)
\right) }\ .
\end{eqnarray*}%
By applying H\"{o}lder inequalities (\ref{Holder jean sans bras0}) and (\ref%
{Holder jean sans bras}), we obtain from the last equality that%
\begin{align}
& \left\Vert \Delta ^{z_{1}}\varkappa \left( a^{\#}\left( \Psi _{1}\right)
\right) \Delta ^{z_{2}}\cdots \Delta ^{z_{N}}\varkappa \left( a^{\#}\left(
\Psi _{N}\right) \right) \eta \right\Vert _{\mathfrak{H}}  \notag \\
& \leq \left\Vert \mathrm{D}^{1/2-\left( \mathrm{Re}\left( z_{1}\right)
+\cdots +\mathrm{Re}\left( z_{N}\right) \right) }\right\Vert _{\frac{1}{%
1/2-\left( \mathrm{Re}\left( z_{1}\right) +\cdots +\mathrm{Re}\left(
z_{N}\right) \right) }}  \notag \\
& \qquad \qquad \qquad \qquad \times \prod_{q=1}^{N}\left\Vert \mathrm{D}^{%
\mathrm{Re}\left( z_{q}\right) }\right\Vert _{\frac{1}{\mathrm{Re}\left(
z_{q}\right) }}\left\Vert a^{\#}\left( \mathrm{e}^{i\left( \mathrm{Im}\left(
z_{1}\right) +\cdots +\mathrm{Im}\left( z_{q}\right) \right) H_{\rho }}\Psi
_{q}\right) \right\Vert _{\infty }\ ,  \notag
\end{align}%
which, combined with $\left\Vert \mathrm{D}\right\Vert _{1}=1$, in turn
implies that%
\begin{equation}
\left\Vert \Delta ^{z_{1}}\varkappa \left( a^{\#}\left( \Psi _{1}\right)
\right) \Delta ^{z_{2}}\cdots \Delta ^{z_{N}}\varkappa \left( a^{\#}\left(
\Psi _{N}\right) \right) \eta \right\Vert _{\mathfrak{H}}\leq
\prod_{q=1}^{N}\left\Vert \Psi _{q}\right\Vert _{\mathfrak{h}\otimes \mathbb{%
M}}\ .  \label{Holder finite dimHolder finite dim}
\end{equation}

This inequality corresponds to (\ref{property araki chiot 1}) in the finite
dimensional case. Therefore, Equation (\ref{chiot ultime2}) combined with
Inequality (\ref{Holder finite dimHolder finite dim}) implies Corollary \ref%
{theorem main super2} when $\mathfrak{h}$ is a finite dimensional Hilbert
space.

\section{Technical Proofs\label{Thecnical proofs}}

\subsection{Quasi--Free States on General Monomials\label{Quasi free sect}}

Let $\mathcal{H}$ be some Hilbert space and $\mathrm{CAR}(\mathcal{H})$ the
associated CAR $C^{\ast }$--algebra generated by the unit $\mathbf{1}$ and
the family $\{a(\varphi )\}_{\varphi \in \mathcal{H}}$ of elements
satisfying the canonical commutation relations (CAR): For any $\varphi
_{1},\varphi _{2}\in \mathcal{H}$,
\begin{eqnarray}
a(\varphi _{1})a(\varphi _{2})+a(\varphi _{2})a(\varphi _{1}) &=&0\text{ },
\label{CAR AA} \\
a(\varphi _{1})^{\ast }a(\varphi _{2})+a(\varphi _{2})a(\varphi _{1})^{\ast
} &=&\left\langle \varphi _{2},\varphi _{1}\right\rangle _{\mathcal{H}}%
\mathbf{1}\text{ }.  \label{CAR AA*}
\end{eqnarray}%
Strictly speaking, the above conditions only define $\mathrm{CAR}(\mathcal{H}%
)$ up to an isomorphism of $C^{\ast }$--algebras. See, e.g., \cite[Theorem
5.2.5]{BratteliRobinson}. As explained in Section \ref{Sect Tree--Expansions}
for the special case $\mathcal{H}=\mathfrak{h}\otimes \mathbb{M}$, the
generator $a(\varphi )\in \mathrm{CAR}(\mathcal{H})$ is interpreted as the
annihilation operator associated with $\varphi \in \mathcal{H}$ whereas its
adjoint
\begin{equation*}
a^{+}(\varphi )\doteq a(\varphi )^{\ast }\ ,\qquad \varphi \in \mathcal{H}\ ,
\end{equation*}%
is the corresponding creation operator.

A monomial in the annihilation and creation operators is \emph{normally
ordered} if the creation operators appearing in the monomial are on the left
side of all annihilation operators in the same monomial, like
\begin{equation*}
a^{+}(\varphi _{1})\cdots a^{+}(\varphi _{N_{1}})a(\varphi _{2N_{1}})\cdots
a(\varphi _{N_{1}+1})\ .
\end{equation*}%
By the above definition, if $\rho $ is a quasi--free state and $\mathcal{M}%
\in \mathrm{CAR}(\mathcal{H})$ is a normally ordered monomial in the
annihilation and creation operators, then $\rho (\mathcal{M})$ is the
determinant of a matrix, the entries of which are given by $\rho $ acting on
monomials of degree two. We show below that this pivotal property of
quasi--free states remains valid even if $\mathcal{M}$ is \emph{not}
normally ordered.

This is not surprising. For instance, \cite[Definition 3.1, Condition (3.2)]%
{Araki} also essentially says that if the state is quasi--free then
expectation values (with respect to this state) of any monomial (not
necessarily normally ordered) of arbitrary even degree is a determinant of a
matrix, the entries of which are expectation values of monomials of degree
two. However, beyond this fact, \ we would like to give the \emph{explicit}
behavior of such expectation values with respect to arbitrary permutations
of creation and annihilation operators in large monomials. This point is
crucial here and is given by Lemma \ref{lemma exp copy(2)}.

To this end, we introduce some notation. If $\pi $ is a permutation of $n\in
{\mathbb{N}}$ elements (i.e., a bijective function from $\{1,\ldots ,n\}$ to
$\{1,\ldots ,n\}$) with sign $(-1)^{\pi }$, we define the monomial $\mathbb{O%
}_{\pi }(A_{1},\ldots ,A_{n})\in \mathrm{CAR}(\mathcal{H})$ in $A_{1},\ldots
,A_{n}\in \mathrm{CAR}(\mathcal{H})$ by the product%
\begin{equation}
\mathbb{O}_{\pi }\left( A_{1},\ldots ,A_{n}\right) \doteq \left( -1\right)
^{\pi }A_{\pi ^{-1}(1)}\cdots A_{\pi ^{-1}(n)}\text{ }.  \label{Opi1}
\end{equation}%
In other words, $\mathbb{O}_{\pi }$ places the operator $A_{k}$ at the $\pi
(k)$th position in the monomial $(-1)^{\pi }A_{\pi ^{-1}(1)}\cdots A_{\pi
^{-1}(n)}$. Further, for all $k,l\in \{1,\ldots ,n\}$, $k\neq l$,%
\begin{equation}
\pi _{k,l}:\{1,2\}\rightarrow \{1,2\}  \label{Opi2}
\end{equation}%
is the identity function if $\pi (k)<\pi (l)$, otherwise $\pi _{k,l}$
interchanges $1$ and $2$. Then, the following property of quasi--free states
holds true:

\begin{lemma}[Quasi--free states on general monomials]
\label{lemma exp copy(2)}\mbox{ }\newline
Let $\rho $ be a quasi--free state on the $C^{\ast }$--algebra $\mathrm{CAR}(%
\mathcal{H})$, as defined by (\ref{ass O0-00})--(\ref{ass O0-00bis}) for $%
\mathcal{H}=\mathfrak{h}\otimes \mathbb{M}$. For any $N_{1},N_{2}\in \mathbb{%
N}$, all permutations $\pi $ of $N_{1}+N_{2}$ elements and $\varphi
_{1},\ldots ,\varphi _{N_{1}+N_{2}}\in \mathcal{H}$,%
\begin{equation}
\rho
\Big(%
\mathbb{O}_{\pi }\left( a^{+}(\varphi _{1}),\ldots ,a^{+}(\varphi
_{N_{1}}),a(\varphi _{N_{1}+N_{2}}),\ldots ,a(\varphi _{N_{1}+1})\right)
\Big)%
=0  \label{ass O0}
\end{equation}%
if $N_{1}\neq N_{2}$, while in the case $N_{1}=N_{2}\equiv N$,%
\begin{eqnarray}
&&\rho
\Big(%
\mathbb{O}_{\pi }\left( a^{+}(\varphi _{1}),\ldots ,a^{+}(\varphi
_{N}),a(\varphi _{2N}),\ldots ,a(\varphi _{N+1})\right)
\Big)
\notag \\
&=&\mathrm{det}\left[ \rho
\Big(%
\mathbb{O}_{\pi _{k,N+l}}\left( a^{+}(\varphi _{k}),a(\varphi _{N+l})\right)
\Big)%
\right] _{k,l=1}^{N}\text{ }.  \label{ass det O}
\end{eqnarray}
\end{lemma}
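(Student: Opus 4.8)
The strategy is to reduce the general (non‑normally‑ordered) statement to the known normally‑ordered case (\ref{ass O0-00})–(\ref{ass O0-00bis}) by moving each annihilation operator to the right of every creation operator using the CAR (\ref{CAR AA})–(\ref{CAR AA*}), and then to check that the bookkeeping of signs and contractions exactly reproduces the determinant formula (\ref{ass det O}). First I would observe that the monomial $\mathbb{O}_{\pi}(a^{+}(\varphi_{1}),\ldots,a^{+}(\varphi_{N_{1}}),a(\varphi_{N_{1}+N_{2}}),\ldots,a(\varphi_{N_{1}+1}))$ is, up to the sign $(-1)^{\pi}$, a product of the $N_{1}+N_{2}$ operators in the order dictated by $\pi^{-1}$. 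I would then bring it to normal order by repeatedly using the anticommutation relation: each time an $a(\varphi_{N_{1}+l})$ sitting to the left of some $a^{+}(\varphi_{k})$ is swapped past it, one picks up a term $-a^{+}(\varphi_{k})a(\varphi_{N_{1}+l})$ plus a scalar contraction term $\langle\varphi_{k},\varphi_{N_{1}+l}\rangle_{\mathcal H}\mathbf 1$ (with the convention in (\ref{CAR AA*})), while swaps among two creation operators or among two annihilation operators only produce a sign. Iterating, $\rho$ of the original monomial becomes a signed sum, over all ways of pairing a subset of the creation operators with a subset of the annihilation operators (the ``contractions''), of $\prod(\text{contraction scalars})$ times $\rho$ of the remaining normally‑ordered monomial.

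Next I would invoke the known quasi‑free property on each surviving normally‑ordered sub‑monomial. If $N_{1}\neq N_{2}$, every term either has unequal numbers of surviving $a^{+}$'s and $a$'s — and then $\rho$ of it vanishes by (\ref{ass O0-00}) — or requires pairing more operators than are available of one type; in all cases the total is $0$, giving (\ref{ass O0}). If $N_{1}=N_{2}=N$, the only terms that survive are those in which \emph{all} creation and annihilation operators are contracted in pairs (a perfect matching between $\{1,\ldots,N\}$ and $\{N+1,\ldots,2N\}$), because any uncontracted normally‑ordered remainder again has $\rho$‑value given by a smaller determinant but with a strictly unbalanced leftover that must itself be fully contracted — so inductively one reduces to perfect matchings. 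Each perfect matching $k\mapsto N+\sigma(k)$ contributes $\pm\prod_{k}\langle\varphi_{k},\varphi_{N+\sigma(k)}\rangle_{\mathcal H}$ or rather $\pm\prod_{k}\rho(\mathbb{O}_{\pi_{k,N+\sigma(k)}}(a^{+}(\varphi_{k}),a(\varphi_{N+\sigma(k)})))$, and the sign is precisely the signature of the permutation $\sigma$ composed with the fixed sign $(-1)^{\pi}$ in a way that telescopes. Collecting all matchings yields $\det[\rho(\mathbb{O}_{\pi_{k,N+l}}(a^{+}(\varphi_{k}),a(\varphi_{N+l})))]_{k,l=1}^{N}$, which is the Leibniz expansion of the claimed determinant.

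**The main obstacle.** The genuinely delicate part is the sign accounting: verifying that the sign attached to each perfect matching in the expansion equals the signature of the associated permutation of $\{1,\ldots,N\}$ \emph{and} that the degree‑two factors $\rho(\mathbb{O}_{\pi_{k,N+l}}(a^{+}(\varphi_{k}),a(\varphi_{N+l})))$ carry exactly the right sign prescribed by $\pi_{k,l}$ (the identity or the transposition, depending on whether $\pi(k)<\pi(N+l)$). Concretely, one must show that the number of transpositions used to extract a given contraction, reduced modulo $2$, matches the parity recorded by $(-1)^{\pi}$ together with the $\pi_{k,N+l}$ corrections; this is most cleanly done by an induction on $N$ (or on the number of ``inversions'' of $\pi$), peeling off the leftmost operator of the monomial and tracking how the sign and the index sets change. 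A clean way to organize this is to prove it first for a single transposition adjacent in $\pi$ (the base step), and then note that any $\pi$ is reached from the normally‑ordered one by a sequence of adjacent transpositions, under each of which both sides of (\ref{ass det O}) transform the same way — the left side by the definition (\ref{Opi1}) of $\mathbb{O}_{\pi}$ and the CAR, the right side by a row/column swap of the matrix together with the sign rule (\ref{Opi2}) for $\pi_{k,l}$. Once the single‑transposition step is checked, the general case follows by induction on the length of $\pi$ and the result is established.
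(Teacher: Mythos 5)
Your overall strategy---normal--order via the CAR and match the resulting bookkeeping against the determinant---is the same idea that drives the paper's proof, but your sketch leaves the decisive work undone and contains two concrete errors. First, the claim that after normal ordering ``the only terms that survive are those in which all creation and annihilation operators are contracted in pairs'' is false for a general quasi--free state: the normally ordered remainders are balanced monomials whose expectations are given by (\ref{ass O0-00bis}) and are nonzero in general (they vanish only for the Fock state). If instead you read ``contraction'' as the two--point function $\rho (a^{+}a)$, then ``only perfect matchings survive, with the Leibniz signs'' is precisely the identity (\ref{ass det O}) you are trying to prove, so it cannot simply be invoked. Second, in your transposition argument the right--hand side of (\ref{ass det O}) does not transform ``by a row/column swap'': the matrix is indexed by the labels $k,l$, not by positions in the monomial, so an adjacent swap of $a(\varphi _{N+l})$ with $a^{+}(\varphi _{k})$ changes exactly one entry (by the scalar $\left\langle \varphi _{N+l},\varphi _{k}\right\rangle _{\mathcal{H}}$ up to sign), hence changes the determinant by that scalar times the $(k,l)$ cofactor, while the left--hand side changes by the same scalar times $\rho $ of a monomial of degree $2(N-1)$. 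Equating these two differences \emph{is} the lemma at rank $N-1$ (together with a sign check against $(-1)^{k+l}$), so an induction ``on the length of $\pi $'' cannot close by itself: you need a nested induction on $N$ (or on a combined disorder parameter), plus the explicit sign identity---exactly the ``main obstacle'' you name but do not resolve.

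For comparison, the paper handles the balanced case by a single induction on the disorder parameter $N_{\pi }=k_{\pi }^{+}-k_{\pi }+1$ (the distance between the first annihilation and the last creation operator), anticommuting the first annihilation operator all the way to the right; this produces one same--degree monomial with strictly smaller disorder and $N$ contraction terms of degree $2(N-1)$ with disorder smaller by two, which are matched against the Laplace expansion of $\det \mathrm{M}$ along the column $q_{\pi }$ via the sign identity (\ref{sign sympato}). Your outline can in principle be repaired along similar lines (outer induction on $N$, inner induction on inversions, with the cofactor identification and signs made explicit), but as written the essential steps are missing.
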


\begin{proof}
By (\ref{CAR AA}) and (\ref{CAR AA*}), if the monomial
\begin{equation*}
\mathbb{O}_{\pi }\left( a^{+}(\varphi _{1}),\ldots ,a^{+}(\varphi
_{N_{1}}),a(\varphi _{N_{1}+N_{2}}),\ldots ,a(\varphi _{N_{1}+1})\right)
\end{equation*}%
contains different numbers of annihilation and creation operators (i.e., $%
N_{1}\neq N_{2}$), then it can be written as a sum of normally ordered
monomials with the same property. By (\ref{ass O0-00}) and the linearity of
states, we thus deduce (\ref{ass O0}).

We consider the case $N_{1}=N_{2}\equiv N\in \mathbb{N}$. Assertion (\ref%
{ass det O}) trivially holds if $N=1$ and we can assume from now on that $%
N\geq 2$.

For convenience, the notation \textquotedblleft $a^{\#}$\textquotedblright\
stands for either \textquotedblleft $a^{+}$\textquotedblright\ or
\textquotedblleft $a$\textquotedblright . In particular, we write the
monomial%
\begin{equation*}
\mathbb{O}_{\pi }\left( a^{+}(\varphi _{1}),\ldots ,a^{+}(\varphi
_{n}),a(\varphi _{2N}),\ldots ,a(\varphi _{N+1})\right) =\left( -1\right)
^{\pi }a_{1}^{\#}\cdots a_{2N}^{\#}\text{ }.
\end{equation*}%
Let%
\begin{eqnarray*}
k_{\pi } &\doteq &\min \left\{ \pi \left( N+1\right) ,\ldots ,\pi \left(
2N\right) \right\} \leq N+1\ , \\
k_{\pi }^{+} &\doteq &\max \left\{ \pi \left( 1\right) ,\ldots ,\pi \left(
N\right) \right\} \geq N\ .
\end{eqnarray*}%
The parameter $k_{\pi }$ is the position the first annihilation operator
appearing in the monomial $a_{1}^{\#}\cdots a_{2N}^{\#}$ while $k_{\pi }^{+}$
is the position of the last creation operator appearing in $a_{1}^{\#}\cdots
a_{2N}^{\#}$. In other words,%
\begin{eqnarray*}
&&\mathbb{O}_{\pi }\left( a^{+}(\varphi _{1}),\ldots ,a^{+}(\varphi
_{N}),a(\varphi _{2N}),\ldots ,a(\varphi _{N+1})\right) \\
&=&(-1)^{\pi }a_{1}^{+}\cdots a_{k_{\pi }-1}^{+}a_{k_{\pi }}a_{k_{\pi
}+1}^{\#}\cdots a_{k_{\pi }^{+}-1}^{\#}a_{k_{\pi }^{+}}^{+}a_{k_{\pi
}^{+}+1}\cdots a_{2N}\text{ }.
\end{eqnarray*}%
Note that $k_{\pi }=N+1$ iff the monomial is normally ordered. The same
holds true if $k_{\pi }^{+}=N$. In particular, $k_{\pi }=N+1$ iff $k_{\pi
}^{+}=N$. We will prove Assertion (\ref{ass det O}) by induction in the
parameter
\begin{equation*}
N_{\pi }\doteq k_{\pi }^{+}-k_{\pi }+1\geq 0\ .
\end{equation*}%
Observe that $N_{\pi }=0$ iff the monomial is normally ordered and Assertion
(\ref{ass det O}) holds in this case because of (\ref{CAR AA}), (\ref{ass
O0-00bis}) and the antisymmetry of the determinant under permutations of its
lines or rows.

Assume now that $N_{\pi }\geq 1$. Thus, $k_{\pi }\leq N$ and $k_{\pi
}^{+}\geq N+1$. If $k_{\pi }>2$ and $2N-k_{\pi }>3$ then%
\begin{eqnarray}
&&\left( -1\right) ^{\pi }\mathbb{O}_{\pi }\left( a^{+}(\varphi _{1}),\ldots
,a^{+}(\varphi _{N}),a(\varphi _{2N}),\ldots ,a(\varphi _{N+1})\right)
\notag \\
&=&a_{1}^{+}\cdots a_{k_{\pi }-1}^{+}\{a_{k_{\pi }},a_{k_{\pi
}+1}^{\#}\}a_{k_{\pi }+2}^{\#}\cdots a_{2N}^{\#}  \label{eq sup z1} \\
&&-a_{1}^{+}\cdots a_{k_{\pi }-1}^{+}a_{k_{\pi }+1}^{\#}\{a_{k_{\pi
}},a_{k_{\pi }+2}^{\#}\}a_{k_{\pi }+3}^{\#}\cdots a_{2N}^{\#}  \notag \\
&&+a_{1}^{+}\cdots a_{k_{\pi }-1}^{+}\sum_{l=3}^{2N-k_{\pi
}-2}(-1)^{l-1}a_{k_{\pi }+1}^{\#}\cdots a_{k_{\pi }+l-1}^{\#}\{a_{k_{\pi
}},a_{k_{\pi }+l}^{\#}\}a_{k_{\pi }+l+1}^{\#}\cdots a_{2N}^{\#}  \notag \\
&&+(-1)^{2N-k_{\pi }-2}a_{1}^{+}\cdots a_{k_{\pi }-1}^{+}a_{k_{\pi
}+1}^{\#}\cdots a_{2N-2}^{\#}\{a_{k_{\pi }},a_{2N-1}^{\#}\}a_{2N}^{\#}
\notag \\
&&+(-1)^{2N-k_{\pi }-1}a_{1}^{+}\cdots a_{k_{\pi }-1}^{+}a_{k_{\pi
}+1}^{\#}\cdots a_{2N-1}^{\#}\{a_{k_{\pi }},a_{2N}^{\#}\}  \notag \\
&&+(-1)^{2N-k_{\pi }}a_{1}^{+}\cdots a_{k_{\pi }-1}^{+}a_{k_{\pi
}+1}^{\#}\cdots a_{2N}^{\#}a_{k_{\pi }}\text{ },  \notag
\end{eqnarray}%
with $\{A,B\}\doteq AB+BA$. Mutatis mutandis if $k_{\pi }=1,2$ or $2N-k_{\pi
}=2,3$. It is convenient to use the definition
\begin{equation*}
q_{\pi }\doteq 2N-\pi ^{-1}(k_{\pi })+1\in \left\{ 1,\ldots ,N\right\} \ ,
\end{equation*}%
which implies $a_{k_{\pi }}=a(\varphi _{N+q_{\pi }})$. By combining (\ref{eq
sup z1}) with the CAR (\ref{CAR AA}) and (\ref{CAR AA*}), we deduce the
equality%
\begin{eqnarray}
&&\left( -1\right) ^{\pi }\mathbb{O}_{\pi }\left( a^{+}(\varphi _{1}),\ldots
,a^{+}(\varphi _{N}),a(\varphi _{2N}),\ldots ,a(\varphi _{N+1})\right)
\notag \\
&=&(-1)^{2N-k_{\pi }}a_{1}^{+}\cdots a_{k_{\pi }-1}^{+}a_{k_{\pi
}+1}^{\#}\cdots a_{2N}^{\#}a_{k_{\pi }}  \label{super eq} \\
&&+\sum_{k=1}^{N}\left\langle \varphi _{N+q_{\pi }},\varphi
_{k}\right\rangle _{\mathcal{H}}\left\{ \mathbf{1}\left[ k_{\pi }+1=\pi (k)%
\right] a_{1}^{+}\cdots a_{k_{\pi }-1}^{+}a_{k_{\pi }+2}^{\#}\cdots
a_{2N}^{\#}\right.  \notag \\
&&-\mathbf{1}\left[ k_{\pi }+2=\pi (k)\right] a_{1}^{+}\cdots a_{k_{\pi
}-1}^{+}a_{k_{\pi }+1}^{\#}a_{k_{\pi }+3}^{\#}\cdots a_{2N}^{\#}  \notag \\
&&+\mathbf{1}\left[ 2N-1>\pi (k)>k_{\pi }+2\right] (-1)^{\pi (k)-k_{\pi }-1}
\notag \\
&&\qquad \qquad \qquad \qquad \qquad \qquad a_{1}^{+}\cdots a_{k_{\pi
}-1}^{+}a_{k_{\pi }+1}^{\#}\cdots a_{\pi (k)-1}^{\#}a_{\pi (k)+1}^{\#}\cdots
a_{2N}^{\#}  \notag \\
&&+\mathbf{1}\left[ 2N-1=\pi (k)\right] (-1)^{2N-k_{\pi }-2}a_{1}^{+}\cdots
a_{k_{\pi }-1}^{+}a_{k_{\pi }+1}^{\#}\cdots a_{2N-2}^{\#}a_{2N}^{\#}  \notag
\\
&&\left. +\mathbf{1}\left[ 2N=\pi (k)\right] (-1)^{2N-k_{\pi
}-1}a_{1}^{+}\cdots a_{k_{\pi }-1}^{+}a_{k_{\pi }+1}^{\#}\cdots
a_{2N-1}^{\#}\right\}  \notag
\end{eqnarray}%
when $k_{\pi }>2$ and $2N-k_{\pi }>3$. Mutatis mutandis if $k_{\pi }=1,2$ or
$2N-k_{\pi }=2,3$. For any $k\in \{1,\ldots ,N\}$, we fix a permutation $\pi
^{(k)}$ of $2(N-1)$ elements such that%
\begin{multline*}
\mathbb{O}_{\pi ^{(k)}}%
\Big(%
a^{+}(\varphi _{1}),\ldots ,a^{+}(\varphi _{k-1}),a^{+}(\varphi
_{k+1}),\ldots ,a^{+}(\varphi _{N}), \\
a(\varphi _{2N}),\ldots ,a(\varphi _{N+q_{\pi }+1}),a(\varphi _{N+q_{\pi
}-1}),\ldots ,a(\varphi _{N+1})%
\Big)
\\
=\left( -1\right) ^{\pi ^{(k)}}a_{1}^{+}\cdots a_{k_{\pi }-1}^{+}a_{k_{\pi
}+1}^{\#}\cdots a_{\pi (k)-1}^{\#}a_{\pi (k)+1}^{\#}\cdots a_{2N}^{\#}\text{
}.
\end{multline*}%
(Recall that $N\geq 2$ is assumed without loss of generality.) Similarly, $%
\tilde{\pi}$ is a permutation of $2N$ elements such that%
\begin{multline*}
\mathbb{O}_{\tilde{\pi}}\left( a^{+}(\varphi _{1}),\ldots ,a^{+}(\varphi
_{N}),a(\varphi _{2N}),\ldots ,a(\varphi _{N+1})\right) \\
=\left( -1\right) ^{\tilde{\pi}}a_{1}^{+}\cdots a_{k_{\pi }-1}^{+}a_{k_{\pi
}+1}^{\#}\cdots a_{2N}^{\#}a_{k_{\pi }}\text{ }.
\end{multline*}%
By using this notation, we rewrite (\ref{super eq}) as%
\begin{eqnarray}
&&\left( -1\right) ^{\pi }\mathbb{O}_{\pi }\left( a^{+}(\varphi _{1}),\ldots
,a^{+}(\varphi _{N}),a(\varphi _{2N}),\ldots ,a(\varphi _{N+1})\right)
\notag \\
&=&(-1)^{k_{\pi }}\left( -1\right) ^{\tilde{\pi}}\mathbb{O}_{\tilde{\pi}%
}\left( a^{+}(\varphi _{1}),\ldots ,a^{+}(\varphi _{N}),a(\varphi
_{2N}),\ldots ,a(\varphi _{N+1})\right)  \label{super eqsuper eq} \\
&&+\sum_{k=1}^{N}\mathbf{1}\left[ \pi (k)>k_{\pi }\right] (-1)^{\pi
(k)-k_{\pi }-1}\left\langle \varphi _{N+q_{\pi }},\varphi _{k}\right\rangle
_{\mathcal{H}}\left( -1\right) ^{\pi ^{(k)}}  \notag \\
&&\times \mathbb{O}_{\pi ^{(k)}}%
\Big(%
a^{+}(\varphi _{1}),\ldots ,a^{+}(\varphi _{k-1}),a^{+}(\varphi
_{k+1}),\ldots ,a^{+}(\varphi _{N}),a(\varphi _{2N}),  \notag \\
&&\qquad \qquad \qquad \qquad \ldots ,a(\varphi _{N+q_{\pi }+1}),a(\varphi
_{N+q_{\pi }-1}),\ldots ,a(\varphi _{N+1})%
\Big)%
\ .  \notag
\end{eqnarray}%
For all $k\in \{1,\ldots ,N\}$, note that%
\begin{equation*}
k_{\pi ^{(k)}}^{+}\leq k_{\pi }^{+}-2\qquad \text{and}\qquad k_{\pi
^{(k)}}\geq k_{\pi }\ .
\end{equation*}%
As a consequence, for any $k\in \{1,\ldots ,N\}$, the induction parameter $%
N_{\pi ^{(k)}}$ associated with the permutation $\pi ^{(k)}$ satisfies:
\begin{equation}
N_{\pi ^{(k)}}\doteq k_{\pi ^{(k)}}^{+}-k_{\pi ^{(k)}}+1\leq k_{\pi
}^{+}-k_{\pi }-1=N_{\pi }-2\text{ }.  \label{inductin hypothesis cool1}
\end{equation}%
Similarly,
\begin{equation*}
k_{\tilde{\pi}}^{+}=k_{\pi }^{+}-1\qquad \text{and}\qquad k_{\tilde{\pi}%
}\geq k_{\pi }\ ,
\end{equation*}%
which in turn imply%
\begin{equation}
N_{\tilde{\pi}}\doteq k_{\tilde{\pi}}^{+}-k_{\tilde{\pi}}+1\leq k_{\pi
}^{+}-k_{\pi }=N_{\pi }-1\text{ }.  \label{inductin hypothesis cool2}
\end{equation}%
Observe furthermore that, for any $k\in \{1,\ldots ,N\}$ such that $\pi
(k)>k_{\pi }$,
\begin{equation}
(-1)^{\tilde{\pi}}=(-1)^{\pi }(-1)^{k_{\pi }}\qquad \text{and}\qquad
(-1)^{\pi ^{(k)}}=(-1)^{\pi }(-1)^{q_{\pi }+k+k_{\pi }+\pi (k)}\ .
\label{sign sympato}
\end{equation}%
Therefore, by using (\ref{sign sympato}) together with (\ref{super eqsuper
eq}), we arrive at the equality%
\begin{eqnarray}
&&\mathbb{O}_{\pi }\left( a^{+}(\varphi _{1}),\ldots ,a^{+}(\varphi
_{N}),a(\varphi _{2N}),\ldots ,a(\varphi _{N+1})\right)  \notag \\
&=&\mathbb{O}_{\tilde{\pi}}\left( a^{+}(\varphi _{1}),\ldots ,a^{+}(\varphi
_{N}),a(\varphi _{2N}),\ldots ,a(\varphi _{N+1})\right)
\label{super eqsuper eqsuper eqsuper eq} \\
&&-\sum_{k=1}^{N}\mathbf{1}\left[ \pi (k)>k_{\pi }\right] (-1)^{q_{\pi
}+k}\left\langle \varphi _{N+q_{\pi }},\varphi _{k}\right\rangle _{\mathcal{H%
}}  \notag \\
&&\qquad \qquad \mathbb{O}_{\pi ^{(k)}}%
\Big(%
a^{+}(\varphi _{1}),\ldots ,a^{+}(\varphi _{k-1}),a^{+}(\varphi
_{k+1}),\ldots ,a^{+}(\varphi _{N}),a(\varphi _{2N}),  \notag \\
&&\qquad \qquad \qquad \qquad \qquad \ \ \ \ldots ,a(\varphi _{N+q_{\pi
}+1}),a(\varphi _{N+q_{\pi }-1}),\ldots ,a(\varphi _{N+1})%
\Big)%
\ .  \notag
\end{eqnarray}%
We use now the following definitions: For any $k,l\in \{1,\ldots ,N\}$, the
coefficients%
\begin{eqnarray*}
\mathrm{M}_{k,l} &\doteq &\rho \left( \mathbb{O}_{\pi _{k,N+l}}\left(
a^{+}(\varphi _{k}),a(\varphi _{N+l})\right) \right) \ , \\
\mathrm{\tilde{M}}_{k,l} &\doteq &\rho \left( \mathbb{O}_{\tilde{\pi}%
_{k,N+l}}\left( a^{+}(\varphi _{k}),a(\varphi _{N+l})\right) \right) \ ,
\end{eqnarray*}%
$k,l\in \{1,\ldots ,N\}$, are the entries of two matrices $\mathrm{M}$ and $%
\mathrm{\tilde{M}}$, respectively. Let
\begin{equation*}
\mathrm{M}^{(k,l)}\doteq \det \left( \left[ \mathrm{M}_{i,j}\right] _{i,j\in
\{1,\ldots ,N\},i\neq k,j\neq l}\right)
\end{equation*}%
be the $k,l\in \{1,\ldots ,N\}\ $minor of $\mathrm{M}$, that is, the
determinant of the $(N-1)\times (N-1)$ matrix that results from deleting the
$k$th row and the $l$th column of $\mathrm{M}$. From the Laplace expansion
for determinants (sometimes called cofactor expansion),
\begin{eqnarray}
\det \mathrm{M} &=&\sum_{k=1}^{N}\left( -1\right) ^{q_{\pi }+k}\rho \left(
\mathbb{O}_{\pi _{k,N+l}}\left( a^{+}(\varphi _{k}),a(\varphi _{N+q_{\pi
}})\right) \right) \mathrm{M}^{(k,q_{\pi })}  \label{laplace1} \\
\det \mathrm{\tilde{M}} &=&\sum_{k=1}^{N}\left( -1\right) ^{q_{\pi }+k}\rho
\left( a^{+}(\varphi _{k})a(\varphi _{N+q_{\pi }})\right) \mathrm{M}%
^{(k,q_{\pi })}\ .  \label{laplace2}
\end{eqnarray}%
To derive the equality (\ref{laplace2}) we also use that $\tilde{\pi}%
_{k,N+l}=\pi _{k,N+l}$ whenever $l\neq q_{\pi }$, whereas it is the identity
of the set $\{1,2\}$ for $l=q_{\pi }$. On the other hand, using (\ref%
{inductin hypothesis cool1})--(\ref{inductin hypothesis cool2}) and the
induction hypothesis for all $\tilde{N}_{\pi }\geq 0$ with $\tilde{N}_{\pi
}<N_{\pi }$, we deduce that%
\begin{multline*}
\mathrm{M}^{(k,q_{\pi })}=\rho
\Big(%
\mathbb{O}_{\pi ^{(k)}}%
\big(%
a^{+}(\varphi _{1}),\ldots ,a^{+}(\varphi _{k-1}),a^{+}(\varphi
_{k+1}),\ldots ,a^{+}(\varphi _{N}),a(\varphi _{2N}), \\
\ldots ,a(\varphi _{N+q_{\pi }+1}),a(\varphi _{N+q_{\pi }-1}),\ldots
,a(\varphi _{N+1})%
\big)%
\Big)%
\end{multline*}%
and
\begin{equation*}
\det \mathrm{\tilde{M}}=\rho
\Big(%
\mathbb{O}_{\tilde{\pi}}\left( a^{+}(\varphi _{1}),\ldots ,a^{+}(\varphi
_{N}),a(\varphi _{2N}),\ldots ,a(\varphi _{N+1})\right)
\Big)%
\ .
\end{equation*}%
Thus, by induction, it follows from (\ref{CAR AA*}), (\ref{super eqsuper
eqsuper eqsuper eq}), (\ref{laplace1}) and (\ref{laplace2}) that%
\begin{eqnarray*}
&&\rho
\Big(%
\mathbb{O}_{\pi }\left( a^{+}(\varphi _{1}),\ldots ,a^{+}(\varphi
_{N}),a(\varphi _{2N}),\ldots ,a(\varphi _{N+1})\right)
\Big)
\\
&=&\sum_{k=1}^{N}\left( -1\right) ^{q_{\pi }+k}\left\{ \rho \left(
a^{+}(\varphi _{k})a(\varphi _{N+q_{\pi }})\right) -\mathbf{1}\left[ \pi
(k)>k_{\pi }\right] \left\langle \varphi _{N+q_{\pi }},\varphi
_{k}\right\rangle _{\mathcal{H}}\right\} \mathrm{M}^{(k,q_{\pi })} \\
&=&\sum_{k=1}^{N}\left( -1\right) ^{q_{\pi }+k}\rho
\Big(%
\mathbb{O}_{\pi _{k,N+q_{\pi }}}\left( a^{+}(\varphi _{k}),a(\varphi
_{N+q_{\pi }})\right)
\Big)%
\mathrm{M}^{(k,q_{\pi })}=\det \mathrm{M}\ .
\end{eqnarray*}
\end{proof}

\subsection{Representation of Discrete--time Covariance by Quasi--Free
States \label{representation quasi free}}

\noindent \underline{(i):} We pick a (possibly unbounded) self--adjoint
operator $H=H^{\ast }$ acting on $\mathfrak{h}$ and fix from now on $n\in 2%
\mathbb{N}$. Then, because of (\ref{Integral direct C}), (\ref{jean sans
bras0}), (\ref{g explic1bis}) and (\ref{g explic2}), for any fixed $\beta
,\nu \in \mathbb{R}^{+}$\ we introduce the unitary operator%
\begin{equation}
\mathrm{E}\doteq \mathrm{sgn}\left( \mathbf{1}_{\mathfrak{h}}-n^{-1}\beta
H\right) \in \mathcal{B}\left( \mathfrak{h}\right)  \label{En}
\end{equation}%
and the (possibly unbounded) operator $H_{\nu }\doteq \digamma _{\nu }\left(
H\right) $, see (\ref{function F}) and (\ref{Hn0Hn0}). For any $\nu \in
\mathbb{R}$, the Hamiltonian $H_{\nu }$ gives rise to the symbol $S_{\nu }$ (%
\ref{symbol}), which, as explained in\ Section \ref{Sect Tree--Expansions},
in turn yields a quasi--free state $\rho _{S_{\nu }}$, with symbol $S_{\nu
}>0$, on the CAR $C^{\ast }$--algebra $\mathrm{CAR}(\mathfrak{h}\otimes
\mathbb{M})$.

Let
\begin{equation}
\mathfrak{D}\doteq \bigcup_{D\in \mathbb{R}^{+}}\mathrm{ran}\left( \mathbf{1}%
\left[ -D\leq H_{\nu }\leq D\right] \right) \ ,\qquad \nu \in \mathbb{R}\ .
\label{fract domain}
\end{equation}%
By the spectral theorem, it is a dense subspace of entire analytic vectors
of $H_{\nu }$. Note additionally that $\mathfrak{D}$ does not depend on $\nu
\in \mathbb{R}$.\medskip

\noindent \underline{(ii):} Similar to the permutation (\ref{Opi2}), for all
$\alpha _{1},\alpha _{2}\in \mathbb{T}_{n}\cap \lbrack 0,\beta )$, we define
the permutation%
\begin{equation*}
\pi _{\alpha _{1},\alpha _{2}}:\{1,2\}\rightarrow \{1,2\}
\end{equation*}%
as the identity map if $\alpha _{1}\leq \alpha _{2}$, while $\pi _{\alpha
_{1},\alpha _{2}}$ interchanges $1$ and $2$ when $\alpha _{1}>\alpha _{2}$.
\medskip

\noindent Then, quasi--free states $\rho _{S_{\nu }}$, $\nu \in \mathbb{R}%
^{+}$, give rise to the following representation of the discrete--time
covariance:

\begin{lemma}[Representation of the covariance by a quasi--free state]
\label{lemma exp copy(1)}\mbox{ }\newline
Let $\mathfrak{h}$ be any separable Hilbert space. Fix $\beta \in \mathbb{R}%
^{+}$, a self--adjoint operator $H=H^{\ast }$ acting on $\mathfrak{h}$, and $%
n\in 2\mathbb{N}$. Then, for each bounded measurable positive function $%
\kappa $ from $\mathbb{R}$ to $\mathbb{R}_{0}^{+}$, all $m\in \mathbb{N}$,
non--vanishing $\mathfrak{M}\in \mathrm{Mat}\left( m,\mathbb{R}\right) $
with $\mathfrak{M}\geq 0$, $\alpha _{1},\alpha _{2}\in \mathbb{T}_{n}\cap
\lbrack 0,\beta )$, $\varphi _{1},\varphi _{2}\in \mathfrak{D}$ and $%
j_{1},j_{2}\in \{1,\ldots ,m\}$,%
\begin{multline*}
\mathfrak{M}_{j_{1},j_{2}}\left\langle \varphi _{2},\left( C_{H}\kappa (\hat{%
H})\hat{\varphi}_{1}\right) \left( \alpha _{1}-\alpha _{2}\right)
\right\rangle _{\mathfrak{h}} \\
=\lim_{\nu \rightarrow \infty }\rho _{S_{\nu }}%
\Big(%
\mathbb{O}_{\pi _{\alpha _{1},\alpha _{2}}}%
\Big(%
a^{+}\left( (\mathrm{e}^{-\alpha _{1}H_{\nu }}\mathrm{E}^{-\beta
^{-1}n\alpha _{1}}\kappa (H)^{1/2}\varphi _{1})\otimes \mathfrak{e}%
_{j_{1}}\right) , \\
a\left( (\mathrm{e}^{(\alpha _{2}+n^{-1}\beta )H_{\nu }}\mathrm{E}^{\beta
^{-1}n\alpha _{2}+1}\kappa (H)^{1/2}\varphi _{2})\otimes \mathfrak{e}%
_{j_{2}}\right)
\Big)%
\Big)%
\end{multline*}%
with $\mathfrak{e}_{j}\doteq \left[ e_{j}\right] \in \mathbb{M}$ being the
vectors of $\mathbb{M}$ satisfying (\ref{ej fract}) and where $\mathbb{O}%
_{\pi _{\alpha _{1},\alpha _{2}}}$ is defined by (\ref{Opi1}) for $\pi =\pi
_{\alpha _{1},\alpha _{2}}$.
\end{lemma}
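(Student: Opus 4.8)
The plan is to verify the claimed identity by a direct computation that reduces the discrete--time covariance to the function $g_\lambda$ of Section~\ref{Bernoulli} and then recognizes the latter as a two--point function of the quasi--free state $\rho_{S_\nu}$. First I would use the spectral decomposition \eqref{spectral theorem} of $H$ together with the direct--integral representation \eqref{Integral direct C} to write, for $\varphi_1,\varphi_2\in\mathfrak{D}$,
\begin{equation*}
\left\langle \varphi_2,\left( C_H\kappa(\hat H)\hat\varphi_1\right)(\alpha_1-\alpha_2)\right\rangle_{\mathfrak{h}}
=\int_{\Omega_H}\kappa(\lambda_H(\mathfrak{a}))\,\overline{(U_H\varphi_2)(\mathfrak{a})}\,(U_H\varphi_1)(\mathfrak{a})\,
\bigl(-2R(\mathfrak{d},\lambda_H(\mathfrak{a}))\,\widehat{\delta_{\mathrm{ap}}}\bigr)(\alpha_1-\alpha_2)\,\mu_H(\mathrm{d}\mathfrak{a}),
\end{equation*}
recalling from \eqref{define deltabis} that $\hat\varphi_1$ is $\varphi_1$ convolved with $\delta_{\mathrm{ap}}$ and using \eqref{convolution unit}. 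By \eqref{jean sans bras0}--\eqref{eq stupidze}, the inner resolvent term evaluated at $\alpha_1-\alpha_2$ is exactly $g_{\lambda_H(\mathfrak{a})}(\alpha_1-\alpha_2)$, so the covariance entry is an integral of $\kappa(\lambda)\,g_\lambda(\alpha_1-\alpha_2)$ against the spectral measure of $H$.

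Next I would compute the right--hand side. By Lemma~\ref{lemma exp copy(2)} (with $N_1=N_2=1$), for any quasi--free state the expectation of the degree--two monomial $\mathbb{O}_{\pi_{\alpha_1,\alpha_2}}(a^+(\Psi_1),a(\Psi_2))$ equals $\rho_{S_\nu}(a^+(\Psi_1)a(\Psi_2))$ if $\alpha_1\le\alpha_2$ and $\rho_{S_\nu}(a(\Psi_2)a^+(\Psi_1))$ otherwise; using \eqref{quasi free symbol}, \eqref{CAR AA*} and \eqref{symbol} this is
$\langle \Psi_2, S_\nu \Psi_1\rangle$ in the first case and $\langle\Psi_2,(\mathbf{1}-S_\nu)\Psi_1\rangle$ in the second. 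Plugging in $\Psi_1=(\mathrm{e}^{-\alpha_1 H_\nu}\mathrm{E}^{-\beta^{-1}n\alpha_1}\kappa(H)^{1/2}\varphi_1)\otimes\mathfrak{e}_{j_1}$ and $\Psi_2=(\mathrm{e}^{(\alpha_2+n^{-1}\beta)H_\nu}\mathrm{E}^{\beta^{-1}n\alpha_2+1}\kappa(H)^{1/2}\varphi_2)\otimes\mathfrak{e}_{j_2}$, the tensor factor produces $\langle\mathfrak{e}_{j_2},\mathfrak{e}_{j_1}\rangle_{\mathbb{M}}=\mathfrak{M}_{j_1,j_2}$ by \eqref{ej fract}, while on $\mathfrak{h}$ one again diagonalizes with $U_H$: since $\mathrm{E}$, $H_\nu=\digamma_\nu(H)$, $S_\nu$ and $\kappa(H)$ are all functions of $H$, the $\mathfrak{h}$--scalar product becomes
$\int_{\Omega_H}\kappa(\lambda_H(\mathfrak{a}))\,\overline{(U_H\varphi_2)(\mathfrak{a})}(U_H\varphi_1)(\mathfrak{a})\,\mathsf{k}_\nu(\lambda_H(\mathfrak{a}))\,\mu_H(\mathrm{d}\mathfrak{a})$ for an explicit scalar kernel $\mathsf{k}_\nu(\lambda)$ built from $\mathrm{sgn}(1-n^{-1}\beta\lambda)^{\beta^{-1}n(\alpha_2+n^{-1}\beta-\alpha_1)}$, the exponential $\mathrm{e}^{(\alpha_2+n^{-1}\beta-\alpha_1)\digamma_\nu(\lambda)}$, and $S_\nu$ or $\mathbf{1}-S_\nu$ depending on the sign of $\alpha_1-\alpha_2$. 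The decisive point is then the pointwise identity $\mathsf{k}_\nu(\lambda)= -2R(\mathfrak{d},\lambda)\widehat{\delta_{\mathrm{ap}}}(\alpha_1-\alpha_2)=g_\lambda(\alpha_1-\alpha_2)$: for $\lambda\ne\beta^{-1}n$ this is precisely formula \eqref{g explic1bis} (with $\alpha$ replaced by $\alpha_1-\alpha_2$, distinguishing the ranges $(-\beta,0]$ and $(0,\beta)$ according to $\alpha_1\le\alpha_2$ or not, which is exactly where the $S_\nu$ versus $\mathbf{1}-S_\nu$ dichotomy and the antiperiodicity enter), and it is where the prefactor $\mathrm{E}^{\dots}=\mathrm{sgn}(\mathbf 1_{\mathfrak h}-n^{-1}\beta H)^{\dots}$ is needed.

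Finally I would take $\nu\to\infty$. For $\lambda\ne\beta^{-1}n$ nothing depends on $\nu$, so the integrand is already the correct one; for the single spectral value $\lambda=\beta^{-1}n$ one invokes the limit representation \eqref{g explic2} of $g_{\beta^{-1}n}$, together with dominated convergence (the integrand is uniformly bounded since $\kappa$ is bounded, $|g_\lambda|\le 1$ up to $n$--dependent constants, and $\varphi_1,\varphi_2\in\mathfrak{D}$ lie in a spectral subspace on which everything is bounded), to pass the limit inside the integral. Combining the two computations gives the asserted equality. The main obstacle I anticipate is purely bookkeeping: getting the exponents of $\mathrm{E}$ and of $\mathrm{e}^{H_\nu}$, the shift by $n^{-1}\beta$, and the case split at $\alpha_1=\alpha_2$ to line up \emph{exactly} with \eqref{g explic1bis}--\eqref{g explic2}, including the sign/antiperiodicity subtleties that force $n\in 2\mathbb{N}$; the functional--analytic content (spectral calculus, dominated convergence on $\mathfrak{D}$, Lemma~\ref{lemma exp copy(2)}) is routine by comparison.
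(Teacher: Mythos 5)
Your route is essentially the paper's own: reduce the covariance matrix element via the spectral theorem and the direct--integral representation (\ref{Integral direct C}) to the scalar kernel $g_{\lambda}(\alpha_{1}-\alpha_{2})$ of (\ref{g explic1bis})--(\ref{g explic2}), compute the quasi--free two--point function through the symbol (\ref{symbol}) with a case split at $\alpha_{1}\leq\alpha_{2}$, match the two kernels pointwise in the spectral variable (which is where the $\mathrm{E}$--prefactors and the $n^{-1}\beta$ shift are absorbed), and send $\nu\rightarrow\infty$, the $\nu$--dependence sitting only on the spectral value $\beta^{-1}n$; the paper's only cosmetic differences are that it reduces to $\kappa=\mathbf{1}_{\mathbb{R}}$ at the outset and uses the CAR directly rather than Lemma \ref{lemma exp copy(2)} for the degree--two monomial. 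One correction to your sketch: by (\ref{Opi1}) the swapped monomial is $\mathbb{O}_{\pi_{\alpha_{1},\alpha_{2}}}\left(a^{+}(\Psi_{1}),a(\Psi_{2})\right)=-a(\Psi_{2})a^{+}(\Psi_{1})$, so for $\alpha_{1}>\alpha_{2}$ its expectation is $-\left\langle \Psi_{2},(\mathbf{1}-S_{\nu})\Psi_{1}\right\rangle_{\mathfrak{h}\otimes\mathbb{M}}$, not $+\left\langle \Psi_{2},(\mathbf{1}-S_{\nu})\Psi_{1}\right\rangle$ as you wrote; this minus sign is exactly the one produced on the covariance side by the antiperiodicity of $g_{\lambda}$ when $\alpha_{1}-\alpha_{2}\in(0,\beta)$ (compare (\ref{eq fin2}) with (\ref{eq fin4}) in the paper's proof). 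Since you explicitly flagged the sign/antiperiodicity bookkeeping as the step to be checked, reinstating the $(-1)^{\pi}$ closes the point and your argument then coincides with the paper's.
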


\begin{proof}
Fix all the parameters of the lemma. Note that
\begin{equation*}
\left\langle \varphi _{2},\left( C_{H}\kappa (\hat{H})\hat{\varphi}%
_{1}\right) \left( \alpha _{1}-\alpha _{2}\right) \right\rangle _{\mathfrak{h%
}}=\left\langle \kappa (H)^{1/2}\varphi _{2},\left( C_{H}\kappa (\hat{H}%
)^{1/2}\hat{\varphi}_{1}\right) \left( \alpha _{1}-\alpha _{2}\right)
\right\rangle _{\mathfrak{h}}\ .
\end{equation*}%
Therefore, we can assume without loss of generality that $\kappa =\mathbf{1}%
_{\mathbb{R}}$. We deduce from Equations (\ref{Integral direct C}) and (\ref%
{jean sans bras0}) that, for any
\begin{eqnarray}
&&\left\langle \varphi _{2},\left( C_{H}\hat{\varphi}_{1}\right) \left(
\alpha _{1}-\alpha _{2}\right) \right\rangle _{\mathfrak{h}}
\label{eq bonheur1} \\
&=&\left\langle \psi _{2},\left( \int_{\Omega _{H}}^{\oplus }g_{\lambda
_{H}\left( \mathfrak{a}\right) }\ast \hat{\psi}_{1}\left( \mathfrak{a}%
\right) \ \mu _{H}\left( \mathrm{d}\mathfrak{a}\right) \right) \left( \alpha
_{1}-\alpha _{2}\right) \right\rangle _{L^{2}(\Omega _{H};\mathbb{C})}
\notag
\end{eqnarray}%
with
\begin{equation*}
\psi _{1,2}\doteq U_{H}\varphi _{1,2}\qquad \text{and}\qquad \hat{\psi}%
_{1,2}\doteq \hat{U}_{H}\hat{\varphi}_{1,2}\in \ell _{\mathrm{ap}}^{2}(%
\mathbb{T}_{n};L^{2}(\Omega _{H}))\ .
\end{equation*}

In the right--hand side of (\ref{eq bonheur1}) observe that $\hat{\psi}_{1}$
is seen as an element of $L^{2}(\Omega _{H};\ell _{\mathrm{ap}}^{2}(\mathbb{T%
}_{n};\mathbb{C}))$, see (\ref{space cool}). By (\ref{define deltabis}),
observe that
\begin{equation*}
\lbrack \hat{\psi}_{1}(\mathfrak{a})](\alpha )=\delta _{\mathrm{ap}}(\alpha
)\cdot \left( U_{H}\varphi _{1}\right) (\mathfrak{a})\ ,\qquad \mathfrak{a}%
\in \Omega _{H},\ \alpha \in \mathbb{T}_{n}\ ,
\end{equation*}%
which, combined with Equations (\ref{convolution unit}) and (\ref{eq
bonheur1}), yields
\begin{eqnarray}
&&\left\langle \varphi _{2},\left( C_{H}\hat{\varphi}_{1}\right) \left(
\alpha _{1}-\alpha _{2}\right) \right\rangle _{\mathfrak{h}}
\label{eq cool 00} \\
&=&\left\langle \psi _{2},\left( \int_{\Omega _{H}}^{\oplus }g_{\lambda
_{H}\left( \mathfrak{a}\right) }\left( \alpha _{1}-\alpha _{2}\right) \psi
_{1}\left( \mathfrak{a}\right) \mu _{H}\left( \mathrm{d}\mathfrak{a}\right)
\right) \right\rangle _{L^{2}(\Omega _{H};\mathbb{C})}.  \notag
\end{eqnarray}%
Therefore, by using the explicit expressions (\ref{Hn0Hn0}), (\ref{g
explic1bis})--(\ref{g explic2}) and (\ref{En}), we deduce from (\ref%
{spectral theorem}) and (\ref{eq cool 00}) the equality
\begin{equation}
\left\langle \varphi _{2},\left( C_{H}\hat{\varphi}_{1}\right) \left( \alpha
_{1}-\alpha _{2}\right) \right\rangle _{\mathfrak{h}}=\lim_{\nu \rightarrow
\infty }\left\langle \mathrm{E}^{\beta ^{-1}n\alpha _{2}+1}\varphi _{2},%
\frac{\mathrm{e}^{(\alpha _{2}+n^{-1}\beta -\alpha _{1})H_{\nu }}}{1+\mathrm{%
e}^{\beta H_{\nu }}}\mathrm{E}^{-\beta ^{-1}n\alpha _{1}}\varphi
_{1}\right\rangle _{\mathfrak{h}}  \label{eq fin1}
\end{equation}%
for any $\alpha _{1}\leq \alpha _{2}$ while, for any $\alpha _{1}>\alpha
_{2} $,
\begin{eqnarray}
&&\left\langle \varphi _{2},\left( C_{H}\hat{\varphi}_{1}\right) \left(
\alpha _{1}-\alpha _{2}\right) \right\rangle _{\mathfrak{h}}  \label{eq fin2}
\\
&=&-\lim_{\nu \rightarrow \infty }\left\langle \mathrm{E}^{\beta
^{-1}n\alpha _{2}+1}\varphi _{2},\frac{\mathrm{e}^{(\beta -(\alpha
_{1}-\alpha _{2}-n^{-1}\beta ))H_{\nu }}}{1+\mathrm{e}^{\beta H_{\nu }}}%
\mathrm{E}^{-\beta ^{-1}n\alpha _{1}}\varphi _{1}\right\rangle _{\mathfrak{h}%
}\ ,  \notag
\end{eqnarray}%
using $n\in 2\mathbb{N}$. On the other hand, if $\alpha _{1}\leq \alpha _{2}$
then $\pi _{\alpha _{1},\alpha _{2}}=\mathbf{1}_{\{1,2\}}$ and we infer from
Equations (\ref{scalar product}), (\ref{ej fract}), (\ref{quasi free symbol}%
), (\ref{symbol}) and (\ref{Opi1}) that
\begin{eqnarray}
&&\rho _{S_{\nu }}%
\Big(%
\mathbb{O}_{\pi _{\alpha _{1},\alpha _{2}}}%
\Big(%
a^{+}\left( (\mathrm{e}^{-\alpha _{1}H_{\nu }}\mathrm{E}^{-\beta
^{-1}n\alpha _{1}}\varphi _{1})\otimes \mathfrak{e}_{j_{1}}\right) ,  \notag
\\
&&\qquad \qquad \qquad \qquad a\left( (\mathrm{e}^{(\alpha _{2}+n^{-1}\beta
)H_{\nu }}\mathrm{E}^{\beta ^{-1}n\alpha _{2}+1}\varphi _{2})\otimes
\mathfrak{e}_{j_{2}}\right)
\Big)%
\Big)
\notag \\
&=&\mathfrak{M}_{j_{1},j_{2}}\left\langle \mathrm{E}^{\beta ^{-1}n\alpha
_{2}+1}\varphi _{2},\frac{\mathrm{e}^{(\alpha _{2}+n^{-1}\beta -\alpha
_{1})H_{\nu }}}{1+\mathrm{e}^{\beta H_{\nu }}}\mathrm{E}^{-\beta
^{-1}n\alpha _{1}}\varphi _{1}\right\rangle _{\mathfrak{h}}  \label{eq fin3}
\end{eqnarray}%
and the assertion holds true when $\alpha _{1}\leq \alpha _{2}$. If $\alpha
_{1}>\alpha _{2}$ then
\begin{multline*}
\rho _{S_{\nu }}%
\Big(%
\mathbb{O}_{\pi _{\alpha _{1},\alpha _{2}}}%
\Big(%
a^{+}\left( (\mathrm{e}^{-\alpha _{1}H_{\nu }}\mathrm{E}^{-\beta
^{-1}n\alpha _{1}}\varphi _{1})\otimes \mathfrak{e}_{j_{1}}\right) , \\
a\left( (\mathrm{e}^{(\alpha _{2}+n^{-1}\beta )H_{\nu }}\mathrm{E}^{\beta
^{-1}n\alpha _{2}+1}\varphi _{2})\otimes \mathfrak{e}_{j_{2}}\right)
\Big)%
\Big)
\\
=-\rho _{S_{\nu }}%
\Big(%
a\left( (\mathrm{e}^{(\alpha _{2}+n^{-1}\beta )H_{\nu }}\mathrm{E}^{\beta
^{-1}n\alpha _{2}+1}\varphi _{2})\otimes \mathfrak{e}_{j_{2}}\right) \\
a^{+}\left( (\mathrm{e}^{-\alpha _{1}H_{\nu }}\mathrm{E}^{-\beta
^{-1}n\alpha _{1}}\varphi _{1})\otimes \mathfrak{e}_{j_{1}}\right)
\Big)%
\ ,
\end{multline*}%
which, combined with (\ref{CAR AA*}), implies that%
\begin{multline*}
\rho _{S_{\nu }}%
\Big(%
\mathbb{O}_{\pi _{\alpha _{1},\alpha _{2}}}%
\Big(%
a^{+}\left( (\mathrm{e}^{-\alpha _{1}H_{\nu }}\mathrm{E}^{-\beta
^{-1}n\alpha _{1}}\varphi _{1})\otimes \mathfrak{e}_{j_{1}}\right) , \\
a\left( (\mathrm{e}^{(\alpha _{2}+n^{-1}\beta )H_{\nu }}\mathrm{E}^{\beta
^{-1}n\alpha _{2}+1}\varphi _{2})\otimes \mathfrak{e}_{j_{2}}\right)
\Big)%
\Big)
\\
=\rho _{S_{\nu }}%
\Big(%
a^{+}\left( (\mathrm{e}^{-\alpha _{1}H_{\nu }}\mathrm{E}^{-\beta
^{-1}n\alpha _{1}}\varphi _{1})\otimes \mathfrak{e}_{j_{1}}\right) \\
\qquad \qquad \qquad \qquad \qquad \qquad \qquad \qquad a\left( (\mathrm{e}%
^{(\alpha _{2}+n^{-1}\beta )H_{\nu }}\mathrm{E}^{\beta ^{-1}n\alpha
_{2}+1}\varphi _{2})\otimes \mathfrak{e}_{j_{2}}\right)
\Big)
\\
-\left\langle (\mathrm{e}^{(\alpha _{2}+n^{-1}\beta )H_{\nu }}\mathrm{E}%
^{\beta ^{-1}n\alpha _{2}+1}\varphi _{2})\otimes \mathfrak{e}_{j_{2}},(%
\mathrm{e}^{-\alpha _{1}H_{\nu }}\mathrm{E}^{-\beta ^{-1}n\alpha
_{1}}\varphi _{1})\otimes \mathfrak{e}_{j_{1}}\right\rangle _{\mathfrak{h}%
\otimes \mathbb{M}}\ .
\end{multline*}%
Using again (\ref{scalar product}), (\ref{ej fract}) and (\ref{symbol}), we
thus arrive from the last equality at
\begin{eqnarray}
&&\rho _{S_{\nu }}%
\Big(%
\mathbb{O}_{\pi _{\alpha _{1},\alpha _{2}}}%
\Big(%
a^{+}\left( (\mathrm{e}^{-\alpha _{1}H_{\nu }}\mathrm{E}^{-\beta
^{-1}n\alpha _{1}}\varphi _{1})\otimes \mathfrak{e}_{j_{1}}\right) ,  \notag
\\
&&\qquad \qquad \qquad \qquad a\left( (\mathrm{e}^{(\alpha _{2}+n^{-1}\beta
)H_{\nu }}\mathrm{E}^{\beta ^{-1}n\alpha _{2}+1}\varphi _{2})\otimes
\mathfrak{e}_{j_{2}}\right)
\Big)%
\Big)
\notag \\
&=&-\mathfrak{M}_{j_{1},j_{2}}\left\langle \mathrm{E}^{\beta ^{-1}n\alpha
_{2}+1}\varphi _{2},\frac{\mathrm{e}^{(\beta -(\alpha _{1}-\alpha
_{2}-n^{-1}\beta ))H_{\nu }}}{1+\mathrm{e}^{\beta H_{\nu }}}\mathrm{E}%
^{-\beta ^{-1}n\alpha _{1}}\varphi _{1}\right\rangle _{\mathfrak{h}}\ .
\label{eq fin4}
\end{eqnarray}%
By combining (\ref{eq fin1}) and (\ref{eq fin2}) with (\ref{eq fin3}) and (%
\ref{eq fin4}), we arrive at the assertion with $\kappa =\mathbf{1}_{\mathbb{%
R}}$.
\end{proof}

\begin{koro}[Determinants of the covariance and quasi--free states]
\label{Corollary chiot1}\mbox{
}\newline
Let $\mathfrak{h}$ be any separable Hilbert space. Fix $\beta \in \mathbb{R}%
^{+}$, a self--adjoint operator $H=H^{\ast }$ acting on $\mathfrak{h}$, and $%
n\in 2\mathbb{N}$. Then, for each bounded measurable positive function $%
\kappa $ from $\mathbb{R}$ to $\mathbb{R}_{0}^{+}$, all $m,N\in \mathbb{N}$,
non--vanishing $\mathfrak{M}\in \mathrm{Mat}\left( m,\mathbb{R}\right) $
with $\mathfrak{M}\geq 0$\ and
\begin{equation*}
\{(\alpha _{q},\varphi _{q},j_{q})\}_{q=1}^{2N}\subset \mathbb{T}_{n}\cap
\lbrack 0,\beta )\times \mathfrak{D}\times \{1,\ldots ,m\}\ ,
\end{equation*}%
the following identity holds true:%
\begin{align}
& \mathrm{det}\left[ \mathfrak{M}_{j_{k},j_{N+l}}\left\langle \varphi
_{N+l},\left( C_{H}\kappa (\hat{H})\hat{\varphi}_{k}\right) \left( \alpha
_{k}-\alpha _{N+l}\right) \right\rangle _{\mathfrak{h}}\right] _{k,l=1}^{N}
\notag \\
& =\lim_{\nu \rightarrow \infty }\rho _{S_{\nu }}%
\Big(%
\mathbb{O}_{\pi }%
\Big(%
a^{+}\left( (\mathrm{e}^{-\tilde{\alpha}_{1}H_{\nu }}\mathrm{E}^{-\beta
^{-1}n\tilde{\alpha}_{1}}\kappa (H)^{1/2}\varphi _{1})\otimes \mathfrak{e}%
_{j_{1}}\right) ,\ldots ,  \label{eq horloge} \\
& \qquad \ \qquad \ \qquad \ a^{+}\left( (\mathrm{e}^{-\tilde{\alpha}%
_{N}H_{\nu }}\mathrm{E}^{-\beta ^{-1}n\tilde{\alpha}_{N}}\kappa
(H)^{1/2}\varphi _{N})\otimes \mathfrak{e}_{j_{N}}\right) ,  \notag \\
& \qquad \ \qquad \ \qquad \ \qquad \ a\left( (\mathrm{e}^{\tilde{\alpha}%
_{2N}H_{\nu }}\mathrm{E}^{\beta ^{-1}n\tilde{\alpha}_{2N}}\kappa
(H)^{1/2}\varphi _{2N})\otimes \mathfrak{e}_{j_{2N}}\right) ,  \notag \\
& \qquad \qquad \qquad \qquad \quad \ \ldots ,a\left( (\mathrm{e}^{\tilde{%
\alpha}_{N+1}H_{\nu }}\mathrm{E}^{\beta ^{-1}n\tilde{\alpha}_{N+1}}\kappa
(H)^{1/2}\varphi _{N+1})\otimes \mathfrak{e}_{j_{N+1}}\right)
\Big)%
\Big)
\notag
\end{align}%
for any permutation $\pi $ of $2N$ elements such that
\begin{equation}
\tilde{\alpha}_{\pi ^{-1}(q)}-\tilde{\alpha}_{\pi ^{-1}(q-1)}\geq 0\ ,\text{
\ }\alpha _{\pi ^{-1}(q)}-\alpha _{\pi ^{-1}(q-1)}\geq 0\ ,\quad q\in
\left\{ 2,\ldots ,2N\right\} \ ,  \label{ordering}
\end{equation}%
where $\tilde{\alpha}_{q}\doteq \alpha _{q}$ for $q\in \{1,\ldots ,N\}$ and $%
\tilde{\alpha}_{q}\doteq \alpha _{q}+n^{-1}\beta $ for $q\in \{N+1,\ldots
,2N\}$.
\end{koro}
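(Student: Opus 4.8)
The plan is to obtain (\ref{eq horloge}) by composing Lemmas~\ref{lemma exp copy(2)} and~\ref{lemma exp copy(1)}. First I would fix all parameters of the corollary and, for $q\in\{1,\ldots,2N\}$, introduce the vectors $\Psi_q\doteq(\mathrm{e}^{-\tilde\alpha_q H_\nu}\mathrm{E}^{-\beta^{-1}n\tilde\alpha_q}\kappa(H)^{1/2}\varphi_q)\otimes\mathfrak{e}_{j_q}\in\mathfrak{h}\otimes\mathbb{M}$ for $q\leq N$ and $\Psi_q\doteq(\mathrm{e}^{\tilde\alpha_q H_\nu}\mathrm{E}^{\beta^{-1}n\tilde\alpha_q}\kappa(H)^{1/2}\varphi_q)\otimes\mathfrak{e}_{j_q}$ for $q>N$. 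These are well defined because $\mathrm{E}$ (see (\ref{En})) and $\kappa(H)^{1/2}$ are bounded, commute with $H_\nu=\digamma_\nu(H)$, and thus preserve the dense space $\mathfrak{D}$ of (\ref{fract domain}) on which the exponentials $\mathrm{e}^{\pm\tilde\alpha_q H_\nu}$ act; moreover $\beta^{-1}n\tilde\alpha_q\in\mathbb{Z}$ since $\alpha_q\beta^{-1}n\in\mathbb{Z}$ for $\alpha_q\in\mathbb{T}_n$, so the powers of $\mathrm{E}$ are genuine integer powers. With this notation, the right--hand side of (\ref{eq horloge}) is exactly $\lim_{\nu\to\infty}\rho_{S_\nu}\bigl(\mathbb{O}_\pi(a^+(\Psi_1),\ldots,a^+(\Psi_N),a(\Psi_{2N}),\ldots,a(\Psi_{N+1}))\bigr)$. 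Since $\rho_{S_\nu}$ is a quasi--free state (with symbol $S_\nu$, cf.\ (\ref{symbol})), Lemma~\ref{lemma exp copy(2)} applies for each $\nu\in\mathbb{R}^+$ and rewrites this monomial expectation as $\det\bigl[\rho_{S_\nu}(\mathbb{O}_{\pi_{k,N+l}}(a^+(\Psi_k),a(\Psi_{N+l})))\bigr]_{k,l=1}^N$.

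The crucial point is that $\pi_{k,N+l}=\pi_{\alpha_k,\alpha_{N+l}}$ for all $k,l\in\{1,\ldots,N\}$, where $\pi_{k,N+l}$ is given by (\ref{Opi2}) and $\pi_{\alpha_k,\alpha_{N+l}}$ is the permutation of $\{1,2\}$ defined just before Lemma~\ref{lemma exp copy(1)}. Indeed, by (\ref{ordering}) the maps $q\mapsto\tilde\alpha_{\pi^{-1}(q)}$ and $q\mapsto\alpha_{\pi^{-1}(q)}$ are non--decreasing, so $\pi(k)<\pi(N+l)$ entails $\alpha_k\leq\alpha_{N+l}$, whereas $\pi(N+l)<\pi(k)$ entails $\tilde\alpha_{N+l}\leq\tilde\alpha_k$, i.e.\ $\alpha_{N+l}+n^{-1}\beta\leq\alpha_k$ and in particular $\alpha_k>\alpha_{N+l}$. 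Since $\pi_{k,N+l}$ is the identity precisely when $\pi(k)<\pi(N+l)$, and $\pi_{\alpha_k,\alpha_{N+l}}$ is the identity precisely when $\alpha_k\leq\alpha_{N+l}$, the two permutations agree in both cases. Next I would match operator arguments: for $k\leq N$, $\tilde\alpha_k=\alpha_k$, so $\Psi_k$ is exactly the argument of the creation operator in Lemma~\ref{lemma exp copy(1)} with $(\alpha_1,\varphi_1,j_1)=(\alpha_k,\varphi_k,j_k)$; and since $\tilde\alpha_{N+l}=\alpha_{N+l}+n^{-1}\beta$, one has $\mathrm{e}^{\tilde\alpha_{N+l}H_\nu}=\mathrm{e}^{(\alpha_{N+l}+n^{-1}\beta)H_\nu}$ and $\mathrm{E}^{\beta^{-1}n\tilde\alpha_{N+l}}=\mathrm{E}^{\beta^{-1}n\alpha_{N+l}+1}$, so $\Psi_{N+l}$ is exactly the argument of the annihilation operator in Lemma~\ref{lemma exp copy(1)} with $(\alpha_2,\varphi_2,j_2)=(\alpha_{N+l},\varphi_{N+l},j_{N+l})$. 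Hence Lemma~\ref{lemma exp copy(1)} gives $\lim_{\nu\to\infty}\rho_{S_\nu}(\mathbb{O}_{\pi_{k,N+l}}(a^+(\Psi_k),a(\Psi_{N+l})))=\mathfrak{M}_{j_k,j_{N+l}}\langle\varphi_{N+l},(C_H\kappa(\hat H)\hat\varphi_k)(\alpha_k-\alpha_{N+l})\rangle_{\mathfrak{h}}$ for every $k,l$.

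To finish, note that the matrix has the fixed size $N\times N$ and that the determinant is a polynomial, hence continuous, function of its entries; therefore the limit $\nu\to\infty$ commutes with $\det$, and combining the identities above yields (\ref{eq horloge}). I do not anticipate a genuine obstacle: the argument is a careful composition of the two lemmas, the only delicate points being the permutation identity $\pi_{k,N+l}=\pi_{\alpha_k,\alpha_{N+l}}$ and the bookkeeping of the factors $\mathrm{E}$ together with the shift $\tilde\alpha_{N+l}=\alpha_{N+l}+n^{-1}\beta$ needed to bring the $\Psi_q$ into the exact form required by Lemma~\ref{lemma exp copy(1)}.
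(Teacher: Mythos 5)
Your proposal is correct and follows essentially the same route as the paper's own proof: it combines Lemma \ref{lemma exp copy(2)} (applied to the vectors $\Psi_q$ at fixed $\nu$) with Lemma \ref{lemma exp copy(1)}, the key point in both being that the ordering condition (\ref{ordering}), via the shift $\tilde{\alpha}_{N+l}=\alpha_{N+l}+n^{-1}\beta$, forces $\pi_{k,N+l}=\pi_{\alpha_k,\alpha_{N+l}}$ for all $k,l$. Your explicit remarks on the well--definedness of the $\Psi_q$ and on passing the limit $\nu\rightarrow\infty$ through the determinant by continuity merely spell out steps the paper leaves implicit.
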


\begin{proof}
Fix all the parameters of the corollary. Take any permutation $\pi $ of $2N$
elements such that%
\begin{equation}
\pi _{\alpha _{k},\alpha _{N+l}}=\pi _{k,N+l}\ ,\qquad k,l\in \{1,\ldots
,N\}\ .  \label{orderingordering}
\end{equation}%
See, respectively, (iv) before Lemma \ref{lemma exp copy(1)} and Equation (%
\ref{Opi2}) for the definitions of the permutations $\pi _{\alpha
_{k},\alpha _{N+l}}$ and $\pi _{k,N+l}$ of two elements. Then, (\ref{eq
horloge}) follows from Lemmata \ref{lemma exp copy(2)} and \ref{lemma exp
copy(1)}. To conclude the proof observe that a permutation $\pi $ of $2N$
elements satisfying (\ref{ordering}) exists and also satisfies (\ref%
{orderingordering}), keeping in mind Equation (\ref{definition torus}).
\end{proof}

\subsection{Correlation Functions and Tomita--Takesaki Modular Theory\label%
{Sectino chiot}}

\noindent \underline{(i):} As above, fix $\beta \in \mathbb{R}^{+}$, a
self--adjoint operator $H=H^{\ast }$ acting on $\mathfrak{h}$, $n\in 2%
\mathbb{N}$, $\nu \in \mathbb{R}^{+}$, and a non--vanishing positive real
matrix $\mathfrak{M}\in \mathrm{Mat}\left( m,\mathbb{R}\right) $ with $m\in
\mathbb{N}$. Let $\tau \equiv \{\tau _{t}\}_{t\in {\mathbb{R}}}$ be the
unique $C_{0}$--group (that is, strongly continuous group) of auto%
\-%
morphisms on the $C^{\ast }$--algebra $\mathrm{CAR}(\mathfrak{h}\otimes
\mathbb{M})$ satisfying%
\begin{equation}
\tau _{t}\left( a\left( \varphi \otimes g\right) \right) =a(\mathrm{e}%
^{itH_{\nu }\otimes \mathbf{1}_{\mathfrak{h}\otimes \mathbb{M}}}\varphi
\otimes g)=a\left( (\mathrm{e}^{itH_{\nu }}\varphi )\otimes g\right) \
,\qquad \varphi \in \mathfrak{h},\ g\in \mathbb{M}\ .  \label{definition tho}
\end{equation}%
See (\ref{Hn0Hn0}). It is well---known that the quasi--free state $\rho
_{S_{\nu }}$, which is defined from the symbol $S_{\nu }$ (\ref{symbol}), is
the unique $(\tau ,\beta )$--KMS state on $\mathrm{CAR}(\mathfrak{h}\otimes
\mathbb{M})$. \medskip

\noindent \underline{(ii):} Recall that $(\mathfrak{H}_{\nu },\varkappa
_{\nu },\eta _{\nu })$ is a cyclic\ representation of $\rho _{S_{\nu }}$
(Section \ref{Modular}). The weak closure of the $C^{\ast }$--algebra $%
\mathrm{CAR}(\mathfrak{h}\otimes \mathbb{M})$ is the von Neumann algebra $%
\mathcal{X}_{\nu }$ (\ref{sdfjkl}). The state $\rho _{S_{\nu }}\circ
\varkappa _{\nu }$ on $\varkappa _{\nu }(\mathrm{CAR}(\mathfrak{h}\otimes
\mathbb{M}))$ extends uniquely to a normal state on the von Neumann algebra $%
\mathcal{X}_{\nu }$ and the $C_{0}$--group $\{\tau _{t}\circ \varkappa _{\nu
}\}_{t\in \mathbb{R}}$ also uniquely extends to a $\sigma $--weakly
continuous $\ast $--automorphism group on $\mathcal{X}_{\nu }$. Both
extensions are again denoted by $\rho _{S_{\nu }}$ and $\{\tau _{t}\}_{t\in
\mathbb{R}}$, respectively. By \cite[Corollary 5.3.4]{BratteliRobinson}, $%
\rho _{S_{\nu }}$ is again a $(\tau ,\beta )$--KMS state on $\mathcal{X}%
_{\nu }$. \medskip

\noindent \underline{(iii):} By \cite[Corollary 5.3.9]{BratteliRobinson},
the cyclic vector $\eta _{\nu }$\ is separating for $\mathcal{X}_{\nu }$,
i.e., $A\eta _{\nu }=0$ implies $A=0$ for all $A\in \mathcal{X}_{\nu }$.
Denote by $\Delta _{\nu }$ the (possibly unbounded) Tomita--Takesaki modular
operator of the pair $\left( \mathcal{X}_{\nu },\eta _{\nu }\right) $. The ($%
\beta $--rescaled) modular group is the $\sigma $--weakly continuous
one-parameter group $\sigma \equiv \{\sigma _{t}\}_{t\in {\mathbb{R}}}$
defined by%
\begin{equation}
\sigma _{t}\left( A\right) \doteq \Delta _{\nu }^{-it\beta ^{-1}}A\Delta
_{\nu }^{it\beta ^{-1}}\qquad A\in \mathcal{X}_{\nu }\ .
\label{definition thobis}
\end{equation}%
(If $\beta =-1$ then $\sigma $ is the well--known modular automorphism group
associated with the pair $\left( \mathcal{X}_{\nu },\eta _{\nu }\right) $,
see \cite[Definition 2.5.15]{BratteliRobinsonI}.) By Takesaki's theorem \cite%
[Theorem 5.3.10]{BratteliRobinson}, we deduce that $\sigma =\tau $. In
particular, using (\ref{definition tho}) we arrive at the equality
\begin{equation}
\varkappa _{\nu }\left( a\left( (\mathrm{e}^{itH_{\nu }}\varphi )\otimes
g\right) \right) =\Delta _{\nu }^{-it\beta ^{-1}}\varkappa _{\nu }\left(
a\left( \varphi \otimes g\right) \right) \Delta _{\nu }^{it\beta ^{-1}}\
,\qquad \varphi \in \mathfrak{h},\ g\in \mathbb{M}\ .
\label{assertion chiot bresiliennes}
\end{equation}%
\medskip \noindent \underline{(iv):} Recall that $\mathfrak{D}\subseteq
\mathfrak{h}$ (\ref{fract domain}) is a dense subspace of entire analytic
vectors for $H_{\nu }$, while for any $N\in \mathbb{N}$ and $\zeta \in
\mathbb{R}^{+}$, $\mathfrak{T}_{N}^{(\zeta )}$ is the tube defined by (\ref%
{tube}). For any $\varphi \in \mathfrak{D}$ and $g\in \mathbb{M}$, the maps
\begin{equation}
z\mapsto a^{+}\left( (\mathrm{e}^{-zH_{\nu }}\varphi )\otimes g\right)
\qquad \text{and}\qquad z\mapsto a\left( (\mathrm{e}^{\bar{z}H_{\nu
}}\varphi )\otimes g\right)  \label{map tricial}
\end{equation}%
from $\mathbb{C}$ to the $C^{\ast }$--algebra $\mathrm{CAR}(\mathfrak{h}%
\otimes \mathbb{M})$ are entire analytic functions. Fix $N\in \mathbb{N}$, $%
\varphi _{1},\ldots ,\varphi _{N}\in \mathfrak{D}$, $g_{1},\ldots ,g_{N}\in
\mathbb{M}$, and pick a family
\begin{equation}
\left\{ a^{\#}\left( \varphi _{q}\otimes g_{q}\right) \right\}
_{q=1}^{N}\subset \mathrm{CAR}\left( \mathfrak{h}\otimes \mathbb{M}\right) \
,  \label{family}
\end{equation}%
where the notation \textquotedblleft $a^{\#}$\textquotedblright\ stands for
either \textquotedblleft $a^{+}$\textquotedblright\ or \textquotedblleft $a$%
\textquotedblright . For any $\varphi \in \mathfrak{D}$, $g\in \mathbb{M}$
and $z\in \mathbb{C}$, we also use the convention
\begin{equation*}
a^{\#}\left( (\mathrm{e}^{z^{\#}H_{\nu }}\varphi )\otimes g\right) =\left\{
\begin{array}{lll}
a^{+}\left( (\mathrm{e}^{-zH_{\nu }}\varphi )\otimes g\right) & \text{when}
& a^{\#}=a^{+}, \\
a\left( (\mathrm{e}^{\bar{z}H_{\nu }}\varphi )\otimes g\right) & \text{when}
& a^{\#}=a\ ,%
\end{array}%
\right.
\end{equation*}%
with%
\begin{equation*}
(z_{1}+z_{2})^{\#}\doteq z_{1}^{\#}+z_{2}^{\#}\ ,\qquad z_{1},z_{2}\in
\mathbb{C}\ .
\end{equation*}%
Then, for any fixed integer $p\in \{1,\ldots ,N\}$, the map $\Upsilon $ from
$\mathbb{C}^{N+1}$ to $\mathbb{C}$ defined by%
\begin{align}
& \Upsilon (z_{1},\ldots ,z_{p},\tilde{z}_{p},z_{p+1},\ldots ,z_{N})
\label{gamma} \\
& \doteq \rho
\Big(%
a^{\#}%
\big(%
(\mathrm{e}^{z_{1}^{\#}H_{\nu }}\varphi _{1})\otimes g_{1}%
\big)%
\cdots a^{\#}\left( (\mathrm{e}^{(z_{1}^{\#}+\cdots +z_{p-1}^{\#})H_{\nu
}}\varphi _{p-1})\otimes g_{p-1}\right)  \notag \\
& \qquad \quad a^{\#}%
\big(%
(\mathrm{e}^{(z_{1}^{\#}+\cdots +z_{p}^{\#}+\tilde{z}_{p}^{\#})H_{\nu
}}\varphi _{p})\otimes g_{p}%
\big)%
a^{\#}%
\big(%
(\mathrm{e}^{(z_{1}^{\#}+\cdots +z_{p}^{\#}+\tilde{z}_{p}^{\#}+z_{p+1}^{%
\#})H_{\nu }}\varphi _{p+1})\otimes g_{p+1}%
\big)
\notag \\
& \qquad \qquad \qquad \qquad \qquad \qquad \quad \cdots a^{\#}%
\big(%
(\mathrm{e}^{(z_{1}^{\#}+\cdots +z_{p}^{\#}+\tilde{z}_{p}^{\#}+z_{p+1}^{\#}+%
\cdots +z_{N}^{\#})H_{\nu }}\varphi _{N})\otimes g_{N}%
\big)%
\Big)
\notag
\end{align}%
is an entire analytic function. \medskip

\noindent \underline{(v):} For $\Psi _{1},\ldots ,\Psi _{N}\in \mathfrak{h}%
\otimes \mathbb{M}$, define a family $\{a^{\#}\left( \Psi _{q}\right)
\}_{q=1}^{N}$ of elements of the $C^{\ast }$--algebra $\mathrm{CAR}(%
\mathfrak{h}\otimes \mathbb{M})$, where, as before, \textquotedblleft $%
a^{\#}=a^{+}$\textquotedblright\ or \textquotedblleft $a^{\#}=a$%
\textquotedblright . Then, by applying \cite[Lemma A.1]{Araki-Masuda}, we
obtain the following assertions:

\begin{itemize}
\item[(A)] The (cyclic and separating) vector $\eta _{\nu }$ belongs to the
domain of definition of the possibly unbounded operator
\begin{equation*}
\Delta _{\nu }^{z_{1}\beta ^{-1}}\varkappa _{\nu }\left( a^{\#}\left( \Psi
_{1}\right) \right) \cdots \Delta _{\nu }^{z_{N}\beta ^{-1}}\varkappa _{\nu
}\left( a^{\#}\left( \Psi _{N}\right) \right)
\end{equation*}%
for any $(z_{1},\ldots ,z_{N})\in \mathfrak{T}_{N}^{(\beta /2)}$ with%
\begin{equation}
\left\Vert \Delta _{\nu }^{z_{1}\beta ^{-1}}\varkappa _{\nu }\left(
a^{\#}\left( \Psi _{1}\right) \right) \cdots \Delta _{\nu }^{z_{N}\beta
^{-1}}\varkappa _{\nu }\left( a^{\#}\left( \Psi _{N}\right) \right) \eta
_{\nu }\right\Vert _{\mathfrak{H}_{\nu }}\leq \prod_{q=1}^{N}\left\Vert \Psi
_{q}\right\Vert _{\mathfrak{h}\otimes \mathbb{M}}\ .
\label{property araki chiot 1}
\end{equation}%
This inequality is a special case of \cite[(A.2)]{Araki-Masuda}, which is
intimately related to H\"{o}lder inequalities for non--commutative $L^{p}$%
--spaces.

\item[(B)] The map from $\mathfrak{T}_{N}^{(\beta /2)}$ to $\mathfrak{H}%
_{\nu }$\ defined by
\begin{equation*}
(z_{1},\ldots ,z_{N})\mapsto \Delta _{\nu }^{z_{1}\beta ^{-1}}\varkappa
_{\nu }\left( a^{\#}\left( \Psi _{1}\right) \right) \cdots \Delta _{\nu
}^{z_{N}\beta ^{-1}}\varkappa _{\nu }\left( a^{\#}\left( \Psi _{N}\right)
\right) \eta _{\nu }
\end{equation*}%
is norm continuous on the whole tube $\mathfrak{T}_{N}^{(\beta /2)}$ and
analytic on its interior.
\end{itemize}

\noindent Using the notation
\begin{equation*}
x_{q}\doteq \varkappa _{\nu }\left( a^{\#}(\varphi _{q}\otimes g_{q})\right)
\ ,\qquad q\in \left\{ 1,\ldots ,N\right\} \ ,
\end{equation*}%
we consider now the map $\Theta $ from $\mathfrak{T}_{p}^{(\beta /2)}\times
\mathfrak{T}_{N-p+1}^{(\beta /2)}$ to $\mathbb{R}$ defined by%
\begin{multline*}
\Theta
\big(%
(z_{1},\ldots ,z_{p}),(\tilde{z}_{p},z_{p+1},\ldots ,z_{N})%
\big)
\\
\doteq \left\langle \Delta _{\nu }^{\bar{z}_{p}\beta ^{-1}}x_{p-1}^{\ast
}\Delta _{\nu }^{\bar{z}_{p-1}\beta ^{-1}}\cdots x_{2}^{\ast }\Delta _{\nu
}^{\bar{z}_{2}\beta ^{-1}}x_{1}^{\ast }\eta _{\nu }\ ,\right. \\
\left. \Delta _{\nu }^{\tilde{z}_{p}\beta ^{-1}}x_{p}\Delta _{\nu
}^{z_{p+1}\beta ^{-1}}x_{p+1}\cdots \Delta _{\nu }^{z_{N}\beta
^{-1}}x_{N}\eta _{\nu }\right\rangle _{\mathfrak{H}_{\nu }}.
\end{multline*}%
Compare for instance with \cite[Lemma A]{Araki-Masuda}, which explains the
properties of $\Theta $. (Notice that \cite{Araki-Masuda} uses a different
convention for sesquilinear forms.) By (\ref{property araki chiot 1}), this
function is uniformly bounded for all $n\in 2${$\mathbb{N}$}. The same is
trivially true for the map $\Upsilon $ (\ref{gamma}) on%
\begin{equation*}
\mathfrak{T}_{p}^{(\beta /2)}\times \mathfrak{T}_{N-p+1}^{(\beta /2)}\subset
\mathbb{C}^{N+1}\ .
\end{equation*}%
Moreover, by using (\ref{assertion chiot bresiliennes}) we deduce that $%
\Upsilon $ and $\Theta $ are equal to each other on $i\mathbb{R}^{p}\times i%
\mathbb{R}^{N-p+1}$. For each fixed imaginary vector $(z_{1},\ldots
,z_{p})\in i\mathbb{R}^{p}$, the maps $\Upsilon $ and $\Theta $ are both
continuous as functions of $(\tilde{z}_{p},z_{p+1},\ldots ,z_{N})\in
\mathfrak{T}_{N-p+1}^{(\beta /2)}$ and analytic in the interior of $%
\mathfrak{T}_{N-p+1}^{(\beta /2)}$, by (B) \cite[Lemma A.1]{Araki-Masuda}.
Hence, from Hadamard's three line theorem (see, e.g., \cite[Appendix to IX.4]%
{ReedSimonII}), $\Upsilon $ and $\Theta $ are equal to each other for any
fixed imaginary vector $(z_{1},\ldots ,z_{p})\in i\mathbb{R}^{p}$ and all
complex vectors $(\tilde{z}_{p},z_{p+1},\ldots ,z_{N})\in \mathfrak{T}%
_{N-p+1}^{(\beta /2)}$. Applying this argument again at fixed $(\tilde{z}%
_{p},z_{p+1},\ldots ,z_{N})\in \mathfrak{T}_{N-p+1}^{(\beta /2)}$ for $%
\Upsilon $ and $\Theta $ viewed as functions of $(z_{1},\ldots ,z_{p})\in
\mathfrak{T}_{p}^{(\beta /2)}$, we conclude that $\Upsilon =\Theta $ on $%
\mathfrak{T}_{p}^{(\beta /2)}\times \mathfrak{T}_{N-p+1}^{(\beta /2)}$.

In particular, for $N\in \mathbb{N}$, any family (\ref{family}) of elements
of the $C^{\ast }$--algebra $\mathrm{CAR}(\mathfrak{h}\otimes \mathbb{M})$,
and all $\alpha _{1},\ldots ,\alpha _{N}\in \lbrack 0,\beta ]$ such that%
\begin{equation*}
\vartheta _{q}\doteq \beta ^{-1}(\alpha _{q}-\alpha _{q-1})\geq 0\ ,\qquad
q\in \left\{ 2,\ldots ,N\right\} \ ,
\end{equation*}%
the following equality holds true:
\begin{align}
& \rho
\Big(%
a^{\#}%
\big(%
(\mathrm{e}^{\alpha _{1}^{\#}H_{\nu }}\varphi _{1})\otimes g_{1}%
\big)%
\ldots a^{\#}%
\big(%
(\mathrm{e}^{\alpha _{N}^{\#}H_{\nu }}\varphi _{N})\otimes g_{N}%
\big)%
\Big)
\notag \\
& =\left\langle \Delta _{\nu }^{\frac{1}{2}-\beta ^{-1}\alpha
_{p-1}}x_{p-1}^{\ast }\Delta _{\nu }^{\vartheta _{p-1}}\cdots x_{2}^{\ast
}\Delta _{\nu }^{\vartheta _{2}}x_{1}^{\ast }\eta _{\nu }\ ,\right.
\label{chiot final1} \\
& \qquad \qquad \qquad \qquad \left. \Delta _{\nu }^{\beta ^{-1}\alpha _{p}-%
\frac{1}{2}}x_{p}\Delta _{\nu }^{\vartheta _{p+1}}x_{p+1}\cdots \Delta _{\nu
}^{\vartheta _{N}}x_{N}\eta _{\nu }\right\rangle _{\mathfrak{H}_{\nu }}
\notag
\end{align}%
with $p$ defined to be the smallest element of $\left\{ 1,\ldots ,N\right\} $
such that $\alpha _{p}\geq \beta /2$. \bigskip

\noindent \textit{Acknowledgments:} This research is supported by the\
FAPESP, the CNPq, the Basque Government through the grant IT641-13 and the
BERC 2014-2017 program and by the Spanish Ministry of Economy and
Competitiveness MINECO: BCAM Severo Ochoa accreditation SEV-2013-0323 and
MTM2014-53850. We are very grateful to the BCAM and its management, which
supported this project via the visiting researcher program. Finally, we
thank the referee for his thorough revision work, Zosza Lefevre for
linguistic hints and Christian J\"{a}kel for the nice lectures he gave in
2015 on non--commutative $L^{p}$--spaces.

\end{document}